\documentclass[a4paper,thm-restate]{lipics-v2021}
\usepackage[utf8]{inputenc}
\usepackage{babel}
\usepackage{amsmath}
\usepackage{graphicx}
\usepackage{verbatim}
\usepackage{varioref}
\usepackage{refstyle}
\usepackage{color}
\usepackage{pdfpages}
\usepackage{float}
\usepackage{amsthm}
\usepackage{enumitem}
\usepackage{amssymb}
\usepackage{thm-restate}

\newcommand{\eps}{\varepsilon}
\newcommand{\poly}{\operatorname{poly}}

\makeatletter
\def\@endtheorem{\endtrivlist}
\makeatother

\newtheorem{invariant}[theorem]{Invariant}
\newtheorem*{corollaryS}{Corollary}
\newtheorem*{theoremS}{Theorem}

\newtheorem*{lemmaS}{Lemma}

\usepackage{hyperref}
\usepackage{cleveref}

\usepackage{algorithm}
\usepackage[noend]{algpseudocode}

\hideLIPIcs
\nolinenumbers

\setcounter{secnumdepth}{0} 

\begin{CCSXML}
	<ccs2012>
	<concept>
	<concept_id>10003752.10003809.10003635.10010038</concept_id>
	<concept_desc>Theory of computation~Dynamic graph algorithms</concept_desc>
	<concept_significance>500</concept_significance>
	</concept>
	</ccs2012>
\end{CCSXML}

\ccsdesc[500]{Theory of computation~Dynamic graph algorithms}

\title{Fully-dynamic $\alpha + 2$ Arboricity Decompositions and Implicit Colouring}

\author{Aleksander B. G. Christiansen}{Technical University of Denmark, Lyngby, Denmark}{abgch@dtu.dk}{no orcID}{Partially supported by the VILLUM Foundation grant 37507 ``Efficient Recomputations for Changeful Problems''.}
\author{Eva Rotenberg}{Technical University of Denmark, Lyngby, Denmark}{erot@dtu.dk}{http://orcid.org/0000-0001-5853-7909}{Partially supported by Independent Research Fund Denmark grants 2020-2023 (9131-00044B) ``Dynamic Network Analysis'' and 2018-2021 (8021-00249B) ``AlgoGraph'', and the VILLUM Foundation grant 37507 ``Efficient Recomputations for Changeful Problems''.}

\authorrunning{Aleksander B. G. Christiansen and Eva Rotenberg} 
\Copyright{Aleksander B. G. Christiansen and Eva Rotenberg}

\keywords{Dynamic graphs, bounded arboricity, graph colouring, data structures}

\category{}

\date{1. October 2021}

\begin{document}

\maketitle
\begin{abstract}
	In the implicit dynamic colouring problem, the task is to maintain a representation of a proper colouring as a dynamic graph is subject to insertions and deletions of edges, while facilitating interspersed queries to the colours of vertices. The goal is to use few colours, while still efficiently handling edge-updates and responding to colour-queries. 
For an $n$-vertex dynamic graph of arboricity 
$\alpha$, we present an algorithm that maintains an implicit vertex colouring with $4\cdot 2^{\alpha}$ colours, in amortised poly-$\log n$ update time, and with $O(\alpha \log n)$ worst-case query time. 
The previous best implicit dynamic colouring algorithm uses $2^{40\alpha}$ colours, and has a more efficient update time of $O(\log^3 n)$ and the same query time of $O(\alpha \log n)$~\cite{henzinger2020explicit}. 

For graphs undergoing arboricity $\alpha$ preserving updates, we give a fully-dynamic 
$\alpha+2$ arboricity decomposition in 
$\poly(\log n,\alpha)$ 
time, which matches the 
number of forests in the best near-linear static algorithm by Blumenstock and Fischer~\cite{blumenstock2019constructive} who obtain $\alpha+2$ forests in near-linear time. 

Our construction goes via dynamic bounded out-degree orientations, where we present 
a fully-dynamic explicit, deterministic, worst-case algorithm for $\lfloor (1+\varepsilon)\alpha \rfloor + 2$ bounded out-degree orientation with update time $O(\varepsilon^{-6}\alpha^2 \log^3 n)$.
The state-of-the-art explicit, deterministic, worst-case algorithm for bounded out-degree orientations maintains a $\beta\cdot \alpha + \log_{\beta} n$ out-orientation in $O(\beta^2\alpha^2+\beta\alpha\log_{\beta} n)$ time~\cite{kopelowitz2013orienting}. 

\end{abstract}

\setcounter{secnumdepth}{2}
\section{Introduction}
Graph colouring is a well-studied problem in computer science and discrete mathematics and has many applications such as planar routing and network optimization \cite{RealLifeCol}. 
A proper colouring of a graph $G = (V,E)$ on $n$ vertices is an assignment of colours to each vertex in $V(G)$ such that no neighbours receive the same colour. 
We are interested in minimising the number of colours used. 
The minimum number of colours that can be used to properly colour $G$, is called the \emph{chromatic number} of $G$. 
It is NP-hard to even approximate the chromatic number to within a factor of $n^{1-\eps}$ for all $\eps > 0$ \cite{10.1145/1132516.1132612,DBLP:conf/icalp/KhotP06}, but colourings with respect to certain parameters can be efficiently computed.
For instance, it is well known that if a graph is uniformly sparse in the sense that we can decompose it into $k$ forests, then we can efficiently compute a colouring: the sparsity of the graph ensures that every subgraph has a vertex of degree at most $2k-1$,
allowing us to compute a $2k$ colouring of the graph in linear time by colouring the vertices in a clever order.
The minimum number of forests that the graph can be decomposed into is called the \emph{arboricity} of $G$.
In the past decades, much work has gone into the study of \textit{dynamic algorithms} that are able to efficiently update a solution, as the problem undergoes updates. A general question about dynamic problems is: which (near-) linear-time solvable problems have polylogarithmic updatable solutions? 

We study the problem of maintaining a proper colouring of a dynamic graph with bounded arboricity. 
This class of graphs encompasses, for instance, dynamic planar graphs where $\alpha \leq 3$.
Here, the graph undergoes changes in the form of insertions and deletions of edges and one needs to maintain a proper colouring of the vertices with fast update times. 
We distinguish between two scenarios: one where, as is the case for dynamic planar graphs, we have access to an upper bound on the arboricity $\alpha_{max}$ throughout all updates, and one where we do not. 
Note that due to insights presented in~\cite{DBLP:conf/stoc/SawlaniW20}, we can often turn an algorithm for the first scenario into an algorithm for the second by scheduling updates to $O(\log{n})$ (partial) copies of the graph, thus incurring only an $O(\log n)$ overhead in the update time.

Barba et al.\ \cite{Barba} showed that one cannot hope to maintain a proper, explicit vertex-colouring of a dynamic forest with a constant number of colours in poly-logarithmic update time. 
Consequently, we cannot maintain explicit colourings where the number of colours depend entirely on $\alpha$ with poly-logarithmic update time - even if we know an upper bound on $\alpha$.
This motivated Henzinger et al.\ \cite{henzinger2020explicit} to initiate the study of \emph{implicit} colourings.
Here, instead of storing the colours of vertices explicitly in memory, a queryable data structure is provided which after some computations returns the colour of a vertex. 
If one queries the colours of two neighbouring vertices between updates, the returned colours must differ. 
Now, we can circumvent the lower bound by using known data structures for maintaining information in dynamic forest to 2-colour dynamic forests in poly-logarithmic update time.
Henzinger et al.\ \cite{henzinger2020explicit} use this to colour graphs via an \emph{arboricity decomposition} i.e.\ a decomposition of the graph into forests. 
They present a dynamic algorithm that maintains an implicit proper $2^{O(\alpha)}$-colouring of a dynamic graph $G$ with arboricity $\alpha$.
Their algorithm adapts to $\alpha$, but in return it hides a constant (around 40) in the asymptotic notation.
Even if one has an upper bound $\alpha_{max}$ on $\alpha$, the currently best obtainable colouring uses $2^{4(\alpha_{max}+1)}$ colours by combining 
the arboricity decomposition algorithm from Henzinger et al.\ \cite{henzinger2020explicit} with an algorithm of Brodal \& Fagerberg \cite{Brodal99dynamicrepresentations} that maintains a $2(\alpha_{max}+1)$ bounded out-degree orientation.
Both of these algorithms use a lot of colours. Even for planar graphs with arboricity at most 3, $2^{16} > 60.000$ colours are used. This is quite far from $4$ colours, which is always sufficient~\cite{AppelH76,RobertsonSST96}, or the 5 colouring that can be computed in linear time \cite{Matula1980TwoLA,ChibaNS81,Frederickson84}.

\subparagraph{Dynamic arboricity decompositions:}
Both colouring algorithms go via dynamic $\alpha'$-bounded out-orientations. Here, the goal is to orient the edges of the graph while keeping out-degrees low. These are then turned into dynamic $2\alpha'$-arboricity decompositions. 
By 2-colouring each forest, such a decomposition yields a $2^{2\alpha'}$ colouring.
Thus the lower $\alpha'$ is, the fewer colours we use.
There has been a lot of work on maintaining dynamic low out-orientations 
\cite{Brodal99dynamicrepresentations, berglinetal:LIPIcs:2017:8263,kopelowitz2013orienting,He2014OrientingDG,10.1145/3392724},  
and much of this work aim to improve update complexity by relaxing the allowed out-degree.
Motivated by implicit colourings, we provide a different trade-off, 
providing a lower $\alpha'$ value within $\operatorname{polylog}(n,\alpha)$ update time.
Specifically, a 
$\lfloor (1+\eps)\alpha \rfloor+2$ dynamic out-orientation with $O(\log^{3}(n)\alpha^2/\eps^{6})$ update-time adaptive to $\alpha$, and an $\alpha+2$ dynamic arboricity decomposition with $O(\operatorname{poly}(\log{n},\alpha_{max}))$ update time, when we have an upper bound $\alpha_{max}$ on the arboricity. 
Our algorithm maintaining the arboricity decomposition matches the number of forests obtained by the best static algorithm running in
near-linear time~\cite{blumenstock2019constructive}. 

These algorithms may also be interesting as they go below the $2\alpha$ barrier on out-edges and forests respectively. 
In the static case there exist simple and elegant algorithms computing $2\alpha-1$ out-orientations and arboricity decompositions in linear time \cite{DBLP:journals/dam/ArikatiMZ97,10.1016/0020-0190(94)90121-X}. 
For exact algorithms, the state-of-the-art algorithms spend time $O(m^{10/7})$ \cite{Madry} or $O(m\sqrt{n})$ \cite{LeeSidford} for the out-orientation problem, and $\tilde{O}(m^{3/2})$ for the arboricity decomposition problem \cite{10.1145/62212.62252,10.5555/313651.313667}. 
Even statically computing an $\alpha + 1$ out-orientation \cite{10.1007/11940128_56} resp.\  an $\alpha + 2$ arboricity decomposition \cite{blumenstock2019constructive} takes $\tilde{O}(m)$ time. 
In the dynamic case, the out-orientation with the lowest bound on the out-degree with $O(\poly(\log{n},\alpha))$ update time seem to be the algorithm of Brodal \& Fagerberg \cite{Brodal99dynamicrepresentations} that achieves $2(\alpha_{max}+1)$ out-degree. 
In \cite{Brodal99dynamicrepresentations}, it is also noted that determining exactly the complexity of maintaining a $d$ out-orientation for $d \in [\alpha,2\alpha]$ is a 'theoretically interesting direction for further research'. We make some progress in this direction by showing how to maintain a $\lfloor (1+\eps)\alpha\rfloor +2$ out-orientation with $\poly(\log{n},\alpha,\varepsilon^{-1})$ update time.
Thus, if $\alpha$ is a constant, we may carefully choose $\varepsilon$ to obtain a polylogarithmic $\alpha + 2$ out-orientation.

\subsection{Results}
Let $G$ be a dynamic graph with $n$ vertices undergoing insertion and deletions of edges, and let $\alpha$ be the current arboricity of the graph; that is $\alpha$ might change, when edges are inserted and deleted.
If we at all times have an upper bound $\alpha_{max}$ on $\alpha$, we say that $G$ is undergoing an $\alpha_{max}$ preserving sequence of updates. 
We have the following:
\begin{theorem}
\label{OutOrientationMain}
For $1> \eps > 0$, there exists a fully-dynamic algorithm maintaining an explicit $((1+\eps)\alpha+2)$-bounded out-degree orientation with worst-case insertion time $O(\log^{3}{n}\cdot{}\alpha^2/\eps^6)$ and worst-case deletion time $O(\log^{3}{n}\cdot{}\alpha/\eps^4)$
\end{theorem}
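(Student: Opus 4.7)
I would maintain the orientation explicitly together with, for each vertex, its current out-degree. The invariant to preserve is that every out-degree is at most $\Delta := \lfloor(1+\eps)\alpha\rfloor+2$. Deletions only decrease out-degrees and so preserve the invariant trivially (up to the fact that $\alpha$ itself can drop, see below). On insertion of an edge $\{u,v\}$, I would orient the new edge toward whichever endpoint currently has the smaller out-degree; if this pushes some out-degree to $\Delta+1$, I invoke a rebalancing procedure to restore the invariant.

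\textbf{Rebalancing via BFS-flip.} The rebalancer performs a BFS in the oriented graph starting from the overloaded vertex $v$, walking along out-edges, until it locates a \emph{sink} $w$ with out-degree at most $\Delta-1$. It then reverses the entire BFS-path from $v$ to $w$: this decreases $d^+(v)$ by $1$, increases $d^+(w)$ by $1$ (still $\leq \Delta$), and leaves every intermediate out-degree unchanged. Termination of the BFS follows from the arboricity bound: if $S$ is the set reached after $k$ layers with frontier $F$, and if no sink has been found, then every $x\in S$ has $d^+(x)\geq \Delta$, so
\[
\Delta\bigl(|S|-|F|\bigr)\;\leq\;\bigl|E(G[S])\bigr|\;\leq\;\alpha|S|,
\]
which rearranges to $|F|\geq \bigl((\Delta-\alpha)/\Delta\bigr)|S|=\Omega(\eps/(1+\eps))\cdot|S|$. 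Hence $|S|$ grows geometrically in $k$, and a sink must appear within $O(\eps^{-1}\log n)$ layers (otherwise $S=V$ and \emph{every} vertex would have out-degree exceeding $\alpha$, contradicting arboricity).

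\textbf{Width control and worst-case time.} The naive BFS above terminates quickly in \emph{depth} but may have exponential width. To obtain the claimed $O(\eps^{-6}\alpha^2\log^3 n)$ update time, I would superimpose a level structure on the vertices, with $L=O(\eps^{-1}\log n)$ levels and a refined invariant that ties the permitted out-degree of $v$ to its level (forcing out-degrees to decrease, in $\Omega(\eps\alpha)$-sized steps, as levels increase). The flip-BFS is then \emph{guided} along levels so that each layer touches at most $O(\alpha/\eps)$ vertices, with the depth bounded by $L$; maintaining the invariant after a path-flip requires promoting/demoting the flipped vertices along the levels, which accounts for the further $\eps^{-1}$ and $\alpha$ factors. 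Finally, deletions are handled by re-orienting the deleted edge's slot and, to cope with a possibly decreasing $\alpha$, I would use the $O(\log n)$-copy scheduling trick of Sawlani--Wang referenced in the introduction, giving the extra $\log n$ factor.

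\textbf{Main obstacle.} The real difficulty lies in the level structure: I need to prove that \emph{every} single edge insertion or deletion triggers only polylogarithmic work in the worst case, not just amortised. Concretely, one must argue that (i) the BFS width really is controlled by the levels rather than by $\Delta^{\text{depth}}$, (ii) the subsequent level updates affect only $\poly(\log n,\alpha,\eps^{-1})$ vertices, and (iii) no cascade of flips chains across the graph. Getting all three simultaneously, while still matching the very tight density-based depth bound from arboricity, is the crux; the exponent $\eps^{-6}$ in the final bound hints that several $\eps^{-1}$ factors (levels, per-level search, width of a layer, and amortisation-to-worst-case conversion) have to be balanced together.
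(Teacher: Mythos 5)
Your proposal takes a fundamentally different route from the paper, and it has a genuine gap that you yourself flag in the ``Main obstacle'' paragraph. The paper does not maintain an integer orientation directly with a BFS-flip rebalancer. Instead, it works with a \emph{fractional} orientation on the multigraph $G^{\gamma}$ (each edge replaced by $\gamma = \Theta(\log n/\eps^2)$ copies), maintained locally $1$-stable by the Kopelowitz et al.\ chain-reorientation scheme. When every edge is $1$-valid, the reorientation path is a \emph{single} maximal tight chain of length $O(\Delta^{+})$ — not a BFS tree — so the width-control problem you identify simply does not arise. Your BFS argument correctly gives geometric growth of $|S|$ and hence logarithmic depth, but the tree can still have exponential width; your proposed level structure to fix this is unspecified, and you explicitly say you cannot make points (i)--(iii) go through. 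That is precisely where the proposal stops being a proof.

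There is a second, more structural gap: even granting a worked-out version of your rebalancer, the integer-valued invariant you propose cannot give the additive $+2$ in the out-degree bound. The density/Nash-Williams argument (Lemma~\ref{lma:Kop-EtaValid} in the paper) forces a $\log_{1+\eps} n$ additive term in $\max_v s(v)$ for a locally $1$-stable orientation; this term is unavoidable in the un-scaled setting. The paper eliminates it by running on $G^{\gamma}$ so that the additive term becomes $\gamma^{-1}\log_{1+\eps} n = O(\eps)$ per vertex after rescaling, and then \emph{rounds} the fractional orientation: edges that already strongly prefer one endpoint are oriented greedily, while edges with near-balanced loads form a forest $H$ (the $(\delta,\mu)$-refinement of Section~\ref{sec:Frac}, kept acyclic by inverting cycles), which is separately 2-oriented via the heavy-light decomposition of Lemma~\ref{lma:orienttrees}. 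The ``$+2$'' in the theorem is exactly the out-degree contributed by this 2-orientation of $H$; it does not come from slack in a BFS-flip argument. Your proposal has no mechanism to obtain a constant additive term — the cited Sawlani--Wang scheduling trick handles adaptivity to $\alpha$, not the additive $\log n$ loss — so even the target bound would not be achieved.
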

Using pseudoforest decompositions, we obtain a fully dynamic, implicit colouring algorithm:
\begin{corollary}
\label{cor:ColouringSecondary}
Given a dynamic graph with $n$ vertices, there exists a fully dynamic algorithm that maintains an implicit $2\cdot{}3^{(1+\varepsilon)\alpha}$ colouring with an amortized update time of $O(\log^{4}{n}\cdot{}\alpha^2/\eps^6)$ and a query time of $O(\alpha\log{n})$.
\end{corollary}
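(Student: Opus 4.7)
The plan is to combine the out-orientation of Theorem~\ref{OutOrientationMain} with a dynamic 3-colouring of each pseudoforest in the induced decomposition. First I would run Theorem~\ref{OutOrientationMain} with parameter $\eps$ to maintain an explicit $\lfloor(1+\eps)\alpha\rfloor+2$-bounded out-orientation in worst-case $O(\log^3 n\cdot\alpha^2/\eps^6)$ per update, recording the list of edge reorientations it performs. From this orientation I read off a pseudoforest decomposition: at every vertex $v$ I maintain a bijection between the out-edges of $v$ and slot labels in $\{1,\dots,\lfloor(1+\eps)\alpha\rfloor+2\}$, so that the set of edges with label $i$ forms the $i$th pseudoforest $P_i$ (every vertex has at most one out-edge in $P_i$). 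Each reorientation triggers $O(1)$ label reassignments.

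For each pseudoforest $P_i$ I would maintain an implicit proper 3-colouring using top-trees (or Euler-tour trees) on the spanning forest of $P_i$. A pseudoforest is 3-chromatic because every connected component is either a tree (2-chromatic) or a unicycle (3-chromatic): the idea is to 2-colour tree parts by depth parity from a designated root, taking the cycle vertex as root in unicyclic components, and then use a third colour to patch the one cycle edge whose endpoints would otherwise clash. Storing a colour-offset at each root lets us flip the 2-colouring of a whole component in $O(\log n)$ when re-rooting, so insertions, deletions, and reorientations cost $O(\log n)$ per $P_i$-operation.

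The main obstacle is this dynamic 3-colouring of a pseudoforest in polylogarithmic time: insertions and deletions can merge or split components and create or destroy the unique cycle in a unicyclic component, and a naive re-colouring of an affected component would take linear time. The remedy is to use top-tree link/cut primitives to detect these structural events in $O(\log n)$, combined with the per-root colour offsets so that global re-colourings induced by re-rooting are implicit rather than explicit. The third colour used for a broken cycle edge is associated with that edge in the top tree, so it is found along the root-to-leaf path during a query.

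Putting the pieces together, each update of $G$ triggers $O(\log^3 n\cdot\alpha^2/\eps^6)$ reorientations (by Theorem~\ref{OutOrientationMain}), each of which costs an additional $O(\log n)$ of work in the affected pseudoforest, giving amortised update time $O(\log^4 n\cdot\alpha^2/\eps^6)$. A colour query at $v$ reads the colour of $v$ in each of the $\lfloor(1+\eps)\alpha\rfloor+2$ pseudoforests in $O(\log n)$ apiece, giving $O(\alpha\log n)$ query time. The implicit colour of $v$ is encoded as the tuple of its per-pseudoforest colours, so the palette has size at most $3^{\lfloor(1+\eps)\alpha\rfloor+2}$; a more careful accounting that exploits the additive $+2$ slack in the out-orientation to replace one pseudoforest factor by a $2$-chromatic (forest-like) factor collapses this to the stated $2\cdot 3^{(1+\eps)\alpha}$ colours.
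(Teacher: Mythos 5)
There is a genuine gap in the colour count, and it is exactly at the step you wave away at the end. Using Theorem~\ref{OutOrientationMain} as a black box and then splitting via Lemma~\ref{lma:split} gives you $\lfloor(1+\eps)\alpha\rfloor+2$ pseudoforests, each $3$-chromatic, hence $3^{\lfloor(1+\eps)\alpha\rfloor+2} = 9\cdot 3^{\lfloor(1+\eps)\alpha\rfloor}$ colours. You claim that ``exploiting the additive $+2$ slack'' lets you replace a factor to reach $2\cdot 3^{(1+\eps)\alpha}$, but this does not go through: the $+2$ arises from the $2$-out-orientation of the refinement forest $H$, and once $H$'s edges are merged into the overall explicit orientation and re-partitioned by Lemma~\ref{lma:split}, there is no guarantee that any of the resulting slots is a forest. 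Indeed Lemma~\ref{lma:split} assigns out-edges to slots by local degree counters at each vertex, so $H$'s edges scatter across all slots and the ``two extra'' pseudoforests can perfectly well contain unicyclic components requiring $3$ colours; you cannot isolate a $2$-chromatic factor after the fact.

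The paper's proof sidesteps this by \emph{not} packaging things into Theorem~\ref{OutOrientationMain}. It instead invokes Theorem~\ref{thm:RefinementKop} directly, keeping the forest $H$ and the rounded orientation on $G-H$ as two separate objects: $G-H$ is rounded to a $\lfloor(1+\eps)\alpha\rfloor$-bounded out-orientation (no $+2$, since the $+2$ only comes from orienting $H$), split into $\lfloor(1+\eps)\alpha\rfloor$ pseudoforests and $3$-coloured; $H$ itself is $2$-coloured via Lemma~\ref{lma:ColourForest}. The product colouring then uses $2\cdot 3^{\lfloor(1+\eps)\alpha\rfloor}\leq 2\cdot 3^{(1+\eps)\alpha}$ colours. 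Your $3$-colouring strategy for a pseudoforest (root at the unicycle, colour by depth parity, special-case the root/cycle-closing edge) and your per-pseudoforest top-tree bookkeeping are essentially what the paper does in Section~\ref{sct:colouring}, and your update-time accounting (reorientations $\times O(\log n)$ per pseudoforest) matches. To repair the argument you must keep $H$ unmerged and $2$-colour it separately rather than folding it into the explicit out-orientation.
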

By moving edges between pseudoforests, we can turn the pseudoforest decomposition into a forest decomposition. This also gives a colouring algorithm using fewer colours.
\begin{theorem}
\label{ArbDecompMain}
Given an initially empty and dynamic graph undergoing an arboricity $\alpha_{max}$ preserving sequence of updates, there exists an algorithm maintaining a $\lfloor (1+\eps)\alpha \rfloor+2$ arboricity decomposition with an amortized update time of $O(\operatorname{poly}(\log{n},\alpha_{max},\varepsilon^{-1}))$. 
In particular, setting $\eps < \alpha_{max}^{-1}$ yields $\alpha+2$ forests with an amortized update time of $O(\operatorname{poly}(\log{n},\alpha_{max}))$. 
\end{theorem}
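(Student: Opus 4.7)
The plan is to build on \Cref{OutOrientationMain}: first obtain a bounded out-orientation, then derive a pseudoforest decomposition from it, and finally repair the pseudoforests into genuine forests by a dynamic edge-swapping procedure. Throughout, I use the given upper bound $\alpha_{max}$ on $\alpha$.

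First, I would invoke \Cref{OutOrientationMain} with the supplied $\eps$. Since the graph undergoes $\alpha_{max}$-preserving updates, the maintained orientation has maximum out-degree at most $k:=\lfloor(1+\eps)\alpha\rfloor+2$, and each edge update triggers only polylogarithmically many edge reorientations. I then read off a pseudoforest decomposition $P_1,\dots,P_k$ by labelling the out-edges at each vertex $v$ as $1,\dots,d^+(v)$ and placing the $i$-th out-edge into $P_i$. Because every vertex has at most one out-edge in each $P_i$, every component of $P_i$ is either a tree or contains exactly one (directed) cycle.

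Next, I would run a dynamic \emph{pseudoforest-to-forest repair} procedure on top of the pseudoforests. On each $P_i$ I maintain a link-cut or ET-tree structure so that, in $O(\log n)$ time, I can detect whether inserting a given edge $e$ would close a cycle. Whenever a reorientation places an edge into some $P_i$ so that a cycle $C$ appears, the repair process seeks an edge $e'\in C$ that can be moved to another $P_j$ without creating a cycle there. If no single-step move suffices, the process recurses along an \emph{augmenting sequence}: the next candidate edge is taken from the cycle that $e'$ would form in $P_j$, and so on, mimicking augmenting paths in the matroid union of the $k$ graphic matroids. Since $\alpha\le k-2$, the Nash-Williams inequality guarantees that a $k$-forest decomposition of $E(G)$ exists, so the matroid-union has full rank on $E(G)$; this is the combinatorial certificate that an augmenting sequence always terminates with a valid forest decomposition.

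The main obstacle will be bounding the \emph{amortised} cost of the repair procedure: a single cascading swap may be long, and cascades triggered by different updates can interfere. I would introduce a potential function $\Phi$ measuring how far the current pseudoforest decomposition is from an honest forest decomposition, for instance by summing the excess (cycle) edges across the $P_i$'s weighted by the out-orientation level on which they currently sit. A reorientation should raise $\Phi$ by at most $O(\poly(\log n,\alpha_{max},\eps^{-1}))$, while each swap step should lower $\Phi$ by enough to pay for its ET-tree operations. Combining this with the polylogarithmic reorientation bound from \Cref{OutOrientationMain} yields the claimed amortised update time. Finally, setting $\eps<\alpha_{max}^{-1}$ ensures $\eps\alpha<1$ and hence $\lfloor(1+\eps)\alpha\rfloor=\alpha$, delivering the promised $\alpha+2$ forests in amortised $O(\poly(\log n,\alpha_{max}))$ time.
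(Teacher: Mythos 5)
Your high-level plan (bounded out-orientation $\to$ pseudoforests $\to$ forests via edge swaps) is indeed the right shape, but there are two genuine gaps that the paper had to resolve and which your proposal leaves open.

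First, \emph{faithfulness}. Your pseudoforest maintenance relies (implicitly) on Lemma~\ref{lma:split}, which in turn depends on the invariant that every vertex has out-degree at most one in each $P_i$. The moment your repair procedure moves an edge $e'$ from $P_i$ to some other $P_j$ ``because it does not close a cycle there'', that edge may end up as a second out-edge of one of its endpoints in $P_j$, destroying the invariant. After that, the next reorientation performed by the out-orientation algorithm cannot be translated into the pseudoforest partition, and the data structure breaks. This is exactly the obstacle singled out in Section~\ref{sec:DynFo}: Blumenstock--Fischer swaps have no regard for the underlying orientation. The paper overcomes it by keeping access to the \emph{fractional} orientation of Theorem~\ref{thm:RefinementKop} and reversing directed cycles (Figure~\ref{fig:switch}, Lemma~\ref{lma:move1}) so that both swapped edges become out-edges of the common vertex $v$ before the swap is made; this is only possible because doubly $\eta$-valid cycles can be inverted without invalidating edges, and singly valid cycles are handled by deleting and reinserting the offending copies, with the total rebalancing cost charged to the potential of Lemma~\ref{lma:potential}. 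Starting from the rounded integer orientation of Theorem~\ref{OutOrientationMain} alone, as you propose, throws away the fractional structure needed to perform these inversions.

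Second, the \emph{cost of the repair}. You acknowledge the cascading-swap problem and propose a potential $\Phi$ counting excess cycle edges, but this cannot work as stated: a matroid-union augmenting sequence in $k$ graphic matroids can have length $\Theta(n)$, and a single update can force a long augmentation while decreasing any natural ``number of bad edges'' potential by only $O(1)$. The paper sidesteps this by not repairing all $k$ pseudoforests into $k$ forests at all; it keeps the pseudoforest decomposition, splits each $P_i$ into a forest $F_i$ and a matching $M_i$, and only maintains the \emph{surplus graph} $G[M]$ as colourful and acyclic (Lemmas~\ref{lma:move1} and~\ref{lma:move2}). Because a colourful component of $G[M]$ has $O(\alpha)$ vertices, every swap sequence is $O(\alpha)$ long (Lemma~\ref{lma:surplusinsertion}), and the resulting decomposition is $\alpha'$ forests $F_1,\dots,F_{\alpha'}$ plus the single forest $G[M]$, plus the refinement forest $H$, giving the count $\lfloor(1+\eps)\alpha\rfloor+2$. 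Your plan aims for $k$ forests from $k$ pseudoforests and thus inherits the unbounded augmenting-path length. Until you supply a concrete argument bounding swap-sequence length (or switch to the surplus-graph relaxation) and a mechanism to keep the partition faithful to an out-orientation, the amortised bound does not follow.
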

\begin{corollary}
\label{cor:ColouringMain}
Given a dynamic graph with $n$ vertices, there exists a fully dynamic algorithm that maintains an implicit $4\cdot{}2^{\alpha}$ colouring with an amortized update of $O(\operatorname{polylog}{n})$ and a query time of $O(\alpha\log{n})$.
\end{corollary}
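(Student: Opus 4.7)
The plan is to combine Theorem~\ref{ArbDecompMain} with a standard implicit $2$-colouring of each resulting forest, and then to remove the dependence on $\alpha_{\max}$ via the reduction sketched in the introduction. First, assume we have an upper bound $\alpha_{\max}$ on the arboricity. Choosing $\eps < 1/\alpha_{\max}$ in Theorem~\ref{ArbDecompMain} produces an $(\alpha+2)$-arboricity decomposition maintainable in $O(\poly(\log n,\alpha_{\max}))$ amortised time per update.

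Next, I would maintain each of the $\alpha+2$ forests in a dynamic-tree structure such as link-cut trees or top trees. After fixing an arbitrary root in each tree component, the parity of a vertex's depth yields a proper $2$-colouring of that forest, queryable in $O(\log n)$ worst-case time per vertex and updatable in $O(\log n)$ worst-case time per edge event (edges shuffled between forests by the decomposition algorithm are handled as a deletion followed by an insertion, each counted against the polylogarithmic amortised budget provided by Theorem~\ref{ArbDecompMain}). The implicit colour of a vertex $v$ is then the $(\alpha+2)$-bit tuple $(c_1(v),\ldots,c_{\alpha+2}(v))$. If $uv$ is any edge of $G$, some forest $F_i$ contains it, and there $c_i(u)\ne c_i(v)$, so the tuples differ; hence the colouring is proper and uses at most $2^{\alpha+2} = 4\cdot 2^{\alpha}$ distinct tuples. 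A colour query probes each of the $\alpha+2$ forests once, yielding a total query time of $O(\alpha \log n)$.

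The main obstacle is that Theorem~\ref{ArbDecompMain} requires $\alpha_{\max}$-preserving updates, while the corollary must hold without any a priori bound on $\alpha$. To circumvent this I would invoke the scheduling reduction of Sawlani and Wang~\cite{DBLP:conf/stoc/SawlaniW20} mentioned in the introduction: run $O(\log n)$ instances of the above construction on partial copies of $G$ at geometrically increasing guessed thresholds, routing each update to the relevant copies and promoting to the next copy whenever $\alpha$ outgrows the current threshold. This multiplies the amortised update time by $O(\log n)$, keeping it polylogarithmic in $n$, while each query is answered by the single currently valid copy and therefore retains the $O(\alpha \log n)$ bound and the $4\cdot 2^\alpha$ colour guarantee.
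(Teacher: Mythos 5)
Your approach is the same as the paper's: produce an $(\alpha+2)$-arboricity decomposition via Theorem~\ref{ArbDecompMain} with $\eps<1/\alpha_{\max}$, $2$-colour each forest by root-parity in a dynamic-tree structure, take the product colouring to get $4\cdot 2^\alpha$ colours, and then run the Sawlani--Wang scheduler on $O(\log n)$ copies to remove the a-priori bound on $\alpha$. (The paper packages the middle step as Lemma~\ref{lma:ColourForest}, but the content is what you describe.)

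There is, however, one genuine gap in the final step. Theorem~\ref{ArbDecompMain} has amortised update time $O(\poly(\log n,\alpha_{\max},\eps^{-1}))$, and with $\eps<\alpha_{\max}^{-1}$ this is $O(\poly(\log n,\alpha_{\max}))$ --- polylogarithmic in $n$ only when $\alpha_{\max}=O(\polylog n)$. The Sawlani--Wang scheduler maintains $O(\log n)$ copies whose estimates $\rho_{est}$ range geometrically up to $\Theta(n)$, and it sets $\alpha_{\max}=\Theta(\rho_{est})$ on each copy. Hence running the decomposition-and-colouring machinery on the high-estimate copies would give an update cost of $\poly(\log n,\rho_{est})$, which is not polylogarithmic. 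Your claim that scheduling ``multiplies the amortised update time by $O(\log n)$, keeping it polylogarithmic'' therefore does not follow from what you wrote: it would follow for the copies with small estimates, but not for the rest.

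The paper closes this gap with a simple observation you omit: for any copy whose estimate satisfies $\rho_{est}>\log n$, do not run the arboricity decomposition at all; instead colour each vertex by its vertex identifier. This is licensed because if such a copy is active then $\alpha(G)\ge\Omega(\rho_{est})\ge\Omega(\log n)$, so $4\cdot 2^{\alpha}\ge n$ and assigning distinct colours to all $n$ vertices is within budget. With this modification, the decomposition machinery only ever runs with $\alpha_{\max}=O(\log n)$, and the $O(\polylog n)$ amortised update bound genuinely holds.
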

Finally, we modify an algorithm of Brodal \& Fagerberg \cite{Brodal99dynamicrepresentations} so that it maintains an acyclic out-orientation.
\begin{theorem}
\label{thm:ModificationBroFa}
Given an initially empty and dynamic graph $G$ undergoing an arboricity $\alpha_{max}$ preserving sequence of insertions and deletions, there exists an algorithm maintaining an acyclic $(2\alpha_{max}+1)$ out-degree orientation with an amortized insertion cost of $O(\alpha_{max}^2)$, and an amortized deletion cost of $O(\alpha_{max}^2 \log{n})$.
\end{theorem}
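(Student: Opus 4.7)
I would keep Brodal and Fagerberg's out-orientation algorithm essentially intact but couple it with an explicit total order $\pi$ on the vertex set, stored in an order-maintenance data structure supporting $O(1)$ amortized comparisons and insertions (e.g., Bender et al.). The maintained invariants are: (i) every edge is oriented from the $\pi$-smaller endpoint to the $\pi$-larger one, so the orientation is acyclic by construction; and (ii) every vertex has out-degree at most $2\alpha_{max}+1$. The slack between this bound and the degeneracy $2\alpha_{max}-1$ of a graph of arboricity $\alpha_{max}$ leaves room to carry out BF-style flips.

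\textbf{Updates.} On insertion of $\{u,v\}$ with $\pi(u)<\pi(v)$ I orient $u\to v$. If $u$'s out-degree now exceeds $2\alpha_{max}+1$, I pick the out-neighbour $w$ of $u$ with smallest $\pi(w)$ and reposition $u$ in $\pi$ to sit immediately after $w$. Because $w$ was the $\pi$-smallest out-neighbour of $u$, no vertex adjacent to $u$ lies strictly between $\pi(u)$ and $\pi(w)$ before the move, so only the edge $\{u,w\}$ changes direction: $u$'s out-degree drops by one and $w$'s grows by one. If $w$ is now over-saturated, I recurse on $w$. Acyclicity is preserved throughout since the orientation always agrees with the current $\pi$. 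Deletions simply remove the edge; the out-degree invariant is trivially preserved, and the only work is on auxiliary data structures.

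\textbf{Analysis and main obstacle.} I would carry out a Brodal--Fagerberg-style amortized analysis with a potential $\Phi=c\sum_v f(\mathrm{outdeg}(v))$ for an appropriately increasing convex $f$, arranged so that each cascade flip drops $\Phi$ by $\Omega(1)$ while costing $O(\alpha_{max})$ to scan an out-neighbourhood and $O(1)$ amortized for the OM-DS reposition. The hard part will be bounding the length of the flip cascade generated by a single insertion: a promotion of $u$ may trigger one at $w$, that at a third vertex, and so on, and one has to show that on average the chain has length $O(\alpha_{max})$. The argument parallels BF's: because the graph still has arboricity at most $\alpha_{max}$, some reachable vertex along the cascade always has out-degree strictly below the bound $2\alpha_{max}+1$, and the amortized accounting pays for reaching it. The $\log n$ factor on deletions then absorbs the periodic rebalancing of the OM-DS and of the per-vertex adjacency indices, yielding the claimed $O(\alpha_{max}^2)$ amortized insertion cost and $O(\alpha_{max}^2\log n)$ amortized deletion cost.
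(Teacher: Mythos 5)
Your acyclicity mechanism (maintain an explicit total order $\pi$ via an order-maintenance structure and orient every edge from the $\pi$-smaller to the $\pi$-larger endpoint) is a reasonable alternative to the paper's argument, which instead flips \emph{all} out-edges of the chosen endpoint on every insertion and proves acyclicity via a time-stamp ordering (Lemma~\ref{lma:acyclicBF}). However, your single-edge flip cascade does not terminate, which is a fatal gap. Concretely: suppose $u$ becomes over-saturated (out-degree $2\alpha_{max}+2$), and its $\pi$-smallest out-neighbour $w$ has out-degree exactly $2\alpha_{max}+1$. You move $u$ to the slot immediately after $w$, flipping only $uw$; now $u$ has out-degree $2\alpha_{max}+1$ and $w$ has out-degree $2\alpha_{max}+2$. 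But $u$ now occupies the first slot after $w$, so $u$ is $w$'s $\pi$-smallest out-neighbour, and recursing on $w$ moves $w$ to just after $u$, flipping $uw$ straight back. The two vertices trade the excess out-edge indefinitely, and nothing in the arboricity bound prevents this state from arising: two adjacent vertices can both legitimately hold out-degree $2\alpha_{max}+1$ (e.g.\ for $\alpha_{max}=1$, two stars sharing a new edge). So the cascade can loop forever, and neither the BF-style potential nor any $f(\mathrm{outdeg})$ potential you posit can rescue this, since the configuration is literally periodic.

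The natural repair keeps your data structure but changes the flip rule: when $u$ is over-saturated, move it to just after its $\pi$-\emph{largest} out-neighbour. This flips \emph{all} of $u$'s out-edges at once (its out-degree drops to $0$), which is exactly the sink-making step of the paper's modified Brodal--Fagerberg algorithm, realised by a single OM-DS reposition. With that change your $\pi$-based acyclicity proof becomes an explicit implementation of the paper's time-stamp argument, and the amortized bound follows verbatim from the paper's Lemma~\ref{lma:reductioalaBF} (a potential counting disagreements with an offline $\delta$-orientation, with $\delta=\alpha_{max}+1$ and $d=2\delta$), giving $O(\alpha_{max}^2)$ per insertion and $O(\alpha_{max}^2\log n)$ per deletion via the offline reorientation scheme that charges $O(\log_{\delta/\alpha}n)$ reorientations to each deletion. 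Note also that the $\log n$ in the deletion bound comes from that reorientation scheme, not from OM-DS rebalancing as you suggest; order-maintenance operations are $O(1)$ amortized and contribute only lower-order terms.
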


\subparagraph{Paper outline} We recall related work below. In Section~\ref{sec:prelim} we recall preliminaries and and show how to explicitly 2-orient dynamic forests as a warm.up; Section~\ref{sec:Frac} outlines the proof of Theorem~\ref{OutOrientationMain} and Corollary~\ref{cor:ColouringSecondary}; in Section~\ref{sec:DynFo}, we outline how to show Theorem~\ref{ArbDecompMain} and Corollary~\ref{cor:ColouringMain}; and Section~\ref{sec:BroFa} is dedicated to Theorem~\ref{thm:ModificationBroFa}. 
Finally, in Section~\ref{sct:colouring} we give the missing details in the proof of Corollaries \ref{cor:ColouringMain} and \ref{cor:ColouringSecondary}.

\subsection{Related Work}
\subparagraph{Dynamic colouring:}
Barba et al.~\cite{Barba} 
give algorithms for the dynamic recolouring problem, and 
show that $c$-colouring a dynamic forests incurs $\Omega(n^{\frac{1}{c(c-1)}})$ recolourings per update. 
Solomon \& Wein 
give improved trade-offs between update time and recolourings 
and give a deterministic dynamic colouring algorithm 
parametrized by the arboricity $\alpha$, using $O(\alpha^2\log{n})$ colours with $O(1)$ amortized update time~\cite{10.1145/3392724}. 
Henzinger et al.\ \cite{henzinger2020explicit} introduced the study of implicit colouring of sparse graphs in order to circumvent the explicit lower bound of Barba et al.\ \cite{Barba}; 
they maintain an implicit colouring using $2^{O(\alpha)}$ colours, with $O(\log^3{n})$ update time and $O(\alpha\log{n})$ query-time. 
Bhattacharya et al.\ \cite{DBLP:conf/soda/BhattacharyaCHN18} studied the dynamic colouring problem parameterized by the maximum degree $\Delta$, presenting a $(1+o(1))\Delta$ colouring algorithm with $O(\text{polylog}(\Delta))$ update time, and a randomized $\Delta+1$ colouring with expected amortized update time $O(\log{\Delta})$. 
This randomized result was subsequently improved independently by Bhattacharya et al.~\cite{DBLP:journals/corr/abs-1910-02063} and Henzinger \& Peng~\cite{DBLP:conf/stacs/Henzinger020} achieving $O(1)$ amortized update time (respectively, w.h.p. and expected).

\subparagraph{Bounded out-degree orientations:}
Much of the work with respect to bounded out-degree orientations has gone into either $1)$ statically computing bounded out-degree orientations with the minimum (or close to it) out-degree \cite{10.1145/62212.62252,https://doi.org/10.1002/net.3230120206,doi:10.1137/1.9781611974317.10,db16bc9212da408a92161ab5d03b6a88}, or $2)$ dynamically maintaining bounded out-degree orientations with efficient updates \cite{Brodal99dynamicrepresentations,kopelowitz2013orienting,10.1145/3392724,10.1016/j.ipl.2006.12.006,berglinetal:LIPIcs:2017:8263}, but allowing weaker guarantees on the minimum out-degree (see Table \ref{TableB} for an overview).
\begin{table}[]
\begin{tabular}{lllll}
Reference           & Out-degree                & Update time                                       & $\alpha$   \\ \hline
Brodal \& Fagerberg \cite{Brodal99dynamicrepresentations} & $2(\alpha+1)$             & $O(\alpha+\log{n})$ am.                                  & fixed                               \\
Kopelowitz et al.\ \cite{kopelowitz2013orienting}   & $\beta\alpha + \log_{\beta}{n}$     & $O(\beta^2\alpha^{2}+\beta\alpha\log{n})$                             & adaptive                                   \\
He et al.\ \cite{He2014OrientingDG}          & $O(\alpha\sqrt{\log{n}})$ & $O(\sqrt{\log{n}})$ am.                           & fixed                                \\
Berglin \& Brodal \cite{berglinetal:LIPIcs:2017:8263}          & $O(\alpha + \log{n})$ & $O(\log{n})$                            & adaptive                                \\
Henzinger et al.\ \cite{henzinger2020explicit}    & $40\alpha$                & $O(\log^2{n})$ am.                               & adaptive                                       \\ 
Kowalik \cite{10.1016/j.ipl.2006.12.006}.    & $O(\alpha \log{n})$                & $O(1)$ am.                               & fixed                                  \\\hline
New (Thm.\ \ref{OutOrientationMain})           & $(1+\eps)\alpha + 2$              & $O(\log^{5}{(n)}\alpha^2/\eps^6))$                                         & adaptive                                           
\end{tabular}
\caption{\label{TableB} Different dynamic algorithms for maintaining out-orientations.}
\vspace{-4mm}
\end{table}
The current state-of-the-art for exact, static algorithms have running times $O(m^{10/7})$ \cite{Madry} and $O(m\sqrt{n})$ \cite{LeeSidford}. 
Kowalik also gave an algorithm computing a $\lceil (1+\eps) \alpha \rceil$ out-orientation in $\tilde{O}(m\log{n})$ time.

\subparagraph{Arboricity decompositions:}
A lot of work has been put into producing efficient static algorithms for computing arboricity decompositions \cite{10.1145/62212.62252,10.5555/313651.313667,Edmonds1965MinimumPO,https://doi.org/10.1002/net.3230120206} 
(see \cite{blumenstock2019constructive}
for an overview). 
The fastest static algorithm runs in $\tilde{O}(m^{3/2})$ time \cite{10.1145/62212.62252,10.5555/313651.313667}.
Also approximation algorithms have been studied in the static case. 
There exists a linear-time 2-approximation algorithm \cite{DBLP:journals/dam/ArikatiMZ97,10.1016/0020-0190(94)90121-X}. 
Furthermore, Blumenstock \& Fischer provide an algorithm computing a $\lceil (1+\eps) \alpha \rceil+1$ arboricity decomposition in $\tilde{O}(m\log{n})$ time.
Bannerjee et al.\ \cite{Banerjee} provide an $\tilde{O}(m)$ dynamic algorithm maintaining the exact arboricity $\alpha$ of a dynamic graph, and show a lower bound of $\Omega(\log{n})$ for 
dynamically maintaining arboricity.
Henzinger et al.\ \cite{henzinger2020explicit} provide a dynamic algorithm for maintaining a $2\alpha'$ arboricity decomposition, given access to any black box dynamic $\alpha'$ out-degree orientation algorithm
(See Table \ref{TableA}).
\begin{table}[]
\begin{tabular}{lllll}
Reference           & Forests       & Update time                & Uses Lemma\ \ref{lma:ColourForest} & $\alpha$ \\ \hline
Bannerjee et al.\ \cite{Banerjee}    & $\alpha$      &  $\tilde{O}(m)$                          & No                     &   adaptive                     \\
Brodal \& Fagerberg \cite{Brodal99dynamicrepresentations} & $4(\alpha+1)$ & $O(\alpha+\log{n})$  am.             & Yes                    & fixed                    \\
Henzinger et al.\ \cite{henzinger2020explicit}    & $80\alpha$    & $O(\log^{2}{n})$ am.          & Yes                    & adaptive                    \\ \hline
New (Thm. \ref{thm:ModificationBroFa})    & $2(\alpha+1)$ & $O(\alpha^2\log{n})$    am.           & Uses Lemma \ref{lma:split}                     & fixed                    \\
New (Thm.\ \ref{ArbDecompMain})        & $\alpha+2$    & $O(\operatorname{polylog}{n})$      am.                & No                     & adaptive           
\end{tabular}
\caption{\label{TableA} Overview of dynamic algorithms for maintaining arboricity decompositions. Note that applying Lemma \ref{lma:split} to Theorem \ref{thm:ModificationBroFa} gives an arboricity decomposition, since the orientation in Theorem \ref{thm:ModificationBroFa} is acyclic.}
\vspace{-4mm}
\end{table}

\subparagraph{Other related work:} 
Motivated by the problem of finding a densest subgraph, Sawlani \& Wang \cite{DBLP:conf/stoc/SawlaniW20} gave an (implicit) dynamic approximation algorithm for maintaining a $(1+\eps)\rho$ fractional out-degree orientation, where $\rho$ is the maximum subgraph density. 
In order to tune the parameters in the algorithm, they use multiple (partial) copies of the same graph, where each copy has a different estimate of the maximum density of the graph.

Computing near optimal out-orientations and arboricity decompositions has also been studied from a distributed point of view. Barenboim \& Elkin gave a $(2+\eps)$-approximation in \cite{Elkin}. This has since then been improved to $(1+\eps)$-approximations \cite{Harris2,Harrisetal,suetal:LIPIcs:2020:13093,Ghaffari}.
\section{Preliminaries \& Warm-up}
\label{sec:prelim}
Nash-Williams showed that the arboricity $\alpha$ of a graph $G$ satisfies
$\alpha = \lceil \max \limits_{J \subset G} \frac{|E(J)|}{|V(J)|-1} \rceil$ \cite{https://doi.org/10.1112/jlms/s1-39.1.12}. 
A closely related sparsity measure is the \emph{maximum (subgraph) density} $\rho$ defined as
$\rho = \max \limits_{J \subset G}\frac{|E(J)|}{|V(J)|}$. 
Note that $\alpha$ and $\rho$ are numerically very close. 

\subparagraph{Explicit 2-out orientation of dynamic forests}
We begin by considering the simpler problem of orienting the edges of dynamic forests so as to minimise the maximum out-degree of vertices. 
The edges of a forest can be oriented in such a way that the maximum out-degree of a vertex is 1. 
Indeed, we root every tree in the forest arbitrarily and orient all edges towards the roots. 
It is well-known that we can maintain an implicit representation of such a 1-orientation in a dynamic forest using data structures for maintaining information in a dynamic forest as e.g. top trees~\cite{10.1145/1103963.1103966} or Sleator and Tarjans dynamic trees~\cite{10.1145/800076.802464}.
The representation is implicit in the sense that in order to determine the out-edge of a vertex, the dynamic algorithm has to perform some computations.
This is achieved by maintaining the dynamic forests using for instance top trees~\cite{10.1145/1103963.1103966}. Each top tree is then arbitrarily rooted, and one can determine an out-edge, if it exists, of a vertex $v$ by finding the first edge on the unique $v$-to-root-path.
This solution has worst case query and update time of $O(\log{n})$.

A natural next question is if we can maintain such an orientation explicitly i.e.\ in such a way that we at all times explicitly store which way an edge is oriented, whilst still achieving logarithmic update times. 
In turns out, we cannot dynamically maintain such a 1-out orientation in logarithmic time, not even for a dynamic set of paths. 
Indeed, consider two 1-out oriented paths $P_{1}$, $P_{2}$ both of length $k$. 
$P_{i}$ has exactly one vertex of out-degree 0 that all edges are oriented towards, and this vertex is of distance at least $k/2$ from one of the endpoints of $P_{i}$, say $x_{i}$. 
Consequently, adding the edge $x_{1}x_{2}$ forces the reorientation of at least $k/2$ edges. 
Furthermore, this is repeatable: deleting this edge again, determining new vertices $x_{1},x_{2}$ and adding the edge between them, again forces the algorithm to reorient at least $k/2$ edges. 
Hence, by setting $k = n/2$ we find that there exists an $n$-vertex dynamic set of paths together with a sequence of $O(1)$ moves forcing any algorithm maintaining a 1-out orientation of the paths to perform $\Omega(n)$ reorientations for either an insertion or a deletion. 

So in the explicit version of the problem, we have to settle for a 2-out orientation. In this setting, it is straight-forward to solve the problem for dynamic paths: one can orient edges arbitrarily and every vertex will still have out-degree at most two. 
Note that in particular the endpoints of paths have out-degree at most one. 
By dynamically maintaining a decomposition of a dynamic forest into a set of paths together with a set of edges going between the paths such that every such edge is assigned to a unique endpoint of a path, we can extend the above solution to dynamic forests.
Indeed, we orient each path arbitrarily, and orient the inter-path edges away from the unique endpoint the edges were assigned. 
Since these endpoints are endpoints of paths, no vertex receives out-degree more than two. 

In order to maintain this decomposition dynamically, one can maintain the well-known \emph{heavy-light decomposition} of each tree using dynamic trees~\cite{10.1145/800076.802464}. 
For a rooted-tree $T$, we have the notion of parents and children of the vertices. The \emph{parent} of $v$ is the first vertex from $v$ on the $v$-to-root path in $T$. 
The \emph{children} of $v$ are all neighbours of $v$ that are not the parent of $v$. 
A \emph{heavy child} $w$ of $v$ is then a child of $v$ such that the sub-tree of $T$ rooted a $w$ contains more than half of the vertices of the sub-tree rooted at $v$.
The heavy children in $T$ induces \emph{heavy edges} going from a vertex to its heavy child, and \emph{light edges} going from a vertex to a non-heavy child.
Every root-to-leaf path then contains at most $O(\log n)$ light edges. 
The heavy edges form the desired paths, and the light edges can be assigned to the endpoint that is a child of the other endpoint. 

Sleator and Tarjan~\cite{10.1145/800076.802464} showed how to maintain such a heavy-light decomposition in $O(\log n)$ worst case update time. 
The algorithm maintains a decomposition of the edges into \emph{solid} and \emph{dashed} edges and at the end of each update the solid edges coincide exactly with the heavy edges, and the dashed edges with the light edges.
With $O(1)$ overhead we can orient every solid edge arbitrarily and every dashed edge from the child endpoint to the parent endpoint. 
An update changes at most $O(\log n)$ solid edges to become dashed edges, and so the update time is still $O(\log{n})$.

To argue correctness, we need to argue that at the end of each update every dashed edge is oriented from the child towards the parent. 
Whenever a solid edge is turned into a dashed edge, the orientation is chosen such that this invariant is ensured.
Therefore, the only problematic case might be if a dashed edge has the parent-child relationship of its endpoints switched, without the edge having been a solid edge in the meantime. 
However, the only way a dashed edge $e$ has the parent-child relationship of its endpoints switched, is if the root $r_{1}$ of a tree $T$ is switched to be $r_{2}$ such that $e$ is on the unique path between $r_{1}$ and $r_{2}$ in $T$. 
For the root to be switched to $r_2$, every edge on the path from $r_1$ to $r_2$ is turned solid, hence $e$ is turned solid, and so the problematic case never occurs. 
As such, we have shown:
\begin{lemma}
\label{lma:orienttrees}
There exists a fully-dynamic algorithm maintaining an explicit 2-out orientation of an $n$-vertex dynamic forest with $O(\log{n})$ worst-case update time.
\end{lemma}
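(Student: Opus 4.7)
The plan is to reuse, as a black box, Sleator and Tarjan's dynamic trees maintaining the heavy-light decomposition so that after every update the solid edges coincide with the heavy edges and the dashed edges coincide with the light edges, at $O(\log n)$ worst-case update cost and with only $O(\log n)$ edges changing state per update. On top of this data structure I maintain an explicit orientation: the solid edges on each heavy path are oriented consistently so that the path becomes a directed path (the direction on each maximal heavy path is chosen arbitrarily), and every dashed edge is oriented from its child endpoint to its parent endpoint.

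For correctness, I first verify that this orientation has out-degree at most $2$ at every vertex $v$. Since $v$ has at most one heavy child, $v$ is incident to at most two heavy edges, and because these sit on a single directed heavy path, at most one of them leaves $v$. Among light edges, only the one going to $v$'s parent (if it is itself a light child) is oriented away from $v$, contributing at most one more out-edge. Summing the two contributions gives the claimed bound of $2$.

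For efficiency and consistency I handle each state change produced by the dynamic trees in $O(1)$ extra time: when a dashed edge is turned solid I pick its orientation to extend the new heavy path, and when a solid edge is turned dashed I orient it from its current child towards its current parent. The main obstacle, and the place where one must be careful, is to argue that a dashed edge never silently develops a wrong orientation between updates, since we only act when an edge changes state. Following the discussion preceding the lemma, a dashed edge $e$ can only have the parent-child relation of its endpoints flipped if the tree containing it is rerooted across $e$; but any reroot from $r_1$ to $r_2$ turns every edge of the $r_1$-to-$r_2$ path solid, so $e$ is necessarily re-oriented at that moment. Hence the invariant ``every dashed edge points from child to parent'' is preserved at the end of every update, the explicit orientation is always $2$-out, and since only $O(\log n)$ edges change state per update with $O(1)$ extra work each, the total worst-case update time remains $O(\log n)$.
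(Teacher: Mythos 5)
The one place where your argument deviates from the paper --- and where it develops a genuine gap --- is the invariant that every heavy path is a \emph{directed} path. The paper orients solid edges \emph{arbitrarily} and never needs this stronger property; you introduce it, use it in the out-degree count (``at most one heavy edge leaves $v$''), and then claim it can be maintained with $O(1)$ extra work per state change by ``picking the orientation of a newly solid edge to extend the new heavy path.'' That step does not always go through. When a dashed edge $uv$ (say $u$ the parent) becomes solid, the solid path fragment above $u$ and the solid path $P_v$ hanging below $v$ are both already oriented, and each has the relevant junction vertex as an endpoint. If the fragment above $u$ is directed \emph{towards} $u$ and $P_v$ is directed \emph{towards} $v$, then no orientation of $uv$ yields a directed path; repairing this would mean physically re-orienting one whole fragment, which can have length $\Theta(n)$. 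A concrete way to reach this configuration: take a solid path $v_1\dots v_{2k}$ rooted at $v_1$ and directed $v_1\to\dots\to v_{2k}$; cut $v_kv_{k+1}$; re-root the second half at $v_{2k}$ (no state changes, the fragment is still directed $v_{k+1}\to\dots\to v_{2k}$); link $v_{2k}$ as a child of $v_k$; expose the new leaf. Both fragments now point into the new solid edge, and the invariant cannot be restored cheaply. So your maintenance claim, as written, is false, and since your out-degree argument leans on exactly this invariant, the proof is incomplete.

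The fix is to drop the directed-path requirement and orient solid edges arbitrarily, as the paper does. The out-degree bound $\le 2$ then follows from a different (and independent) pair of observations that you should make explicit: if $v$ has a light out-edge, then $v$'s parent edge is light, so $v$ is the top endpoint of its heavy path and is incident to \emph{at most one} heavy edge; conversely, if $v$ is incident to two heavy edges, then $v$'s parent edge is heavy, so $v$ has \emph{no} light out-edge. Either way the total is at most $2$, with no constraint on how solid edges are oriented. Your treatment of dashed edges (orient child-to-parent when an edge turns dashed, and the reroot argument for why a dashed edge can never silently have its parent--child relation flipped) is correct and matches the paper; only the solid-edge invariant needs to go.
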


\subparagraph{Fractional Out-degree Orientations}
We will obtain a low-bounded out-degree orientation in the general case by deterministically rounding, what we shall refer to as a \emph{fractional out-degree orientation}. 
Here, the orientation problem is relaxed so that the edges are allowed to be assigned partially to each end-point, and the goal is to compute an orientation such that the maximum total load assigned to a vertex is minimized. 
A formal definition is as follows:
\begin{definition}
\label{def:FracArb}
A \emph{fractional $\alpha'$-bounded out-degree orientation} $O$ of a graph $G$ is a pair of variables $X_{e}^{u}$,$X_{e}^{v} \in [0,1]$ for each edge $e=uv \in E(G)$ s.t. the following holds:
\begin{center}
    \begin{enumerate}
        \item $\forall e = uv \in E(G)$: $X_{e}^{u}+X_{e}^{v} = 1$
        \item $\forall v \in V(G)$: $s(v) = \sum \limits_{e:v \in e} X_{e}^{v} \leq \alpha'$
    \end{enumerate}
\end{center}
If furthermore $X_{e}^{u}$,$X_{e}^{v} \in \gamma^{-1} \cdot{} \mathbb{Z}$ for all $e \in E(G)$, we say that $O$ is a \emph{$(\gamma,\alpha')$-orientation}.
\end{definition}
In particular, an $\alpha'$-bounded out-degree orientation is just a $(1,\alpha')$-orientation. 
We think of $s(v)$ as the load on vertex $v$, and $\alpha'$ as an upper bound on the allowed vertex load.
The $\gamma$-parameter underlines the fact that we wish to discretise the fractional loads on edges to rational loads. 
If one does so in a symmetric manner for each edge, one can view a $(\gamma,\alpha')$-orientation of a graph $G$ as a $(1,\gamma \alpha)$-orientation of $G^{\gamma}$, where we define $G^{\gamma}$ to be $G$, where every edge is replaced by $\gamma$ copies. 
For an edge $e = uv \in E(G)$, we denote by $B_e$ the bundle of $\gamma$ edges representing $e$ in $G^{\gamma}$. If $G^{\gamma}$ is oriented, we denote by $B_{e}^{u}$ the bundle of edges oriented $u \rightarrow v$. 
Since the copies of $e$ in $B_{e}$ are identical, we only care about the size of $B_{e}^{u}$, and not which copies of $e$ it contains. Hence:
\begin{observation}
For a graph $G$, there is a natural bijection (up to symmetry) between $(1,\gamma\cdot{}\alpha')$ orientations of $G^{\gamma}$ and $(\gamma,\alpha')$-orientations of $G$.
\end{observation}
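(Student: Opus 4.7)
The plan is to prove the observation by exhibiting both directions of the bijection directly from the definitions, since the observation is essentially an unpacking of notation. The ``up to symmetry'' disclaimer makes the statement easier, because we only need to match orientations of $G^\gamma$ up to the choice of which individual copies in a bundle are oriented which way; that is, we identify orientations of $G^\gamma$ that agree on the sizes $|B_e^u|$ for every edge $e$.

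First, I would go from a $(1,\gamma\alpha')$-orientation of $G^\gamma$ to a $(\gamma,\alpha')$-orientation of $G$. Given such an orientation, for each edge $e = uv \in E(G)$ define $X_e^u = |B_e^u|/\gamma$ and $X_e^v = |B_e^v|/\gamma$. Since $|B_e^u| + |B_e^v| = \gamma$ (the total number of copies of $e$), we get $X_e^u + X_e^v = 1$, and clearly both values lie in $\gamma^{-1}\mathbb{Z} \cap [0,1]$. The load at $v$ in $G$ satisfies $s(v) = \sum_{e : v \in e} X_e^v = \gamma^{-1} \sum_{e : v \in e} |B_e^v| \leq \gamma^{-1}\cdot \gamma\alpha' = \alpha'$, using that the right-hand sum is precisely the out-degree of $v$ in the oriented $G^\gamma$.

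For the reverse direction, take a $(\gamma,\alpha')$-orientation $O$ of $G$. For each $e = uv$, the values $\gamma X_e^u$ and $\gamma X_e^v$ are non-negative integers summing to $\gamma$, so I can orient an arbitrary but fixed choice of $\gamma X_e^u$ of the $\gamma$ copies of $e$ in $B_e$ from $u$ to $v$ and the remaining $\gamma X_e^v$ from $v$ to $u$. The resulting out-degree of $v$ in $G^\gamma$ is $\sum_{e : v \in e} \gamma X_e^v = \gamma \cdot s(v) \leq \gamma \alpha'$, yielding a $(1,\gamma\alpha')$-orientation of $G^\gamma$. Because the copies in $B_e$ are indistinguishable, different choices of ``which'' copies we orient give equivalent orientations, so this map is well-defined up to symmetry.

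Finally, I would observe that the two maps are mutually inverse modulo this symmetry: applying the first construction to the output of the second recovers the original $X_e^u, X_e^v$ values, and applying the second to the output of the first recovers the bundle sizes $|B_e^u|, |B_e^v|$, which is all that matters up to symmetry. There is no real obstacle here; the only mild subtlety is just being careful about the quotient by the symmetric group acting on each bundle, which is why the statement is phrased ``up to symmetry'' in the first place.
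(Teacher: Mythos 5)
Your proof is correct and essentially spells out what the paper leaves implicit: the paper gives no formal proof of this Observation, merely remarking that since the copies in each bundle $B_e$ are identical, only the sizes $|B_e^u|$ matter. Your explicit construction of the two maps via $X_e^u = |B_e^u|/\gamma$ and its inverse, with the routine check that the load constraints transfer, is exactly the formalization that remark is gesturing at.
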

In light of this observation, we shall use these two descriptions interchangeably, and in some cases we shall refer to the same orientation as being both a $(1,\gamma \alpha')$-orientation of $G^{\gamma}$ and a $(\gamma, \alpha')$-orientation of $G$. 
We follow the approaches of Sawlani \& Wang \cite{DBLP:conf/stoc/SawlaniW20} and Kopelowitz et al.\ \cite{kopelowitz2013orienting}, so we repeat the following:
\begin{definition}[\cite{DBLP:conf/stoc/SawlaniW20}]
Given a $(1,\alpha')$-orientation of a graph $G^{\gamma}$, we say that an edge $u\rightarrow v \in E(G)$ is $\eta$-\emph{valid} if 
$ s(u)-s(v) \leq \eta $ and \emph{$\eta$-invalid} otherwise.
If also $s(v)-s(u) \leq \eta$, we say that $e = uv \in G$ is \emph{doubly $\eta$-valid}.
Furthermore, if
$ s(v)-s(u) \leq -\eta/2 $
we say that $e$ is an $\eta$-tight out-edge of $u$ and an $\eta$-tight in-edge of $v$.
\end{definition}
Note that if $u\rightarrow v$ is $\eta$-invalid, then $s(u)-s(v) > \eta$ and so $-\eta > s(v)-s(u)$, so $uv$ is $\eta$-tight.
Also note that $\eta$ will only differ from $1$ in Appendix \ref{apx:schedule}.
\begin{definition}[\cite{DBLP:conf/stoc/SawlaniW20} Def.\ 3.5]
\label{def:chain}
A \emph{maximal $\eta$-tight chain from $v$} is a path of $\eta$-tight edges $v_0v_1, \dots v_{k-1}v_{k}$, such that $v_0 = v$ and $v_k$ has no $\eta$-tight out-edges. 

A \emph{maximal $\eta$-tight chain to $v$} is a path of $\eta$-tight edges $v_0v_1, \dots v_{k-1}v_{k}$, such that $v_k = v$ and $v_0$ has no $\eta$-tight in-edges. 
\end{definition}
\begin{lemma}[Implicit in \cite{DBLP:conf/stoc/SawlaniW20}]
\label{lma:chainflip}
Inserting an $\eta$-valid edge oriented $u \rightarrow v$ and reorienting a maximal $\eta$-tight chain from $u$ will $\eta$-invalidate no $\eta$-valid edges.

Deleting an edge oriented $u\rightarrow v$ and reorienting a maximal $\eta$-tight chain to $u$ will $\eta$-invalidate no $\eta$-valid edges.
\end{lemma}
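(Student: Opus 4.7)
The plan is a local load-accounting argument. Writing $s_0$ for the load before insertion, $s_1$ for the load after inserting $u \to v$ but before reorientation, and $s_2$ for the final load, I would track how $\Delta s(x) := s_2(x) - s_0(x)$ telescopes along the chain. Inserting $u \to v$ contributes $+1$ to $s(u)$ only; flipping each chain edge $v_i v_{i+1}$ contributes $-1$ to $s(v_i)$ and $+1$ to $s(v_{i+1})$. Summing, the $+1$ at $v_0 = u$ from the insertion cancels the $-1$ from flipping $v_0 v_1$; for each interior $v_i$ with $0 < i < k$ the two contributions cancel; and $v_k$ inherits only the final $+1$. Hence $\Delta s(v_k) = +1$ and $\Delta s \equiv 0$ elsewhere.

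Because no load strictly decreases, an edge that was $\eta$-valid can only have its load gap grow if its tail is $v_k$. The candidate edges to verify are therefore (i) the new out-edges of $v_k$ produced by the flip, i.e., the reoriented chain edges $v_{i+1} \to v_i$, and (ii) the pre-existing out-edges $v_k \to w$ outside the chain. For (ii), maximality of the chain says $v_k \to w$ is not $\eta$-tight in $s_1$, so $s_1(v_k) - s_1(w) < \eta/2$; integrality of loads in a $(1,\alpha')$-orientation together with $\eta = 1$ upgrades this to $\leq 0$, so the final load gap is at most $\eta$. For (i), the original tightness $s_1(v_i) - s_1(v_{i+1}) \geq \eta/2$, combined with the local values of $\Delta s$ at $v_i$ and $v_{i+1}$, yields $s_2(v_{i+1}) - s_2(v_i) \leq \eta$ after a short case distinction on whether $i = 0$, $0 < i < k-1$, or $i = k-1$.

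The deletion half is symmetric: deleting $u \to v$ contributes $-1$ to $s(u)$, and flipping the chain $v_0 \to \cdots \to v_k = u$ makes $\Delta s(v_0) = -1$ the unique nonzero change. Now only in-edges $x \to v_0$ can see their load gap grow, and the hypothesis that $v_0$ has no $\eta$-tight in-edge in $s_1$, together with integrality, bounds the new gap by $\eta$; the reoriented chain edges are again dispatched by a short case check analogous to (i).

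The main subtlety is the endpoint bookkeeping at $v_0$ and $v_k$, and in particular the careful verification of the reoriented chain edges themselves — these, unlike edges outside the chain's immediate neighborhood, have both endpoints experiencing shifted loads, so one cannot simply invoke ``loads unchanged.'' Once the interior cancellation in $\Delta s$ is observed and the maximality condition is used to pin down a strict inequality strengthened by integrality of loads, every remaining check is short and essentially algebraic.
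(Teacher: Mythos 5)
Your approach is essentially the same as the paper's: both hinge on the observation that the combined effect of the insertion and the chain flip is a net $+1$ in load at the chain's terminal vertex $v_k$ and zero elsewhere, so only out-edges of $v_k$ in the final orientation can possibly be invalidated, and maximality of the chain rules this out. You are more explicit about the telescoping and about treating the reoriented chain edges as a separate case, whereas the paper dispatches everything as ``out-edges of $w$'' in two sentences.

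There is, however, a genuine gap in your case (ii). You phrase the maximality bound at $s_1$ (loads after the new edge is added but before the flip) and then conclude the final gap is ``at most $\eta$.'' That step implicitly uses $s_2(v_k) - s_2(w) = \bigl(s_1(v_k) - s_1(w)\bigr) + 1$, which holds only when $w \neq u$. If $v_k$ has a pre-existing out-edge $v_k \to u = v_0$ --- which can certainly happen --- then $s_2(u) = s_1(u) - 1$ (your own telescoping shows $\Delta s(v_0) = 0$ while $s_1(u) = s_0(u) + 1$), so $s_2(v_k) - s_2(u) = \bigl(s_1(v_k) - s_1(u)\bigr) + 2 \leq 2$, which for $\eta = 1$ does not certify validity. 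The statement of the lemma is still true for that edge: telescoping the tightness inequalities along the chain gives $s_0(v_k) - s_0(u) \leq -k\eta/2 \leq -\eta/2$, so $s_2(v_k) - s_2(u) \leq 1 - \eta/2 < \eta$. But your bookkeeping as written does not establish this, and you flagged ``endpoint bookkeeping'' as the main subtlety without catching this particular instance of it. The clean fix is to evaluate maximality at $s_0$ --- the state in which the algorithm actually selects the chain and where $s_2$ differs only by the $+1$ at $v_k$ --- which makes the arithmetic uniform in $w$. The deletion half has the symmetric issue for a pre-existing in-edge $u = v_k \to v_0$.

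Two smaller points. First, labelling all reoriented chain edges $v_{i+1}\to v_i$ as ``new out-edges of $v_k$'' is incorrect (only $v_k\to v_{k-1}$ is), though since you check them all anyway this does not harm the argument. Second, you invoke $\eta = 1$ to upgrade the strict inequality; the lemma is stated for general $\eta$ and is used with $\eta\neq 1$ in the scheduling appendix. The same integrality step works for any $\eta\geq 1$: an integer $< \eta/2$ is at most $\lceil \eta/2\rceil - 1 \leq \eta - 1$, so adding $1$ keeps the gap at most $\eta$.
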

\begin{proof}
Inserting the edge increases $s(w)$ by one for some $w$. This will never $\eta$-invalidate any in-edges of $w$, but it might invalidate an out-edge of $w$.
Suppose the edge $w\rightarrow w'$ is invalidated if $s(w)$ increases by 1. 
Then before $s(w)$ increases, we must have $s(w) \geq s(w')+\eta$ and so the edge $ww'$ is tight, contradicting that we reorient a maximal $\eta$-tight chain from $u$. The other statement is similar.
\end{proof}
\begin{remark}
\label{rmk:chainlength}
Note that a maximal $\eta$-tight chain has length at most $\frac{2\cdot{}\max_{v} s(v)}{\eta}$.
Indeed, each time we follow an $\eta$-tight out-edge the load on the vertex increases by at least $\eta/2$.
\end{remark}
If every edge is $\eta$-valid, Sawlani \& Wang say that the orientation is \emph{locally $\eta$-stable}. Kopelowitz et al.\ show the following guarantees for locally $1$-stable orientations,
where we, for ease of notation, define $\Delta^{+} := (1+\eps)\alpha\gamma+\log_{(1+\eps)}{(n)}$:
\begin{lemma}[Implicit in \cite{kopelowitz2013orienting}]
\label{lma:Kop-EtaValid}
If every edge in $G^{\gamma}$ is $1$-valid, then 
$\max_{v} s(v) \leq \Delta^{+}$.
\end{lemma}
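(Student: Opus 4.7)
The plan is to argue by contradiction, using a standard density argument in the spirit of Kopelowitz et al.\ \cite{kopelowitz2013orienting}. I will suppose that some vertex $v^{*}$ has load $s(v^{*})>\Delta^{+}$ and derive a contradiction by exhibiting a geometrically growing chain of vertex subsets. For each integer $k\geq 0$, define the level set
\[
V_{k} \;=\; \bigl\{\, v \in V(G^{\gamma}) : s(v) \geq s(v^{*})-k \,\bigr\}, \qquad n_{k} = |V_{k}|,
\]
so that $v^{*}\in V_{0}$ and $V_{0}\subseteq V_{1}\subseteq\cdots$. The key structural observation is that $1$-validity forces out-neighbours to drop by at most one level: if $u\to w$ is an oriented edge of $G^{\gamma}$ with $u\in V_{k}$, then $s(w)\geq s(u)-1\geq s(v^{*})-(k+1)$, so $w\in V_{k+1}$. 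Hence every edge of $G^{\gamma}$ oriented out of $V_{k}$ is contained in the induced subgraph $G^{\gamma}[V_{k+1}]$.

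The next step combines this observation with the Nash--Williams bound. Since $G$ has arboricity $\alpha$, the graph $G^{\gamma}$ has arboricity $\gamma\alpha$, and every $s$-vertex induced subgraph of $G^{\gamma}$ therefore carries at most $\gamma\alpha s$ edges. Summing loads over $V_{k}$,
\[
(s(v^{*})-k)\,n_{k} \;\leq\; \sum_{u\in V_{k}} s(u) \;\leq\; |E(G^{\gamma}[V_{k+1}])| \;\leq\; \gamma\alpha\, n_{k+1},
\]
where the left inequality is the defining property of $V_{k}$, the middle one is the previous paragraph, and the right one is Nash--Williams. Rearranging yields $n_{k+1}\geq n_{k}\cdot (s(v^{*})-k)/(\gamma\alpha)$, and whenever $s(v^{*})-k>(1+\eps)\gamma\alpha$ the multiplicative factor strictly exceeds $1+\eps$.

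To close the argument I would iterate. Because $s(v^{*})>\Delta^{+}=(1+\eps)\gamma\alpha+\log_{1+\eps}n$, the inequality $s(v^{*})-k>(1+\eps)\gamma\alpha$ holds for every $k$ with $k\leq \log_{1+\eps}n$. Applying the growth bound for $K=\lceil\log_{1+\eps}n\rceil$ consecutive steps starting from $n_{0}\geq 1$ therefore gives $n_{K}>(1+\eps)^{K}\geq n$, which contradicts the trivial bound $n_{K}\leq |V(G)|\leq n$. The only real obstacle here is the containment of out-edges of $V_{k}$ inside $G^{\gamma}[V_{k+1}]$; this is precisely where the $1$-validity hypothesis is used in an essential way, and once it is in place the rest is Nash--Williams plus a geometric series.
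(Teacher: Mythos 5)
Your proof is correct and follows essentially the same route as the paper: assume some vertex exceeds $\Delta^{+}$, grow a geometrically expanding set of high-out-degree vertices, and contradict the Nash--Williams bound on edges in an $n$-vertex graph of arboricity $\gamma\alpha$. The only cosmetic difference is that you index the sets by load level ($V_k=\{v:s(v)\geq s(v^*)-k\}$) and sum loads rather than using the $i$-th directed-distance class from $v^*$ and counting out-edges, but since the distance class is contained in the corresponding load-level set by $1$-validity, the two bookkeeping schemes yield the same chain of inequalities.
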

\begin{proof}
The proof is completely synchronous to that of Theorem 2.2 in \cite{kopelowitz2013orienting}, but we present it here for completeness. 

Assume every edge is $1$-valid and suppose for contradiction that $d^{+}(v) > \Delta^{+}$ in $G^{\gamma}$. Consider the $i$'th distance class of $v$ in $G^{\gamma}$, $V_{i}$, i.e.\ $V_{i}$ is the set of vertices reachable from $v$ via directed paths of length no more than $i$. 
Then for $1 \leq i \leq \log_{(1+\eps)}{n}$ and for every vertex $w \in V_{i}$, we have: 
\[
d^{+}(w) \geq d^{+}(v)-i \geq \Delta^{+}-i \geq (1+\eps)\alpha\gamma
\]
Now, it follows by induction on $i$ that $|V_{i}| \geq (1+\eps)^{i}$ for all $1 \leq i \leq \log_{(1+\eps)}{n}$. 
Indeed, by the Theorem of Nash-Williams~\cite{https://doi.org/10.1112/jlms/s1-39.1.12} we have $\alpha(V_{1}) \leq \gamma \alpha(G)$, and so $|V_{1}| > \Delta^{+}/(\gamma\alpha) \geq  (1+\eps)$.
For the induction step, note that the number of out-edges of vertices in $V_{i-1}$ is at least $|V_{i-1}|\cdot{}(1+\eps)\alpha\gamma$. Now, the induction step follows by applying the Theorem of Nash-Williams \cite{https://doi.org/10.1112/jlms/s1-39.1.12} to get that $\alpha(V_{i}) \leq \alpha(G)\gamma$ and that $|V_{i}|-1 \geq (1+\eps)|V_{i-1}| > (1+\eps)^{i}$. 
\end{proof}

\subparagraph{Implicit orientations}
We are interested in maintaining a fractional out-degree orientation in which the fractional orientation of edges allow us to 'round' the fractional orientation to a low out-degree orientation. 
We are interested in two properties: first of all the maximum load of a vertex should be low, and second of all many of the edges should have either $X_{e}^{u}$ or $X_{e}^{v}$ close to 1, so that a naive rounding strategy does not increase the load of a vertex by much. 
By Lemma \ref{lma:Kop-EtaValid}, if we ensure that the orientation is locally $1$-stable, then we get an upper bound on the maximum vertex load. 
In order to ensure the second property, we redistribute load along cycles without breaking local stability.
Our algorithm has two phases. 
A phase for inserting/deleting edges in a manner that $\eta$-invalidates no edges, thus ensuring the first property, and a second phase for redistributing load along edges in order to ensure that the orientation also has the second property. 
In order for these two phases to work (somewhat) independently, we think of each phase as having implicit access to the orientation; that is the insertion/deletion algorithm might have to pay a query cost in order to identify the precise fractional load of an edge or neighbourhood of a vertex.
\begin{definition}
\label{def:implicitaccess}
An algorithm on an $n$ vertex dynamic and oriented graph has implicit $(|L|,q(n))$ access to an orientation, if it has access to:
\begin{enumerate}
    \item Operations for querying and changing fractional loads of edges in $O(\log{n})$ time.
    \item A query that returns a list containing a superset of all neighbours of a vertex that have changed status as in- or out-neighbour, since the last time the query was called on this vertex. 
    The list should have length $\leq |L|$ and the query should run in $O(q(n))$ time.
\end{enumerate}
\end{definition}

\subparagraph{Implicitly Accessing Orientations}
\label{sct:SWKop}
In this section, we outline how to modify the algorithm of Kopelowitz et al.\ \cite{kopelowitz2013orienting} to run on $G^{\gamma}$ and to support implicit access to the orientation.
We describe the modifications here and give pseudocode Appendix \ref{apx:KopSW}.
The ideas presented here are not new; they arise in \cite{kopelowitz2013orienting} and \cite{DBLP:conf/stoc/SawlaniW20}, but we present them for completeness. 

We think of the algorithm as being run on $G^{\gamma}$ for some $\gamma$ to be specified later. 
We think of each edge $e \in G$ as $\gamma$ copies in $G^{\gamma}$, but in practice we only store $e$ along with counters $|B_{e}^{u}|$, $|B_{e}^{v}|$ denoting the number of copies oriented in each direction.
Now, we wish to run the algorithm from \cite{kopelowitz2013orienting} in order to insert/delete each copy of an edge one-by-one.
This algorithm inserts/deletes a copy of an edge in $G^{\gamma}$ using Lemma \ref{lma:chainflip} with $\eta = 1$.
We identify a tight chain from $u$ by continuously looking at all out-neighbours and following tight out-edges, until the chain becomes maximal. 
We use max-heaps, stored at each vertex, to identify maximally tight chains to $u$.
Since we only have implicit access to the orientation, we have to first process the list of possible changes to in- and out-neighbours before trying to identify the next tight edge.
Furthermore, when we reorient said chains, we have to access the fractional load of each edge on the chain, before we can change it. Hence, we have:
\begin{theorem}[implicit in \cite{kopelowitz2013orienting}]
\label{thm:Kop-main}
Given implicit $(|L|,q(n))$ access to an orientation with $\max_{v} s(v) \leq \Delta^{+}$, there exists an algorithm that can insert and delete edges from the orientation without creating any new $1$-invalid edges.
The algorithm has worst-case insertion time of $O(\gamma\cdot{}\Delta^{+}(\Delta^{+}+\log(n)(|L|+1)+q(n))))$ and a worst case deletion time of $O(\gamma\cdot{}\Delta^{+}(\log(n)(|L|+1)+q(n))))$.
\end{theorem}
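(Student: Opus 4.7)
The plan is to reduce edge updates in $G$ to sequences of $\gamma$ copy-updates in $G^{\gamma}$ and handle each copy-update by a single invocation of Lemma \ref{lma:chainflip} with $\eta=1$. For inserting a single copy of an edge $uv$, I first orient it $u\rightarrow v$ where $s(u)\leq s(v)$ (swapping endpoints otherwise) so that the inserted edge itself is $1$-valid; if $s(u)-1\geq s(v)$ already held before insertion, the edge would have been $1$-invalid, contradicting local stability of the current orientation. I then walk a maximal $1$-tight chain from $u$: at each vertex $w$ on the chain I first pull the implicit-access change list to refresh which neighbours are currently out-neighbours of $w$, and then scan the out-neighbours to find a $1$-tight out-edge; by Lemma \ref{lma:chainflip} reorienting this chain $1$-invalidates no other edge. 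For deletion of a copy $u\rightarrow v$, I symmetrically walk a maximal $1$-tight chain to $u$.

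The key to matching the stated bounds is choosing the right data structure for each direction. For finding a $1$-tight out-edge at $w$, I simply scan the out-neighbours: by Lemma \ref{lma:Kop-EtaValid} the out-degree is bounded by $\Delta^{+}$, so the scan costs $O(\Delta^{+})$ after paying $O(q(n)+|L|\cdot \log n)$ to refresh the implicit neighbour list. For the deletion direction, scanning is too expensive, so I maintain at every vertex a max-heap keyed by $s(\cdot)$ over its current in-neighbours. Peeking the heap and checking against the refreshed change list identifies a $1$-tight in-edge in $O(q(n)+|L|\cdot \log n)$. Each orientation flip on the chain, as well as each status change surfaced in a change list, triggers $O(1)$ heap operations at the two endpoints, each costing $O(\log n)$. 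Chain-reorientations also need to actually read and rewrite the fractional loads on the traversed edges, which by Definition \ref{def:implicitaccess}(1) costs $O(\log n)$ per edge.

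Putting this together, by Remark \ref{rmk:chainlength} each maximal $1$-tight chain has length at most $2\Delta^{+}$. For a single copy-insertion this gives $O(\Delta^{+}\cdot(\Delta^{+}+\log(n)(|L|+1)+q(n)))$, and for a single copy-deletion it gives $O(\Delta^{+}\cdot(\log(n)(|L|+1)+q(n)))$ since heap access replaces the $\Delta^{+}$ scan. Multiplying by the $\gamma$ copies per update yields exactly the bounds in the statement. Correctness follows by induction on the sequence of copy-updates: Lemma \ref{lma:chainflip} guarantees that no operation creates a $1$-invalid edge, so the orientation remains locally $1$-stable throughout and Lemma \ref{lma:Kop-EtaValid} keeps $\max_v s(v)\leq \Delta^{+}$ available as an invariant for the next iteration.

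The main technical obstacle is handling the interface between the algorithm and the implicit access layer: changes made by one phase of the algorithm (for example, load redistribution happening elsewhere) may silently flip in-neighbour/out-neighbour status, so both the heap maintenance and the chain walk must treat the cached neighbour lists as stale and reconcile them through the change-list queries before using them. Once this reconciliation is charged correctly (at most $|L|$ entries per query, $O(\log n)$ heap update per entry), the bookkeeping collapses into the displayed bound, and the rest is a direct translation of the Kopelowitz et al.\ argument from $G$ to $G^{\gamma}$.
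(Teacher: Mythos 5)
Your proposal takes essentially the same approach as the paper's (given in Appendix A via pseudocode and Lemmas \ref{lma:insertKop}, \ref{lma:deleteKop}): reduce to $\gamma$ copy-updates in $G^\gamma$, walk maximal $1$-tight chains using Lemma \ref{lma:chainflip}, scan out-neighbours for insertion versus peek a max-heap over in-neighbours for deletion, and reconcile stale neighbour information through the implicit-access change lists. Your cost accounting (including the $O(\log n)$ per-edge fractional-load access that produces the $(|L|+1)$ factor, and the chain length bound from Remark \ref{rmk:chainlength}) matches the paper's.
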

\begin{proof}
See Appendix \ref{apx:KopSW}.
\end{proof}
\begin{remark}
\label{rmk:fracchange}
Each insertion/deletion of a copy of an edge in $G^{\gamma}$ with $\max_{v} s(v) \leq \Delta^{+}$ changes the load of at most $O(\Delta^{+})$ edges. 
Indeed, we only change the load of edges on tight chains (and potentially one new edge), so the statement follows from Remark \ref{rmk:chainlength}.
\end{remark}

\subparagraph{Scheduling Updates}
Some of our algorithms need upper bounds on the arboricity to provide the ensured guarantees.
This is, however, not as limiting a factor as one might think, if we are willing to settle for implicit algorithms.
In this section we describe how to use the algorithm of Sawlani \& Wang \cite{DBLP:conf/stoc/SawlaniW20} to schedule updates to $O(\log{n})$ different copies of a graph such that each copy satisfies different density constraints. 
Here, we describe the main ideas behind the algorithm, and in Appendix \ref{apx:schedule}, we paraphrase the ideas in more details.

Sawlani \& Wang \cite{DBLP:conf/stoc/SawlaniW20} maintain a fractional out-orientation of a graph $G$ by using an algorithm similar to Theorem \ref{thm:Kop-main} to insert and delete edges in $G^{\gamma}$. 
By allowing $\eta$ to scale with the maximum density $\rho$ of $G$, they are able to make the update time independent of the actual value of $\rho$, provided that they have accurate estimates of $\rho$.
By using $\log{n}$ copies of $G$ - each with different estimates $\rho_{est}$ of $\rho$, they are able to at all times keep the copy where $\rho_{est} \leq \max_{v} s(v) < 2\rho_{est}$ fully updated.
They call this copy the \emph{active} copy. 
Similar to Remark \ref{rmk:chainlength} they observe that one can safely insert an edge $uv$ in a copy where at least one of $s(u)$ or $s(v)$ is below $2\rho_{est}$.
If, however, this is not the case, one cannot afford to update the copy.
Sawlani \& Wang resolve this issue by scheduling the updates so that they are only performed, when we can afford to do them.
We can use this algorithm as a scheduler for our algorithms:
We also run our algorithm on $\log{n}$ copies of $G$. 
Whenever the algorithm from Theorem 1.1 in \cite{DBLP:conf/stoc/SawlaniW20} has fully inserted or deleted an edge in a copy, we insert or delete the edge in our corresponding copy. 
Whenever our algorithm is queried, we then use the structure from the currently active copy to answer the query. Hence, we have:
\begin{theorem}[Implicit in \cite{DBLP:conf/stoc/SawlaniW20} as Theorem 1.1]
\label{thm:schedule}
There exists a fully dynamic algorithm for scheduling updates that at all times maintains a pointer to a fully-updated copy with estimate $\rho_{est}$ where $(1-\eps)\rho_{est}/2\leq \alpha(G)<4\rho_{est}$. 
Furthermore, the updates are scheduled such that a copy $G'$ with estimate $\rho'$ satisfies $\alpha(G')\leq 4\rho'$.
The algorithm has amortised $O(\log^{4}(n)/\eps^{6})$ update times.
\end{theorem}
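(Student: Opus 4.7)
The plan is to implement the scheduler as an overlay on top of the Sawlani--Wang fractional orientation algorithm. I would maintain $O(\log n)$ parallel copies $G_0, G_1, \dots, G_k$ of $G$, where $G_i$ is equipped with a geometric density estimate $\rho_i = 2^i$ and runs the algorithm of Theorem~\ref{thm:Kop-main} with validity parameter $\eta = \Theta(\eps \rho_i)$ scaled to its estimate. By Lemma~\ref{lma:Kop-EtaValid} (applied with this scaled $\eta$), as long as every edge in $G_i^\gamma$ is $\eta$-valid, $\max_v s(v)$ stays within a $(1+\eps)$ factor of the estimate, so the copy honestly represents a bounded-load orientation whenever it is fully updated.

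Define the \emph{active} copy as the smallest index $i^\star$ such that $\alpha(G) < 2\rho_{i^\star}$. For $i \geq i^\star$, updates are cheap: Lemma~\ref{lma:chainflip} says an insertion or deletion followed by a maximal tight-chain reorientation invalidates no $\eta$-valid edges, and the chain length is $O(\rho_i/\eta) = O(1/\eps)$ by Remark~\ref{rmk:chainlength}. For $i < i^\star$, processing updates honestly would blow past the load bound, so I queue such updates at $G_i$. When $\alpha(G)$ grows enough that the active index must advance from $i^\star$ to $i^\star+1$, the queue accumulated at $G_{i^\star+1}$ since the last promotion is drained; this cost is amortised against the insertions that caused $\alpha(G)$ to traverse the band $[\rho_{i^\star}, 2\rho_{i^\star}]$, since crossing a doubling band requires $\Omega(\rho_{i^\star} \cdot n)$ edge insertions by the Nash-Williams formula.

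Putting the accounting together, each real update triggers immediate processing on the at most $O(\log n)$ copies with $i \geq i^\star$ at cost $O(\poly(\log n, 1/\eps))$ per copy by Theorem~\ref{thm:Kop-main}, while the queued work on lower copies is paid for by the band-crossing amortisation just described. The two-sided invariant $(1-\eps)\rho_{\mathrm{est}}/2 \leq \alpha(G) < 4\rho_{\mathrm{est}}$ for the active copy follows because (a) by the choice of $i^\star$ we have $\rho_{i^\star-1} \leq \alpha(G) < 2\rho_{i^\star}$, and (b) the $(1-\eps)$ slack accounts for the scheduler only promoting after the new copy is drained. For non-active $G'$ with estimate $\rho'$, the invariant $\alpha(G') \leq 4\rho'$ is preserved because the scheduler refuses to process any update that would push that copy's load past its band.

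The main obstacle I expect is the careful tuning of $\eta$ as a function of $\rho_i$ and $\eps$: $\eta$ must be small enough for Lemma~\ref{lma:Kop-EtaValid} to yield the desired $(1+\eps)$ slack on $\max_v s(v)$, yet large enough that tight chains stay short and Theorem~\ref{thm:Kop-main}'s per-copy cost absorbs cleanly. Propagating this tuning through the $O(\log n)$ copies and the band-crossing amortisation produces the $\eps^{-6}$ and $\log^4 n$ factors, matching Sawlani--Wang's Theorem~1.1; the full bookkeeping is deferred to Appendix~\ref{apx:schedule}.
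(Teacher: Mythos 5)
Your high-level plan --- maintain $O(\log n)$ geometrically spaced copies, process updates immediately on safe copies, queue on the rest --- matches the paper's strategy, but two of your key steps do not hold. First, the band-crossing amortisation fails. You claim that advancing the active index from $i^\star$ to $i^\star+1$ requires $\Omega(\rho_{i^\star}\cdot n)$ insertions by Nash--Williams, and charge the drained queue work to those insertions. But Nash--Williams gives no such global lower bound: densifying a \emph{small} subgraph suffices to move $\alpha(G)$. Constructing a clique on $O(\rho)$ vertices takes only $O(\rho^2)$ insertions yet pushes $\alpha$ past $2\rho$ regardless of $n$. Your draining scheme is also internally inconsistent: you queue updates only at copies $i<i^\star$, yet propose to drain the queue at $G_{i^\star+1}$ upon promotion --- but $G_{i^\star+1}$ is a higher-index copy, which by your own rule is kept fully updated and has no queue. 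The backlogged copies are the lower-index ones, and they become active only when $\alpha(G)$ \emph{decreases}, i.e.\ on demotion. The paper (following Sawlani--Wang) resolves this with a \emph{deletion-triggered lazy drain}: an edge $uv$ is placed in $pending_i$ only when both $s_i(u)$ and $s_i(v)$ sit at the cap $2\rho_i$, and whenever a deletion decrements some vertex $w$'s load, one pending edge incident to $w$ is reinserted into copy $i$. This guarantees the active copy's pending list is always empty and each edge enters each pending list at most once, yielding an $O(\log n)$-factor overhead with a per-deletion (not per-band) accounting that does not rely on any claimed lower bound on the number of insertions.

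Second, plugging in Theorem~\ref{thm:Kop-main} for per-copy updates will not give an $\alpha$-independent bound: its insertion cost carries a $(\Delta^+)^2$ term from scanning all out-neighbours of each chain vertex, so for dense graphs the update time grows with $\alpha(G)$, defeating the purpose of the copies. The paper uses the modified algorithm of Theorem~\ref{thm:SW-Main}, which scans out-neighbours lazily (a bounded window per increment) and keeps per-vertex chain work at $O(\gamma)=O(\eps^{-2}\log n)$. Relatedly, Lemma~\ref{lma:Kop-EtaValid} is stated for $\eta=1$; once you scale $\eta$ with $\rho_{est}$, the correct guarantee is Theorem~\ref{thm:SW-EtaValid} (Sawlani--Wang's Corollary~3.2), whose $\bigl(1-4\sqrt{\eta\log n/\rho_{est}}\bigr)$ factor is what actually produces the $(1-\eps)$ slack in the theorem statement.
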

\begin{proof}
See Appendix \ref{apx:schedule}.
\end{proof}
\section{Dynamic Maintenance of Refinements}
\label{sec:Frac}
In this section, we work towards the second goal: an orientation where many edges assign most of their load to one endpoint. To realise this, we introduce refinements of fractional orientations and show how to dynamically maintain them. 
The basic idea is to maintain an orientation such that the fractional load of each vertex is small and such that the edges, that distribute their loads somewhat equally between both endpoints, form a forest. 
This property is nice, if we want to transform our orientation to a bounded out-degree orientation, since all edges outside of the forest almost already have decided on an orientation, and we can 2-orient trees using Lemma \ref{lma:orienttrees}. 
Definition \ref{def:refinement} formalises this idea:
\begin{definition}
\label{def:refinement}
Let $O$ be a $(\gamma,\alpha')$-orientation of a graph $G$. Then $H$ is a $(\delta,\mu)$-refinement of $G$ wrt. $O$ if:
\begin{enumerate}
    \item $V(H) = V(G)$
    \item For all $e = uv \in E(G):$ $X_{e}^{u},X_{e}^{v} \in (\delta,1-\delta)$ implies that $e \in H$.
    \item If $e \in H$, then $X_{e}^{u},X_{e}^{v} \in [\delta-\mu,1-\delta+\mu]$
\end{enumerate}
\end{definition}
The basic idea behind our algorithm is to maintain a refinement that is a forest. Whenever a cycle $C$ occurs in the refinement, we can redistribute the fractional loads along the cycle so as to not change $s(v)$ for any $v \in C$, but such that an edge of $C$ does not satisfy condition 2.\ in Definition \ref{def:refinement}. 
\begin{figure}[]%
    \centering
    \subfloat{{\includegraphics[width=8cm]{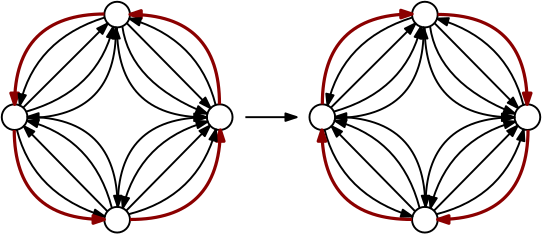} }}%
    \caption{Inverting a cycle. We can reorient copies of edges forming a directed cycle without changing the load of any vertex.}
    \label{fig:augment}%
\end{figure}
Thus we can remove this edge and a again obtain an acyclic refinement with respect to this new orientation. 
Hence, we have the following observation:
\begin{observation}
\label{obs:refinement}
Suppose $1>\delta>\gamma^{-1}+\mu \geq 2\gamma^{-1}>0$.
Let $H$ be $(\delta,\mu)$-refinement of a graph $G$ wrt. some $(\gamma,\alpha')$-orientation $O$ of $G$. Then there exists a $(\delta,\mu)$-refinement of $G$, say $H'$, wrt.\ some $(\gamma,\alpha')$-orientation $O'$ of $G$, such that $H'$ is a forest.
\end{observation}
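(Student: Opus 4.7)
The plan is an induction on $|E(H)|$: starting from $(O,H)$, I repeatedly produce $(O',H')$ such that $H'$ is still a $(\delta,\mu)$-refinement of $G$ with respect to some $(\gamma,\alpha')$-orientation $O'$, but $|E(H')| < |E(H)|$. Once no cycle remains, $H'$ is a forest. So suppose $H$ contains a cycle $C$.

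If some edge $e=uv$ of $C$ has $X_e^u \notin (\delta,1-\delta)$, then Condition 2 of Definition~\ref{def:refinement} does not force $e \in H$, so I simply take $O' := O$ and $H' := H \setminus \{e\}$; Conditions 2 and 3 are inherited from $H$ verbatim, and $|E(H')| = |E(H)| - 1$.

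Otherwise every edge $e_i = v_i v_{i+1}$ on $C = v_0 \cdots v_{k-1} v_0$ has $y_i := X_{e_i}^{v_i} \in (\delta, 1-\delta)$, and in particular $y_i \in \gamma^{-1}\mathbb{Z}$ with $y_i > \delta$. I perform the cycle-flip of Figure~\ref{fig:augment} in $G^{\gamma}$: let $y^* := \min_i y_i$ and $t := \gamma y^* - \lfloor \gamma\delta \rfloor$; since $y^* > \delta$ and $\gamma y^* \in \mathbb{Z}$, I have $t \geq 1$. Now reorient $t$ copies of each $e_i$ from direction $v_i \to v_{i+1}$ to direction $v_{i+1} \to v_i$. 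The flip is valid because $\gamma y_i \geq \gamma y^* > t$ supplies enough copies in the correct direction, and at every cycle vertex $v_j$ the load $s(v_j)$ gains $t/\gamma$ via $e_{j-1}$ and loses $t/\gamma$ via $e_j$, so $s$ is unchanged and the resulting $O'$ is again a $(\gamma,\alpha')$-orientation.

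To finish, I declare $H'$ to contain (i) every non-cycle edge of $H$ (whose $X$-values are unchanged, so its refinement clauses transfer verbatim) and (ii) every cycle edge $e_i$ whose new value $y_i' = y_i - t/\gamma$ still lies in $(\delta, 1-\delta)$. Condition 2 for $H'$ is immediate by construction, and Condition 3 follows from the trivial containment $(\delta,1-\delta) \subseteq [\delta-\mu,1-\delta+\mu]$, which uses only $\mu>0$. Since the minimum-$y$ edge satisfies $y_{i^*}' = \lfloor \gamma\delta \rfloor/\gamma \leq \delta$, it is excluded from $H'$, so $|E(H')| \leq |E(H)| - 1$ and the induction closes. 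The main (mild) obstacle I foresee is guaranteeing $t \geq 1$, i.e.\ that the chosen cycle-flip is non-trivial; this is precisely where I need the strict inequality $y^* > \delta$ supplied by the Case~2 hypothesis combined with the discretisation $y_i \in \gamma^{-1}\mathbb{Z}$ from Definition~\ref{def:FracArb}. The stronger standing assumption $\delta > \gamma^{-1} + \mu \geq 2\gamma^{-1}$ is not actually consumed here; it becomes relevant only when such refinements are maintained dynamically through single-copy flips.
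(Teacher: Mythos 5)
Your proof is correct and follows essentially the same strategy as the paper's: find a cycle in $H$, invert a uniform amount of load around the cycle (which preserves $s(\cdot)$ and hence the orientation constraints), and remove the cycle edge whose $X$-value is driven out of $(\delta,1-\delta)$. The only cosmetic difference is the choice of the flip amount: the paper subtracts $l(C)-\delta+\mu$ (with $l(C)$ taken over both $y_i$ and $1-y_i$, with two sign cases), landing the extremal edge exactly at $\delta-\mu$, while you always subtract $t/\gamma = y^*-\lfloor\gamma\delta\rfloor/\gamma$, landing the minimum-$y$ edge at the largest multiple of $\gamma^{-1}$ not exceeding $\delta$ and then discarding every cycle edge that drops below $\delta$. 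Both choices work; yours is arguably cleaner since it avoids the sign case distinction by exploiting the discretisation $y_i\in\gamma^{-1}\mathbb{Z}$ directly, and your closing remark that the stronger hypothesis $\delta>\gamma^{-1}+\mu\ge 2\gamma^{-1}$ is not really consumed by this static observation is accurate (the paper needs $\mu\ge\gamma^{-1}$ in its variant only because it keeps the extremal edge at $\delta-\mu$ rather than discarding all sub-$\delta$ edges, and the full hypothesis earns its keep in the dynamic maintenance and in Observation~\ref{obs:validity}).
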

\begin{proof}
Suppose $H$ is not one such $(\delta,\mu)$-refinement i.e.\ $H$ is not a forest. Let $C = v_0, v_1, \dots, v_c, v_0$ be a cycle in $H$. 
Set 
\begin{equation*}
    l(C) = \min \limits_{i} \{ \min \{X_{v_{i}v_{i+1}}^{v_i},1-X_{v_{i}v_{i+1}}^{v_i}\} \}
\end{equation*}
Now if $l(C) \leq \delta$, we can just remove the edge minimizing $l(C)$ from $H$, so suppose this is not the case.
If $X^{v_i}_{v_{i}v_{i+1}} = l(C)$ for some $i$, we set $X^{v_j}_{v_{j}v_{j+1}} = X^{v_j}_{v_{j}v_{j+1}}-(l(C)-\delta+\mu)$ and $X^{v_{j+1}}_{v_{j}v_{j+1}} = X^{v_{j+1}}_{v_{j}v_{j+1}}+(l(C)-\delta+\mu)$ for all $j$. 
Otherwise, $1-X^{v_j}_{v_{i}v_{i+1}} = l(C)$ for some $i$, and we set $X^{v_j}_{v_{j}v_{j+1}} = X^{v_j}_{v_{j}v_{j+1}}+(l(C)-\delta+\mu)$ and $X^{v_{j+1}}_{v_{j}v_{j+1}} = X^{v_{j+1}}_{v_{j}v_{j+1}}-(l(C)-\delta+\mu)$  for all $j$.
Observe that this change of the fractional orientations on the cycles does not violate the 2 conditions in Definition \ref{def:FracArb}, so this yields a new $(\gamma,\alpha')$-orientation $\hat{O}$ of $G$. 
Furthermore, $H$ is still a $(\delta,\mu)$-refinement of $G$. 
Indeed, we only change the fractional load of edges in $C$, and for these edges, after the changes, we have: 
\begin{align*}
    X^{v_j}_{v_{j}v_{j+1}} &\geq \delta-\mu \\
    X^{v_j}_{v_{j}v_{j+1}} &\leq  1-\delta+\mu
\end{align*}
Furthermore, for the $i$ minimizing $l(C)$ one of the inequalities holds with equality, allowing us to remove it from $H$ and obtain a new, smaller $(\delta,\mu)$-refinement of $G$ wrt.\ the new orientation. 
Continuing like this eventually turns $H$ in to a forest. 
\end{proof}
Note that if every edge of a graph $G$ is $\eta$-valid, then an edge $e = uv \in G$ can only distribute its load somewhat evenly between $u$ and $v$, if $s(u)$ and $s(v)$ are approximately the same. 
This implies that $e$ is actually doubly $\eta$-valid:
\begin{observation}
\label{obs:validity}
If every edge in $G^{\gamma}$ is $\eta$-valid, then every edge of a $(\delta,\mu)$-refinement $H$ of $G$ with $1>\delta > \gamma^{-1}+\mu \geq 2\gamma^{-1}$ is doubly $\eta$-valid.
\end{observation}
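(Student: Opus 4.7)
The plan is to exploit the fact that the refinement condition forces both $X_e^u$ and $X_e^v$ to be strictly larger than $\gamma^{-1}$, which, since fractional loads lie in $\gamma^{-1}\mathbb{Z}$, forces at least one copy of $e$ to be oriented in each direction in $G^{\gamma}$. The $\eta$-validity hypothesis, applied to those two oppositely oriented copies, then yields the two inequalities needed for double $\eta$-validity.

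First, I would unfold the definitions. Fix an edge $e = uv \in H$. By condition 3 of Definition \ref{def:refinement}, $X_e^u, X_e^v \in [\delta-\mu, 1-\delta+\mu]$. The hypothesis $\delta > \gamma^{-1} + \mu$ rearranges to $\delta - \mu > \gamma^{-1}$, so both $X_e^u > \gamma^{-1}$ and $X_e^v > \gamma^{-1}$. Since $O$ is a $(\gamma,\alpha')$-orientation, $X_e^u$ and $X_e^v$ are multiples of $\gamma^{-1}$, so each is actually at least $2\gamma^{-1}$.

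Next, I translate to the $G^{\gamma}$ picture. Recall $|B_e^u| = \gamma X_e^u$ counts the copies of $e$ in $G^{\gamma}$ oriented $u \to v$, and similarly for $|B_e^v|$. From the previous step, $|B_e^u| \geq 2$ and $|B_e^v| \geq 2$; in particular, there is at least one copy of $e$ oriented $u \to v$ and at least one copy oriented $v \to u$. Applying the hypothesis that every edge in $G^{\gamma}$ is $\eta$-valid to the copy oriented $u \to v$ gives $s(u) - s(v) \leq \eta$, and to the copy oriented $v \to u$ gives $s(v) - s(u) \leq \eta$. Both inequalities together are exactly the definition of $e$ being doubly $\eta$-valid, completing the proof.

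There is no real obstacle here: the whole argument is a short unpacking of the definitions, and the only mildly subtle point is the discretization step (going from the strict inequality $X_e^u > \gamma^{-1}$ to $X_e^u \geq 2\gamma^{-1}$), which depends crucially on $X_e^u$ being a multiple of $\gamma^{-1}$, i.e., on $O$ being genuinely a $(\gamma,\alpha')$-orientation rather than an arbitrary fractional one.
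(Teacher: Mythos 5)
Your proposal is correct and follows essentially the same route as the paper's proof: unpack condition 3 of Definition~\ref{def:refinement}, deduce that both $X_e^u$ and $X_e^v$ are positive so that $|B_e^u|,|B_e^v|\geq 1$, then apply the $\eta$-validity hypothesis to a copy oriented each way. Your extra discretization step showing $|B_e^u|,|B_e^v|\geq 2$ is sound but slightly more than is needed — the paper only uses $X_e^u,X_e^v>0$, which already follows from $\delta-\mu>\gamma^{-1}>0$ and suffices to place one copy in each direction.
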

\begin{proof}
If $e = uv \in H$, then $X_{e}^{u},X_{e}^{v} \in [\delta-\mu,1-\delta+\mu]$ so $X_{e}^{u},X_{e}^{v} > 0$. 
Hence, in $G^{\gamma}$ we have $|B_{e}^{u}|, |B_{e}^{v}| > 0$, and so there is at least one edge oriented $u \rightarrow v$ and at least one edge oriented $v \rightarrow u$ in $G^{\gamma}$. 
Since all edges are $\eta$-valid, the statement follows. 
\end{proof}
Since the redistribution of fractional load of edges along a cycle does not change the load $s(v)$ of any vertex $v$, performing the redistribution from Observation \ref{obs:refinement} $\eta$-invalidates no edges.
\subsection{The algorithm}
As outlined earlier, our algorithm has 2 phases. In the first phase, we will insert and delete edges without $\eta$-invalidating any edges. 
We do this using the algorithm from Section \ref{sct:SWKop}.
In the second phase, we examine all of the edges, whose fractional load was altered in phase 1. 
These edges might need to enter or exit $H$, depending on their new load. 
If such an insertion in $H$ creates a unique cycle, we remove it as described in Observation \ref{obs:refinement}. 
More precisely, the algorithm works as follows:
\begin{enumerate}
    \item Insert (delete) $\gamma$ copies of $e$ into $G^{\gamma}$ one at a time using a phase I algorithm from Section \ref{sct:SWKop}. 
    Whenever a copy of an edge $f \in E(G)$ is reoriented in phase I, we push $f$ onto a stack $Q$. If $e$ is deleted in $G$, we also remove $e$ from $H$.
    \item When all $\gamma$ copies of $e$ are inserted, we set $g = uv = \texttt{pop}(Q)$ and update $H$ as follows until $Q$ is empty:
    \begin{itemize}
        \item If $g \in H$ and $X^{u}_g \in (\delta,1-\delta)$, we update the weight of $g$ in $H$ to match that of $G^{\gamma}$.
        \item If $g \in H$ and $X^{u}_g \notin (\delta,1-\delta)$, we remove $g$ from $H$.
        \item If $g \notin H$ and $X^{u}_g \in (\delta,1-\delta)$, we push $g$ onto a new stack $S$.
        \item If $g \notin H$ and $X^{u}_g \notin (\delta,1-\delta)$, we do nothing.
    \end{itemize}
    \item After processing all of $Q$, $H$ together with the edges in $S$ form a $(\delta,\mu)$-refinement of $G$. We now process each edge $h = uv \in S$ as follows:
    \begin{itemize}
        \item If $u,v$ are not in the same tree in $H$, we insert $h$ into $H$.
        \item Otherwise, $u,v$ are in a unique cycle $C$ in $H$. We update the weights along $C$, locate an edge $wz$ along $C$ with $X_{wz}^{w},X_{wz}^{z} \notin (\delta,1-\delta)$  and remove it from $H$. If $uv \neq wz$, we insert $uv$ into $H$.
        \item Finally, we update $B_{wz}^{w},B_{wz}^{z}$ in $G-H$ to match the weights $wz$ had in $H$.
    \end{itemize}
\end{enumerate}
Since only edges from $Q$ can enter $S$, we have the following Observation:
\begin{observation}
\label{obs:stacks}
Let $S_{\max}$ and $Q_{\max}$ denote the maximum size of the stacks above during an insertion or a deletion. Then we have $S_{\max} \leq Q_{\max} \leq T$, where $T$ is the total number of edges whose fractional orientation are altered during an insertion or a deletion.
\end{observation}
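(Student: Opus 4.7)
The plan is to prove the two inequalities separately by direct bookkeeping of the pushes and pops applied to each stack across the three phases of the algorithm. Since the stacks are touched in a very restricted pattern (pushes happen in one phase, pops in another), the whole argument reduces to tracking when the maxima can be attained.

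First I would establish $Q_{\max} \leq T$. The stack $Q$ is only pushed to in step 1, and each push is triggered by the reorientation of a copy of some edge $f\in E(G)$ during the phase I insertion/deletion of a copy in $G^{\gamma}$. Every such reorientation alters the fractional orientation of $f$, so the total number of pushes onto $Q$ during the whole operation is at most $T$. Since $Q$ is only popped from in step 2 and untouched in step 3, its size is non-decreasing in step 1 and non-increasing thereafter, so $Q_{\max}$ is attained at the end of step 1 and equals the total number of pushes, giving $Q_{\max} \leq T$.

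Next I would argue $S_{\max} \leq Q_{\max}$. The stack $S$ is pushed to only in step 2, and only in the third sub-case, which requires first popping an element from $Q$ in the same iteration. Hence each push onto $S$ during step 2 is paid for by a pop from $Q$ in the same iteration. At the start of step 2 we have $|Q|=Q_{\max}$ and $|S|=0$, so the sum $|Q|+|S|$ starts at $Q_{\max}$ and is non-increasing throughout step 2. In particular $|S| \leq Q_{\max}$ at every moment of step 2. In step 3, $S$ is only popped from and never pushed to, so its size never exceeds its value at the end of step 2. Combining gives $S_{\max} \leq Q_{\max} \leq T$.

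There is essentially no real obstacle here; the argument is a clean amortisation-style bookkeeping. The only care point is noting that the deletion variant behaves identically to the insertion variant with respect to $Q$ and $S$: the additional ``remove $e$ from $H$'' action at the start of step 1 touches neither stack, and steps 2 and 3 are the same in both cases, so the same push/pop accounting applies verbatim.
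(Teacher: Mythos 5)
Your proof is correct and makes explicit the same push/pop bookkeeping that the paper compresses into the single remark ``since only edges from $Q$ can enter $S$'': $Q$ grows only in step~1 by one push per reorientation event, and $S$ grows only in step~2 with each push paid for by a pop from $Q$, so $|Q|+|S|$ is monotone non-increasing after step~1 and the chain of inequalities follows. The only implicit assumption is that $T$ counts reorientation events (with multiplicity, since the same edge $f$ may be pushed onto $Q$ for each copy of $f$ that is reoriented), which is consistent with how the paper subsequently bounds $T$ by $O(\gamma\Delta^{+})$ via Remark~\ref{rmk:fracchange}.
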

Furthermore, Observations \ref{obs:refinement} and \ref{obs:validity} and Theorem \ref{thm:Kop-main} imply the invariants:
\begin{invariant}
\label{inv:valid}
Under the orientation induced by $H$ for edges in $E(H)$ and by $G-H$ for edges in $E(G-H)$, every edge in $E(G)$ is $\eta$-valid.
\end{invariant}
\begin{invariant}
\label{inv:forest}
$H$ is both a $(\delta,\mu)$-refinement and a forest.
\end{invariant}

\subsection{Implementing updates}
Since we maintain the invariant that $H$ is a forest, we can use data structures for maintaining information in fully dynamic forests to store and update $H$:
\begin{lemma}[Implicit in \cite{10.1145/1103963.1103966}] 
\label{lma:toptrees}
Let $F$ be a dynamic forest in which every edge $e = wz$ is assigned a pair of variables $X_{e}^{w}, X_{e}^{z} \in [0,1]$ s.t. $X_{e}^{w} + X_{e}^{z} = 1$. Then there exists a data structure supporting the following operations, all in $O(\log{|F|})$-time:
\begin{itemize}
    \item $link(u,v,X_{uv}^{u}, X_{uv}^{v})$: Add the edge $uv$ to $F$ and set $X_{uv}^{u}, X_{uv}^{v} = 1-X_{uv}^{u}$ as indicated.
    \item $cut(u,v)$: Remove the edge $uv$ from $F$.
    \item $connected(u,v)$: Return $\operatorname{true}$ if $u,v$ are in the same tree, and $\operatorname{false}$ otherwise.
    \item $add\_weight(u,v,x)$: For all edges $wz$ on the path $u\dots wz \dots v$ between $u$ and $v$ in $F$, set $X_{wz}^{w} = X_{wz}^{w}+x$ and $X_{wz}^{z} = X_{wz}^{z}-x$. 
    \item $min\_weight(u,v)$: Return the minimum $X_{wz}^{w}$ s.t.\ $wz$ is on the path $u\dots wz \dots v$ in $F$.
    \item $max\_weight(u,v))$: Return the maximum $X_{wz}^{w}$ s.t.\ $wz$ is on the path $u\dots wz \dots v$ in $F$.
\end{itemize}
\end{lemma}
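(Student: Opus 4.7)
The plan is to invoke top trees~\cite{10.1145/1103963.1103966} as a black box and verify that the information they maintain can be extended to support the weighted operations in the lemma. Recall that a top tree represents the dynamic forest $F$ as a hierarchy of \emph{clusters}; each cluster $C$ is a connected subtree of $F$ with one or two distinguished \emph{boundary vertices} $\partial C$, and clusters are combined by two operations (point-merge and path-merge) producing a balanced binary tree of height $O(\log |F|)$ which supports $\texttt{link}$, $\texttt{cut}$, and $\texttt{connected}$ in $O(\log |F|)$ worst-case time. Hence I only need to explain what user-defined data to store on clusters, how it is combined on merges, and how to push lazy updates.

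For each cluster $C$ with two boundary vertices $a,b$ I will store three numbers: $\mathsf{min}(C)$ and $\mathsf{max}(C)$, the minimum resp.\ maximum of $X_{wz}^{w}$ over edges $wz$ on the unique $a$-to-$b$ path in $C$ where $w$ is the endpoint closer to $a$; and a lazy offset $\mathsf{lazy}(C)$ representing a pending additive shift (positive towards $a$, negative towards $b$) to every edge on this path. Combining two clusters along their shared boundary is then straightforward: $\mathsf{min}$ and $\mathsf{max}$ are the minimum resp.\ maximum of the children's stored values (after offsetting by $\mathsf{lazy}$), and lazy offsets are pushed down to children before any structural modification, exactly in the standard top-tree way. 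The only subtlety is \emph{orientation}: when a cluster is reused with its boundaries swapped (which happens during merges), the roles of the ``$a$-side'' and ``$b$-side'' flip, so $X_{wz}^{w}$ becomes $1 - X_{wz}^{w}$; correspondingly $\mathsf{min}$ and $\mathsf{max}$ swap (and the values are replaced by $1$ minus themselves) and $\mathsf{lazy}$ negates. This is $O(1)$ per cluster touched during a merge.

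With these conventions the six operations are immediate. For $\texttt{link}(u,v,X_{uv}^{u},X_{uv}^{v})$ and $\texttt{cut}(u,v)$ we invoke the top-tree primitive and create/destroy a base cluster with the supplied or existing edge weights; $\texttt{connected}(u,v)$ is inherited from the top tree. For $\texttt{add\_weight}(u,v,x)$, $\texttt{min\_weight}(u,v)$, and $\texttt{max\_weight}(u,v)$, I use the standard $\texttt{expose}(u,v)$ operation of top trees, which in $O(\log |F|)$ time produces a single cluster whose boundary vertices are $u$ and $v$; the answer (or the lazy update) is then read from (or applied to) the root cluster's stored value in $O(1)$.

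The main obstacle, as hinted above, is bookkeeping the orientation flips consistently. Top trees are defined for symmetric cluster data, so the directional quantity $X^{w}_{wz}$ needs an explicit rule for how it transforms when a cluster's two boundaries swap. Once one fixes the convention that swapping the two boundaries of a cluster replaces each stored quantity $y$ by $1-y$ and negates $\mathsf{lazy}$, the merge, push-down, and query routines are routine $O(1)$ manipulations and the $O(\log |F|)$ per-operation bound follows directly from the balanced-height guarantee of top trees.
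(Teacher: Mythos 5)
Your proposal is essentially the same as the paper's own proof: both augment top-tree clusters with min, max, and a lazy additive offset, both obtain $add\_weight$, $min\_weight$, $max\_weight$ by exposing the $u$--$v$ path, and both resolve the directional asymmetry of $X^w_{wz}$ using the transformation $x \mapsto 1-x$ (the paper tracks orientation with an explicit $head(C) \in \partial C$ pointer and applies the flip case by case during merge/split, whereas you normalize stored values eagerly whenever a cluster's boundaries swap, but these are cosmetic variants of the same bookkeeping). The argument is correct and the $O(\log|F|)$ bound follows from the top-tree height guarantee exactly as in the paper.
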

\begin{proof}
See Appendix \ref{apx:Refinements}.
\end{proof}
Note that using non-local search as described in \cite{10.1145/1103963.1103966}, one can also locate the edges of minimum/maximum weight in $O(\log{|F|})$-time. The Lemma also shows that we can process an edge in $Q$ in $O(\log{n})$-time.
\begin{observation}
\label{obs:accEdge}
We can access and change the fractional load of $e \in H$ in time $O(\log{n})$. We can do the same for $e \in G-H$ in $O(1)$ time, since these loads are not stored in top trees.
\end{observation}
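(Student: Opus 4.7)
The plan is to justify the two claims of Observation \ref{obs:accEdge} separately, using Lemma \ref{lma:toptrees} for edges in $H$ and a direct storage argument for edges in $G-H$.

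For edges $e=uv \in H$, I will exploit that by Invariant \ref{inv:forest} the subgraph $H$ is a forest, so its edges are maintained inside the top-tree data structure from Lemma \ref{lma:toptrees}. Reading the current fractional load $X_e^u, X_e^v$ is then just a path query on the single-edge path between $u$ and $v$: calling $min\_weight(u,v)$ (equivalently $max\_weight(u,v)$) returns $X_e^u$ in $O(\log n)$ time, and $X_e^v = 1 - X_e^u$ is then free. To change the load from $(X_e^u,X_e^v)$ to $(X_e^u+x,X_e^v-x)$ we call $add\_weight(u,v,x)$, which by Lemma \ref{lma:toptrees} also runs in $O(\log n)$ time. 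Since $|H|\le n$, this is $O(\log n)$.

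For edges $e=uv \in G-H$, I will argue that we do not route these edges through any top-tree structure at all. Instead we keep each non-refinement edge in the usual doubly-linked adjacency-list representation of $G$, with the two counters $|B_e^u|$ and $|B_e^v|$ (equivalently $X_e^u$ and $X_e^v$) stored as auxiliary fields on the edge record. Given a pointer to $e$, both reading these two counters and overwriting them is a constant-time pointer dereference and assignment, so accessing and changing the fractional load of $e$ takes $O(1)$ time. The fact that $X_e^u+X_e^v=1$ is preserved by symmetric updates (incrementing one counter by $x$ and decrementing the other by the same $x$).

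The only subtlety is ensuring that, as edges move between $H$ and $G-H$ during the algorithm of the previous subsection, we always have the correct representation to query. This is handled by the algorithm itself: when an edge is removed from $H$ via $cut$, we write its current $X$-values (recovered by an $O(\log n)$ query just before the cut) into the adjacency-list record, and when an edge is inserted into $H$ via $link(u,v,X_{uv}^u,X_{uv}^v)$ we pass the currently stored values as arguments. Thus the case analysis is exhaustive, and the two stated bounds hold.
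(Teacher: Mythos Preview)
Your proof is correct and follows the same approach the paper intends: the observation is essentially a direct consequence of Lemma~\ref{lma:toptrees} for edges in $H$ (the forest stored in top trees) and of the fact that the counters $|B_e^u|,|B_e^v|$ for edges in $G-H$ are stored as ordinary fields on the edge record. The paper does not give a separate proof beyond the statement itself, so your expanded justification---using $min\_weight$/$max\_weight$ on the single-edge path to read $X_e^u$ and $add\_weight$ to modify it, plus the bookkeeping remark about transferring values when an edge moves between $H$ and $G-H$---is a faithful and slightly more explicit version of what the paper has in mind.
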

To process an edge $uv \in S$ creating a cycle $C$ in the $(\delta,\mu)$ refinement $H$, we do as follows.
Depending on the argument minimizing $l(C) = \mu + \min \{min\_weight(u,v)-\delta,1-\delta-max\_weight(u,v), X_{uv}^{u}-\delta,1-\delta-X^{v}_{uv}\}$, we either add or subtract $l(C)$ to every edge in $C$. We determine and remove the edge that minimized $l(C)$ from $H$. Thus we can process an edge in $S$ in $O(\log{n})$-time.
Finally, if $\mu > \gamma^{-1}$ then every edge in $S \cup H$ has at least one copy in $G^{\gamma}$ pointing in each direction both before and after the inversion of a cycle. 
Hence, no vertex receives any new in- nor out-neighbours. Since inverting a cycle does not change the load of any vertex, we need not update any priority queues for the insertion/deletion algorithm. Hence, we do not have to return any lists and so $|L| = q(n) = 0$.

\subsection{Conclusions}
\begin{theorem}
\label{thm:RefinementKop}
Suppose $1>\delta>\gamma^{-1}+\mu> 2\gamma^{-1}>0$, $\eps > 0$. 
Then, there exists a dynamic algorithm that maintains a $(\gamma, (1+\eps)\alpha+\gamma^{-1}\log_{(1+\eps)}{n})$-orientation of a dynamic graph $G$ with arboricity $\alpha$ as well as a $(\delta,\mu)$-refinement $H$ of $G$ wrt.\ this orientation such that $H$ is a forest. 
The fractional orientation of an edge can be computed in time $O(\log{n})$, insertion takes worst-case $O(\gamma\cdot{}(\Delta^{+})^2)$ time and deletion takes worst-case $O(\gamma \cdot{}\Delta^{+}\cdot{}\log{}(n))$ time.
\end{theorem}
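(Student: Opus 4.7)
The plan is to verify that the three-phase algorithm described in Section~\ref{sec:Frac} maintains both Invariant~\ref{inv:valid} and Invariant~\ref{inv:forest}, and then separately bound the cost of each phase using the building blocks already established.

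For correctness, I would argue as follows. Phase~1 inserts or deletes the $\gamma$ copies of the edge one at a time using Theorem~\ref{thm:Kop-main}, which preserves $1$-validity by Lemma~\ref{lma:chainflip}; no edge becomes $1$-invalid during this phase. Phases~2 and~3 only redistribute load along cycles of $H$ as in Observation~\ref{obs:refinement}, which does not change $s(v)$ for any vertex and therefore preserves $1$-validity as well. Thus Invariant~\ref{inv:valid} holds, and Lemma~\ref{lma:Kop-EtaValid} gives $\max_v s(v) \leq \Delta^{+}$, which translates to the claimed $(\gamma,(1+\eps)\alpha+\gamma^{-1}\log_{(1+\eps)}n)$-orientation on $G$. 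Phase~3 maintains Invariant~\ref{inv:forest} by construction: whenever an edge of $S$ would close a cycle in $H$, the algorithm executes the redistribution of Observation~\ref{obs:refinement} along that unique cycle, producing an edge whose fractional load leaves $(\delta,1-\delta)$ and can therefore be evicted from $H$, restoring the forest property while keeping $H$ a $(\delta,\mu)$-refinement.

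For complexity, I would first note that Observation~\ref{obs:accEdge} gives the $O(\log n)$ access time to the fractional load of an edge. Then I would apply Theorem~\ref{thm:Kop-main} to Phase~1 with $|L| = q(n) = 0$. This interface is justified by the observation made just before Section~3.3: because $\mu > \gamma^{-1}$, every edge in $H$ has at least one copy pointing each direction both before and after a cycle inversion, so redistributions in Phase~3 create no new in- or out-neighbours at any vertex, and consequently no priority-queue or neighbour-list updates need to be communicated back to Phase~1. Theorem~\ref{thm:Kop-main} therefore yields $O(\gamma\cdot \Delta^{+}(\Delta^{+}+\log n))$ for insertions and $O(\gamma\cdot \Delta^{+}\log n)$ for deletions in Phase~1.

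For Phases~2 and~3, by Observation~\ref{obs:stacks} the total work is proportional to $T$, and by Remark~\ref{rmk:fracchange} applied $\gamma$ times we have $T = O(\gamma\Delta^{+})$. Each stack element is handled in $O(\log n)$ time: the updates to $H$ use \texttt{link}, \texttt{cut}, \texttt{connected}, \texttt{add\_weight}, \texttt{min\_weight}, \texttt{max\_weight} from Lemma~\ref{lma:toptrees}, and each access to an edge in $G-H$ is $O(1)$ by Observation~\ref{obs:accEdge}. Hence Phases~2--3 contribute $O(\gamma\Delta^{+}\log n)$ for both insertions and deletions. Summing and noting $\Delta^{+} \geq \log n$, insertion costs $O(\gamma(\Delta^{+})^2)$ and deletion costs $O(\gamma\Delta^{+}\log n)$, matching the statement.

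The main obstacle in making this argument clean is the decoupling between Phase~1 and Phases~2--3: one must justify that cycle inversions performed implicitly in Phase~3 neither $1$-invalidate any edge nor alter the set of in/out-neighbours of any vertex. Both facts hinge on the hypotheses $\delta > \gamma^{-1} + \mu$ and $\mu > \gamma^{-1}$, which ensure that after a redistribution every remaining edge of $H$ retains copies in both directions in $G^{\gamma}$, so the implicit access interface of Definition~\ref{def:implicitaccess} can be realised with $|L|=q(n)=0$.
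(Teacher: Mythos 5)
Your proposal is correct and follows essentially the same route as the paper's own proof: apply Theorem~\ref{thm:Kop-main} with $|L|=q(n)=0$ (justified by the observation that cycle inversions in $H$ change no load $s(v)$ and, since $\mu>\gamma^{-1}$, create no new in/out-neighbours), bound the repair of $H$ at $O(\gamma\Delta^{+}\log n)$ via Remark~\ref{rmk:fracchange} and Observations~\ref{obs:stacks} and~\ref{obs:accEdge}, and invoke Observation~\ref{obs:validity} together with Invariants~\ref{inv:valid} and~\ref{inv:forest} for correctness. Your write-up simply spells out the justification for the implicit-access interface ($|L|=q(n)=0$) and the final arithmetic ($\Delta^{+}\geq\log n$) more explicitly than the paper does.
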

\begin{proof}
Apply Theorem \ref{thm:Kop-main} for insertion/deletion. 
Note that $|L| = 0$ and $q(n) = 0$.
The time spent repairing $H$ after each insertion/deletion is in $O(\gamma \Delta^{+} \log{n})$ by Remark \ref{rmk:fracchange} and Observations \ref{obs:stacks} and \ref{obs:accEdge}, since we can process an edge from both $Q$ and $S$ in $O(\log{n})$ time. 
Finally, Observation \ref{obs:validity} and the Invariants \ref{inv:valid} and \ref{inv:forest} show correctness of the algorithm.
\end{proof}
Now tuning the parameters of Theorem \ref{thm:RefinementKop}, rounding edges in $G-H$ and 2-orienting $H$ yields Theorem \ref{OutOrientationMain}. We restate Theorem \ref{OutOrientationMain} for convenience:
\begin{theoremS}[Theorem \ref{OutOrientationMain}]
For $1> \eps > 0$, there exists a fully-dynamic algorithm maintaining an explicit $((1+\eps)\alpha+2)$-bounded out-degree orientation with worst-case insertion time $O(\log^{3}{n}\cdot{}\alpha^2/\eps^6)$ and worst-case deletion time $O(\log^{3}{n}\cdot{}\alpha/\eps^4)$.
\end{theoremS}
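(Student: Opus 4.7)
The plan is to invoke Theorem~\ref{thm:RefinementKop} to obtain an implicit $(\gamma,\alpha')$-orientation together with a forest refinement $H$, and then round this fractional orientation into an explicit integral one by exploiting two facts: (i) every edge outside $H$ is already \emph{almost} oriented, since its $X_e$ values lie outside $(\delta,1-\delta)$; and (ii) the edges inside $H$ form a forest, to which we can apply the explicit 2-orientation of Lemma~\ref{lma:orienttrees}. Concretely, I would choose $\varepsilon'' = \Theta(\varepsilon)$, $\delta = \Theta(\varepsilon)$, $\mu = \Theta(\varepsilon)$ and $\gamma = \Theta(\varepsilon^{-2}\log n)$, verify the precondition $1>\delta>\gamma^{-1}+\mu>2\gamma^{-1}>0$ of Theorem~\ref{thm:RefinementKop}, and run that algorithm with these parameters.

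For the rounding I would orient every $e=uv\in E(G)\setminus E(H)$ toward the endpoint holding the larger fractional share: if $X_e^u\ge 1-\delta$ orient $u\to v$, otherwise $v\to u$; condition~2 of Definition~\ref{def:refinement} guarantees this is well-defined. The edges of $H$ are oriented explicitly by feeding the link/cut operations of Lemma~\ref{lma:orienttrees} with every change to $H$. To bound the out-degree of a vertex $v$, note that each $G\setminus H$ out-edge of $v$ contributes at least $1-\delta$ to $s(v)$, so $v$ has at most $\lfloor s(v)/(1-\delta)\rfloor\leq \lfloor\alpha'/(1-\delta)\rfloor$ such out-edges, plus at most $2$ from $H$. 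With the parameters above, $(1+\varepsilon'')/(1-\delta)\leq 1+\varepsilon$ and $\gamma^{-1}\log_{(1+\varepsilon'')}n/(1-\delta)$ is below $1$, so the total out-degree is at most $\lfloor(1+\varepsilon)\alpha\rfloor+2$, which is the claimed bound.

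For the running time, Theorem~\ref{thm:RefinementKop} pays $O(\gamma(\Delta^+)^2)$ per insertion and $O(\gamma\Delta^+\log n)$ per deletion, and by Remark~\ref{rmk:fracchange} only $O(\gamma\Delta^+)$ edges of $G$ have their fractional load altered. Each such change costs $O(1)$ to re-evaluate for a $G\setminus H$ edge, and $O(\log n)$ if the edge enters or leaves $H$ via Lemma~\ref{lma:orienttrees}. Substituting $\gamma=\Theta(\varepsilon^{-2}\log n)$ and $\Delta^+=O(\alpha\log n/\varepsilon^2)$ yields insertion time $O(\alpha^2\log^3 n/\varepsilon^6)$ and deletion time $O(\alpha\log^3 n/\varepsilon^4)$. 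The main obstacle is the parameter tuning: both the multiplicative slack from dividing by $1-\delta$ and the additive slack $\gamma^{-1}\log_{(1+\varepsilon'')}n$ in $\alpha'$ must fit inside the allotted additive budget (absorbed by the $+2$ from the forest), while $\gamma$ must simultaneously be large enough to satisfy $\delta>2\gamma^{-1}$ and small enough to meet the claimed update bound — these competing constraints pin down the exponents of $\varepsilon^{-1}$ essentially uniquely.
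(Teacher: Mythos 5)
Your proposal is correct and follows the paper's own argument essentially verbatim: invoke Theorem~\ref{thm:RefinementKop}, round each $e\in G\setminus H$ toward the endpoint with the smaller load (your prose says ``toward the endpoint holding the larger fractional share'' but the rule you then write, $X_e^u\ge 1-\delta\Rightarrow u\to v$, is the correct one), 2-orient $H$ via Lemma~\ref{lma:orienttrees}, and substitute $\gamma=\Theta(\eps^{-2}\log n)$ and $\Delta^+=O(\alpha\log n/\eps^2)$ into the update bounds. The only deviation from the paper is that you set $\delta,\mu=\Theta(\eps)$ where the paper uses $\delta=2\gamma^{-1}$ and $\mu=\gamma^{-1}=\Theta(\eps^2/\log n)$; both choices satisfy the precondition $1>\delta>\gamma^{-1}+\mu>2\gamma^{-1}>0$ and both allocate enough slack that $\alpha'/(1-\delta)\le(1+\eps)\alpha$ after absorbing the additive $\gamma^{-1}\log_{(1+\eps'')}n$ term using $\alpha\ge 1$, so this is a cosmetic difference. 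One small imprecision: asserting the additive term is ``below $1$'' is not by itself enough to conclude $\lfloor\alpha'/(1-\delta)\rfloor\le\lfloor(1+\eps)\alpha\rfloor$ — you should, as the paper does, absorb it multiplicatively via $\alpha\ge 1$ so that the whole quantity $\alpha'/(1-\delta)$ is bounded by $(1+\eps)\alpha$ before taking the floor.
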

\begin{proof}
Using Theorem \ref{thm:RefinementKop} with $\eps' = \eps/20$, $\gamma = \log{n}/\eps'^2$, $\delta = 2\gamma^{-1}$ and $\mu = \gamma^{-1}$, we can with only $O(1)$ overhead deterministically round any edge in $G-H$ to point away from the vertex to which it assigns the highest load.
This gives an out-degree upper bounded by 
\begin{align*}
    &((1+\eps')\alpha+ \gamma^{-1}\log_{1+\eps'}(n)) /(1-2\gamma^{-1}) \\
& \leq ((1+\eps')\alpha+ \gamma^{-1}\eps'^{-1}\log(n)) \cdot{}(1+4\gamma^{-1}) \\
&\leq \alpha(1+\eps'+\gamma^{-1}\eps'^{-1}\log(n)+4\gamma^{-1}+4\gamma^{-1}\eps'+4\gamma^{-2}\eps'^{-1}\log(n))\\
& \leq \alpha(1+2\eps'+4\eps'^2+4\eps'^{3}+4\eps'^{5}) \leq \alpha(1+\eps)
\end{align*}
where we used the fact that for $r \leq 1/2$, we have $1/(1-r) \leq 1+2r$.
Note that this is an upper bound on the out-degree, so the actual out-degree is at most the floor of this expression.
Finally, we also run the algorithm from Lemma \ref{lma:orienttrees} to 2-orient $H$. This gives at most 2 extra out-edges per vertex, and takes time $O(\log(n)\log(n)/\eps^{-2}\cdot{}\alpha \log{n}) = O(\log^{3}(n)\alpha \eps^{-2})$, since we have at most $O(\gamma\Delta^{+})$ insertions and deletions into $H$ per insertion into $G$, and we can insert each such edge into the rounding scheme on $H$ spending $O(\log(n))$-time. 
\end{proof}
By naively rounding $G-H$ in Theorem \ref{thm:RefinementKop} (for specific values of parameters), and splitting the out-orientation using Lemma \ref{lma:split}, we get an algorithm for dynamically maintaining a decomposition into $\lfloor (1+\eps)\alpha'\rfloor$ pseudoforests and a single forest. 
Applying the colouring techniques described in Section \ref{sct:colouring}, then yields Corollary \ref{cor:ColouringSecondary}. 

\section{Forests}
\label{sec:DynFo}
We begin this section by outlining the main ideas for turning a dynamic low out-orientation into a dynamic low arboricity decomposition. 
Given a dynamic $\alpha'$-bounded out-degree orientation, one can, with very little overhead, split it into $\alpha'$ 1-bounded out-degree orientations using a (slight modification) of an algorithm by Henzinger et al.\ \cite{henzinger2020explicit}. 
Now, given this dynamic pseudoforest partition, we wish to apply the ideas of Blumenstock \& Fischer \cite{blumenstock2019constructive} in order to turn the $\alpha'$ pseudoforests into $\alpha'+1$ forests. 
The main technical challenge of making this process dynamic is the following: the algorithm from \cite{henzinger2020explicit} relies heavily on each vertex having out-degree no more than $1$ in each pseudoforest. 
However, the approach of Blumenstock \& Fischer \cite{blumenstock2019constructive} is to move edges between pseudoforests, showing no regards as to why an edge was placed in a pseudoforest to begin with. 
Hence, if one naively applies this approach on top of the pseudoforest partition, one could potentially ruin the invariant that every vertex has out-degree no more than $1$ in each pseudoforest, causing the algorithm of Henzinger et al.\ \cite{henzinger2020explicit} to fail. 
We tackle this problem in steps. 
First, we show that if we were somehow able to invert the orientations of cycles, then we can make the moves of Blumenstock \& Fischer's approach \emph{faithful} to the degree condition of the pseudoforest algorithm of Henzinger et al.\ \cite{henzinger2020explicit}. 
If we invert orientations along cycles in the pseudoforests, the out-degree of no vertex in the pseudoforests is changed. 
However, if we wish to perform these operations, we will have to do it in a manner that still allows us to maintain the underlying $\alpha'$-bounded out-degree orientation. 
If the cycles are doubly $\eta$-valid, we invert the cycles using Lemma \ref{lma:toptrees}. We do as in Section \ref{sec:Frac}, but this time we add or subtract $1-\delta$ along the cycles. This ensures that every edge on the cycle now prefers the other endpoint, and so is naively rounded to the opposite direction without ending in $H$.
The problem is that we have no guarantee that all edges are doubly $\eta$-valid. 
If an edge is only singly $\eta$-valid, then redistributing the load along a cycle containing this edge causes the edge to become invalid.
However, by Lemma \ref{lma:chainflip}, we can delete such invalid edges and reinsert them again to restore the invariant that all edges are $\eta$-valid.

\subsection{Ideas of Henzinger et al.\ and Blumenstock \& Fischer}
Given an $\alpha'$-bounded out-degree orientation, one can split it into $\alpha'$ pseudoforests by partitioning the edges such that each vertex has out-degree at most one in each partition. 
Then every connected component $P_C$ in a partition is a pseudoforest.
Indeed, $|E(P_C)| \leq |P_{C}|$ since every vertex has out-degree at most one. Hence, there can be at most one cycle in $P_C$. 
This idea is implicit in an algorithm of Henzinger et al.\ \cite{henzinger2020explicit}. Note that we can store each pseudotree as a top tree with one extra edge with only $O(\log{n})$ overhead per operation.
\begin{lemma}[Implicit in \cite{henzinger2020explicit}]
\label{lma:split}
Given black box access to an algorithm maintaining an $\alpha'$-bounded out-degree with update time $T(n)$, there exist an algorithm maintaining an $\alpha'$ pseudo-forest decomposition with update time $O(T(n))$.
\end{lemma}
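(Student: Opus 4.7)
\textbf{Proof plan for Lemma \ref{lma:split}.}

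My plan is to implement the decomposition by colouring each oriented edge with one of $\alpha'$ colours in such a way that, at every vertex $v$, no two out-edges of $v$ share a colour. If this invariant is maintained, then in the subgraph $P_i$ induced by the colour-$i$ edges (with the orientation inherited from the black-box), every vertex has out-degree at most $1$. Hence in any connected component $C$ of $P_i$ we have $|E(C)| \leq |V(C)|$, so $C$ contains at most one cycle and is therefore a pseudoforest, giving the desired $\alpha'$ pseudoforest decomposition.

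For the data structure, at each vertex $v$ I would store a bit-array (or equivalently a stack) indexed by $\{1,\dots,\alpha'\}$ marking which colours are currently used by an out-edge of $v$, plus a pointer (or linked list) to the set of free colours at $v$. Each edge record additionally stores its current colour. I would then interface with the black box as follows: whenever the black box reports the insertion of an out-edge $u\to v$, I pop any free colour at $u$, assign it to that edge, and mark the colour used at $u$; whenever the black box reports a deletion of $u\to v$, I free the edge's colour at $u$; whenever the black box reports a reorientation $u\to v \,\Rightarrow\, v\to u$, I first free the current colour at $u$ and then assign a free colour at $v$. Each such operation takes $O(1)$ time with the stack of free colours.

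The key feasibility point, which is what might look like an obstacle, is that a free colour always exists at the new tail. But this is immediate from the $\alpha'$-bounded guarantee: just \emph{before} the new out-edge is assigned at $v$, the vertex $v$ has out-degree at most $\alpha'-1$ by the black-box invariant, and by our own invariant each colour occurs at most once among $v$'s out-edges, so at most $\alpha'-1$ colours are used and at least one colour is free. Therefore the assignment step never fails, and the invariant is restored after each reported change.

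Finally, I would bound the total work per update. The black-box update costs $T(n)$, and within that time it performs at most $O(T(n))$ edge-reorientations (since each reorientation costs $\Omega(1)$ in the black box). Our overhead is $O(1)$ per reported insertion, deletion, or reorientation, so the total update time is $T(n) + O(T(n)) = O(T(n))$. Correctness is then just the pseudoforest observation in the first paragraph together with the maintained invariant, completing the proof.
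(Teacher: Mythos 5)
Your proof is correct and follows essentially the same strategy as the paper's: split the $\alpha'$-bounded out-orientation into $\alpha'$ parts by ensuring no vertex has two out-edges in the same part, then observe that each part is a pseudoforest. The only difference is in the bookkeeping for which part each out-edge belongs to. The paper maintains the stronger invariant that the out-edges of $v$ occupy a \emph{prefix} $P_0,\dots,P_{d^+(v)-1}$ of the pseudoforests, and whenever a deletion opens a gap it fills the gap by relocating the out-edge currently in the highest-indexed occupied slot; you instead keep a free-colour stack at each vertex and simply pop/push, never filling gaps. Your variant saves one edge move per reorientation and is a little cleaner, while the paper's prefix invariant gives a slightly more canonical assignment; both have $O(1)$ overhead per reported change, so both achieve the stated $O(T(n))$ bound. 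One small point worth being aware of (it applies to both proofs, so it is not an error on your part): the claim that a free colour always exists at the new tail uses the out-degree bound $\alpha'$, which the black box only promises \emph{after} a complete update; if the black box temporarily pushes a vertex to out-degree $\alpha'+1$ mid-update, your fixed set of $\alpha'$ colours would momentarily run out, whereas the paper sidesteps this by conceptually maintaining $n$ pseudoforests of which only a prefix is non-empty. In practice the black-box reorientation algorithms used here process chains so that this overflow never occurs, but if one wants to be scrupulous one should either assume that property of the black box or allow one extra scratch colour.
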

\begin{proof}
See Appendix \ref{apx:Forests}.
\end{proof}
Using the ideas of Blumenstock \& Fischer \cite{blumenstock2019constructive}, we can represent a pseudoforest $P$ by a pair $(F,M)$ s.t. $F$ is a forest and $M$ is matching, by adding exactly one edge from each cycle in $P$ to $M$. 
Similarly, we can represent a partition of $E(G)$ into pseudoforest $(P_{1}, \dots, P_{k})$ by a pair $(F,M)$ s.t.\ $F = \cup F_i$ and $M = \cup M_{i}$ and $(F_i,M_i)$ represents $P_i$ for all $i$. 

In order to ensure the guarantees of Lemma \ref{lma:split}, we need to maintain the invariant that every vertex has out-degree at most one in every pseudoforest. 
If this is the case, we say that the partition is \emph{faithful} to the underlying orientation.
Blumenstock \& Fischer \cite{blumenstock2019constructive} perform operations on $G[M]$ in order to turn it into a forest. They call $G[M]$ the \emph{surplus graph}. 
Some of the operations, they perform, are described in the following Lemma:
\begin{lemma}[Implicit in \cite{blumenstock2019constructive}]
\label{lma:move1}
Let $(F,M)$ be a faithful representation of a pseudoforest partition of a simple graph $G$ equipped with an $\alpha'$-bounded out-degree orientation. 
If $uv \in M_{i}$ and $vw \in M_{j}$ with $i \neq j$, then there exists an $\alpha'$-bounded out-degree orientation with respect to which the partition gained by swapping $P_{i} \leftarrow P_{i}\cup\{vw\}-\{uv\}$ and $P_{j}\leftarrow P_{j}\cup\{uv\}-\{vw\}$ yields a faithful partition, and $uv \in M_{j}$ resp.\ $vw \in M_{i}$ iff. $uv$ resp.\ $vw$ are on the uni-cycle in their new pseudoforests.

Furthermore, if $wx \in M_{i}$ for some $x$, then $vw$ is not on a uni-cycle in $P_{i}$. 
\end{lemma}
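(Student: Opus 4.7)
The plan is to argue in three parts. First I verify that the swapped partition is still a partition into pseudoforests. Second I invoke the standard orientability of pseudoforests to produce a faithful $\alpha'$-bounded out-orientation. Third I locate the cycles of the new pseudoforests, which yields both the iff clause and the furthermore.

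For the structural part, since $uv \in M_i$ the edge $uv$ lies on the unique cycle of some pseudotree component $T_i^{uv}$ of $P_i$, and removing $uv$ turns $T_i^{uv}$ into a tree $T_i'$ while leaving the remaining components of $P_i$ unchanged. Adding $vw$ to $P_i - uv$ either closes a single new cycle (when $w \in T_i'$), or bridges $T_i'$ to a different component $T^w$ of $P_i - uv$ (when $w \notin T_i'$), and in the latter case the merged component inherits only the at-most-one cycle already present in $T^w$. Either way, every component of new $P_i$ has at most one cycle, so new $P_i$ is a pseudoforest; a symmetric argument handles new $P_j$.

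Any pseudoforest admits an orientation with maximum out-degree at most $1$ (orient each cycle consistently around, then orient each remaining tree edge toward the cycle or toward an arbitrary root), so I choose such orientations independently for new $P_i$ and new $P_j$ and keep the old orientation on each $P_k$ with $k \notin \{i,j\}$. Since each vertex now has out-degree at most $1$ in each of the $\alpha'$ pseudoforests, the global orientation is $\alpha'$-bounded and the new partition is faithful with respect to it. I then define new $M_j$ (resp.\ new $M_i$) so that it contains $uv$ (resp.\ $vw$) precisely when this edge lies on the unique cycle of its new pseudotree, which delivers the iff and ensures that each $M$ remains a matching of one edge per cycle.

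For the furthermore, assume $wx \in M_i$. Since $uv$ and $wx$ are both cycle edges in the old pseudoforest $P_i$ and each pseudotree component has only one cycle, $wx$ must lie in a pseudotree $T^{wx}$ distinct from $T_i^{uv}$. Hence $w \in T^{wx}$ and $v \in T_i^{uv}$ lie in different components of $P_i - uv$, placing us in the bridging sub-case of Step 1: adding $vw$ joins $T_i'$ with $T^{wx}$ into a single pseudotree whose unique cycle is the unchanged cycle of $T^{wx}$, so $vw$ is a bridge and not on any cycle. The technical crux is the sub-case split in Step 1, where one has to rule out a merged component ever acquiring two cycles; once that is pinned down, the rest is essentially bookkeeping combined with a direct appeal to the orientability of pseudoforests.
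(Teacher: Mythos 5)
Your proof is correct but takes a genuinely different route for the orientation step. You re-orient the two new pseudoforests $P_i'$, $P_j'$ from scratch (invoking the generic fact that every pseudoforest admits a $1$-bounded out-orientation), which certainly establishes the existence claim in the lemma. The paper instead does a \emph{local} modification of the existing orientation: it reverses at most two cycles (the one in $P_i$ containing $uv$ and the one in $P_j$ containing $vw$) so that both $uv$ and $vw$ become out-edges of $v$, and then simply swaps these two out-edges between $P_i$ and $P_j$. No other vertex's out-degree changes in any $P_k$, so faithfulness is preserved. This distinction matters: the lemma is used algorithmically (see Observation~\ref{obs:performswitch}), and the algorithm needs a transition from the old orientation to the new one that touches only $O(1)$ cycles; your ``re-orient from scratch'' argument does not supply that and would not be implementable within the claimed update bounds. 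For the structural and ``furthermore'' clauses you argue from first principles where the paper defers to Blumenstock--Fischer Lemma~2 -- that is fine, and your component analysis is sound. One small imprecision in the furthermore part: the reason $wx$ and $uv$ lie in different pseudotree components is not merely ``each component has one cycle'' (two cycle edges can share a cycle), but that $M_i$ contains exactly one edge per cycle, so $uv,wx \in M_i$ with $uv \neq wx$ forces them onto different cycles, hence different components.
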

\begin{proof}
See \cite{blumenstock2019constructive} Lemma 2. To see that we can modify the orientation to accommodate the swaps, note that we can always reverse the direction of at most two cycles, without changing the out-degree of any vertex, such that both of the edges swapped are out-edges of $v$. Now swapping the two out-edges ensures that the partition stays faithful to the orientation.
\end{proof}
\begin{figure}[]%
    \centering
    \subfloat{{\includegraphics[width=12cm]{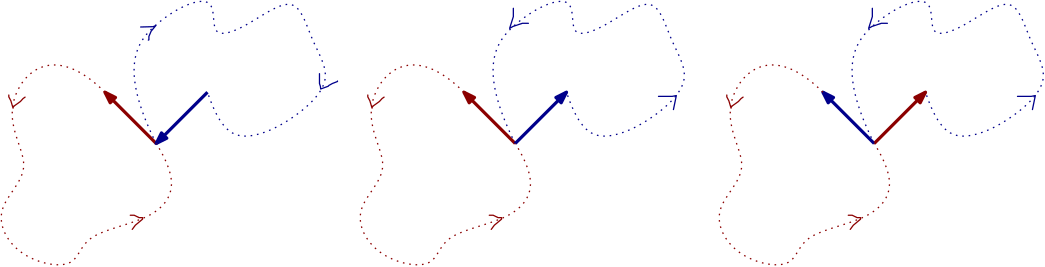} }}%
    \caption{Moving edges between pseudoforests. Before performing the swap, we reorient a cycle so that the swapped edges both are out-edges of their common endpoint}
    \label{fig:switch}%
\end{figure}
Following Blumenstock \& Fischer we note that if $e_1, \dots, e_k$ is a path in a surplus graph $G[M]$ such that $e_1$ and $e_k$ belong to the same matching $M_i$, then we can use the moves from Lemma \ref{lma:move1} to restore colourfulness (see \cite{blumenstock2019constructive} Lemma 3).
The key is that we can move the other edge in $M_i$ towards $e_1$, and then after $O(k)$ switches, we are sure to end up in the furthermore part of Lemma \ref{lma:move1}.
If a surplus graph contains no such paths, Blumenstock \& Fischer say it is a \emph{colourful} surplus graph. They show the following Lemma:
\begin{lemma}[\cite{blumenstock2019constructive}]
\label{lma:move2}
Suppose $J$ is a colourful component of the surplus graph $G[M]$ of a graph $G$. Then for all $v \in J$ there exists an index $i$ s.t. $N_{F_i}(v) \cap J = \emptyset$ and $J \cap M_{i} \neq \emptyset$. 
\end{lemma}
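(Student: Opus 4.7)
The plan is to argue by contradiction. Suppose there is a vertex $v \in V(J)$ with $N_{F_i}(v) \cap J \neq \emptyset$ for every $i \in A := \{i : M_i \cap J \neq \emptyset\}$; I will derive a contradiction by sandwiching three quantities against one another.

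First, I would translate colourfulness into the structural statement that $|A| = |E(J)|$. If some matching $M_i$ had two distinct edges $e, e'$ inside $J$, then since $J$ is a connected component of $G[M]$ there is a path in $G[M]$ joining $e$ to $e'$ whose first and last edges both lie in $M_i$, contradicting the definition of colourfulness. Hence each $M_i$ contributes at most one edge to $J$, so labelling an edge of $J$ by the index of its matching defines a bijection $E(J) \to A$.

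Second, for each $i \in A$ I would pick a neighbour $w_i \in N_{F_i}(v) \cap J$, which exists by assumption. Since the pseudoforests partition $E(G)$ and $G$ is simple, the vertices $w_i$ are pairwise distinct: if $w_i = w_j$ for $i \neq j$, then the single edge $vw_i = vw_j$ would sit in two different pseudoforests, which is impossible. So $v$ has at least $|A|$ distinct forest-neighbours inside $J$. Connectedness of $J$ also gives $|A| = |E(J)| \geq |V(J)| - 1$, whereas simplicity of $G$ gives $|N_G(v) \cap J| \leq |V(J)| - 1$. Chaining
$|V(J)| - 1 \leq |A| \leq |\{w_i : i \in A\}| \leq |N_G(v) \cap J| \leq |V(J)| - 1$
forces equality throughout, so every $G$-neighbour of $v$ inside $J$ is a forest-neighbour and $v$ has no $M$-neighbour in $J$. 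But since $A \neq \emptyset$ we have $|V(J)| \geq 2$, and $J$ is a connected component of $G[M]$, so $v$ must be incident to at least one edge of $M$ inside $J$, a contradiction.

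The main obstacle I anticipate is purely at the definitional level: making precise the implication ``colourful $\Rightarrow$ each $M_i$ has at most one edge in $J$'' from the path-based definition of colourfulness, since this is the one non-trivial use of the hypothesis. After that, the argument is a short pigeonhole between $|A|$, $|N_G(v) \cap J|$, and $|V(J)| - 1$, finished off by the observation that every vertex in a non-trivial connected component of $G[M]$ has at least one surplus neighbour inside that component.
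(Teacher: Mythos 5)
Your proof is correct and is essentially the same pigeonhole argument that Blumenstock and Fischer give (the paper itself only cites them for this lemma): colourfulness gives $|A|=|E(J)|\geq |V(J)|-1$, the $w_i$'s are distinct forest-neighbours of $v$ in $J$, and since $v$ must additionally have at least one matching-neighbour in $J$ (distinct from every $w_i$ because the $F_i$'s and $M_j$'s partition $E(G)$), $v$ ends up with at least $|V(J)|$ distinct neighbours inside $J$ in a simple graph, which is impossible. The only cosmetic difference is that you package the contradiction as ``equality throughout the chain forces $v$ to have no $M$-neighbour,'' whereas the original argument counts $|A|+1 > |V(J)|-1$ directly; the content is identical.
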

These Lemmas motivate the following approach: use Lemma \ref{lma:move1} to ensure that the surplus graph is always colourful. 
Next use Lemma \ref{lma:move1} to remove any cycles from the surplus graph. 

\subsection{Our algorithm for maintaining dynamic arboricity decompositions}
\label{sct:algo}
Assume that we have an upper bound $\alpha_{max}$ on the arboricity throughout the entire update sequence. The algorithm works roughly as follows:
\begin{enumerate}
    \item Run the algorithm from Theorem \ref{thm:RefinementKop}.
    \item Naively round the orientation of each edge in $G-H$. 
    \item Split the rounded out-degree orientation on $G-H$ into pseudoforests.
    \item Whenever an edge enters or moves between pseudoforests, we push it to a queue $R$.
\end{enumerate}
We process each edge in $e \in R$ as follows: Put $e$ into a pseudoforest. If $e$ completes a cycle in a pseudoforest add it to $G[M]$.
When $e$ enters $G[M]$, we determine if it sits in a colourful component. If it doesn't, we apply Lemma \ref{lma:move1} until all components in $G[M]$ are colourful.
If a non-doubly valid cycle is reoriented in this process, we remove the singly-valid edges from the pseudoforests and add them to $R$. This ensures that the two edges from the same matching that we were trying to separate into two different components, are indeed separated. 
We will later bound the total number of edges pushed to $R$.
If, on the other hand, the component is colourful, $e$ may sit in a cyclic component. 
Then we apply Lemma \ref{lma:move2} to remove the unique cycle. This may create a new non-colourful component, which we handle as before.

In the following, we describe the necessary data structures and sub-routines needed to perform these operations. 
Note that in order to show Lemma \ref{lma:surplusinsertion} in Section \ref{sct:operations}, we use ideas from Sections \ref{sct:nondoubly} and \ref{sct:locating}, so the proof of Lemma \ref{lma:surplusinsertion} is deferred to Section \ref{sct:locating}.

\subsection{Operations on the surplus graph}
\label{sct:operations}
In this section, we assume all cycles are doubly $\eta$-valid. 
In section \ref{sct:nondoubly}, we handle cases where this is not the case.
Assuming that $G[M]$ is both colourful and acyclic, we can insert an edge in $G[M]$ and restore these invariants by performing switches according to Lemmas \ref{lma:move1} and \ref{lma:move2}.
Indeed, after inserting an edge, we can run, for example, a DFS on the component in $G[M]$ to determine if it is colourful. 
If it is not, we locate a path $e_1, \dots , e_k$ such that $e_{1},e_{k} \in M_{i}$.
Then we apply Lemma \ref{lma:move1} to $e_{i-1}$ and $e_{i}$ beginning with $i = k$, until an edge from $M_{i}$ is removed from $G[M]$. Note that this is certain to happen when $e_{1}$ and $e_{3}$ belong to the same pseudoforest.
We continue locating and handling paths until the component becomes colourful. 
If the component is colourful, but not acyclic, we choose a vertex $v$ on the cycle and apply Lemma \ref{lma:move2} to determine a pseudoforest represented in the component in which $v$ is connected to no other vertex in the cyclic component. 
Then we determine a path in the surplus graph between an edge in said pseudoforest and $v$. Now we move the edge in this pseudoforest to $v$ using Lemma \ref{lma:move1}. 
If the edge is removed from $G[M]$ or the path is disconnected, we repeat the process. 
When such an edge is incident to $v$, we switch it with an edge on the cycle.
Finally, we replace it in $M$ with the unique other edge incident to $v$ in the pseudotree that put it in $M$.
Now $G[M]$ is acyclic, but it may not be colourful. 
If this it the case, we repeat the arguments above until it becomes colourful. Note that these moves never create a cycle.
\begin{lemma}
\label{lma:surplusinsertion}
After inserting an edge into $G[M]$, we can restore acyclicity and colourfulness in $O(\alpha^{3}\log^{2}(n))$ time.
\end{lemma}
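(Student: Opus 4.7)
The plan is to decompose the cost of responding to an insertion into $G[M]$ into the two corrective subroutines sketched above: (i) restoring colourfulness by iterated applications of Lemma~\ref{lma:move1} along a \emph{bad} path, and (ii) breaking the unique new cycle, if any, by combining Lemma~\ref{lma:move2} with further applications of Lemma~\ref{lma:move1}. The structural observation I would exploit is that a colourful component of the surplus graph has at most $\alpha$ edges, hence $O(\alpha)$ vertices: otherwise some $M_i$ would contain two edges of the component, and the path between them as edges of $G[M]$ would witness non-colourfulness. Since the insertion merges at most two colourful components, the affected component $J$ immediately after the insertion has $O(\alpha)$ vertices and edges, a BFS/DFS over $J$ detects non-colourfulness and, when it occurs, returns a bad path $e_1,\dots,e_k$ with $e_1,e_k\in M_i$ and $k = O(\alpha)$, and any cycle has length $O(\alpha)$.

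For phase (i), I would process a bad path backwards from $e_k$ as indicated in the main text, invoking Lemma~\ref{lma:move1} on consecutive pairs. A single invocation requires reorienting at most two doubly $\eta$-valid pseudoforest cycles so that the two swapped edges both leave the common vertex; by the $add\_weight$ operation of Lemma~\ref{lma:toptrees} a cycle reversal costs $O(\log n)$, and the subsequent swap is two $cut$/$link$ pairs, again $O(\log n)$. A full backwards sweep along the bad path therefore costs $O(\alpha\log n)$ and, using the observation after Lemma~\ref{lma:move1} that whenever $e_1,e_3 \in M_j$ for some $j$ one of the edges must exit $G[M]$, strictly decreases $\sum_i |M_i\cap J|$. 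For phase (ii), once $J$ is colourful I would pick any vertex $v$ on the cycle, spend $O(\alpha^2)$ time scanning pseudoforests to find the index $i$ guaranteed by Lemma~\ref{lma:move2}, transport the $M_i$-representative of $J$ along a surplus path to $v$ using $O(\alpha)$ switches of cost $O(\log n)$ each, swap it with an edge on the cycle, and replace it in $M$ by the other $P_i$-edge at $v$.

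The main obstacle is bounding the interaction of phases (i) and (ii), since a cycle removal in phase (ii) can reintroduce a bad path. I would control this by a potential argument on $\Phi(J) = \sum_i |M_i\cap J|$: every backwards sweep of phase (i) strictly decreases $\Phi$ by at least one, while each cycle-removal step of phase (ii) increases $\Phi$ by at most a constant; combined with the initial $\Phi = O(\alpha)$, this yields $O(\alpha)$ outer iterations. Multiplying the $O(\alpha)$ outer loops by the inner $O(\alpha^2\log n)$ cost, and noting that each $O(\alpha)$-sized traversal and each non-local search inside a top-tree-represented pseudoforest carries an additional $O(\log n)$ factor through Lemma~\ref{lma:toptrees}, gives the claimed bound of $O(\alpha^3\log^2 n)$.
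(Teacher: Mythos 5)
Your overall plan matches the paper's: both bound the affected component at $O(\alpha)$ vertices, both process a bad path backwards with Lemma~\ref{lma:move1}, both use Lemma~\ref{lma:move2} for cycle removal, and both argue $O(\alpha)$ outer iterations. However there is a concrete gap that makes your cost accounting come up one $\log n$ short where it matters, and hand-wave an extra $\log n$ where it does not.

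You price a single Lemma~\ref{lma:move1} switch at $O(\log n)$, taking for granted that the one or two pseudoforest cycles you reverse are \emph{doubly $\eta$-valid} and so can be inverted by a single $add\_weight$ call. That assumption is exactly the difficulty the paper spends Sections~\ref{sct:nondoubly} and~\ref{sct:locating} on, which is why the paper defers this proof until after that machinery. If a cycle is only singly $\eta$-valid, reversing it $\eta$-invalidates edges and breaks Invariant~\ref{inv:valid}; one has to detect this before the reversal, which requires cleaning up to $O(\log n)$ light edges at $O(\log n)$ each and then running a non-local top-tree search (Lemma~\ref{lma:clean}, Observation~\ref{obs:topsearch}). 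This is what Observation~\ref{obs:performswitch} charges and it makes each switch cost $O(\log^2 n)$, not $O(\log n)$. Moreover, when a singly-valid edge is found, the repair (delete and reinsert the offending copies, possibly remove an edge from $G[M]$ mid-switch and re-locate the path) is charged to the global potential of Lemma~\ref{lma:potential}, not to the per-insertion bound you are proving; your argument says nothing about where that cost goes, nor why your potential $\Phi(J)=\sum_i|M_i\cap J|$ is unaffected by these forced removals.

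Because of this, your arithmetic is $O(\alpha)\cdot O(\alpha^2\log n)$ and you patch the remaining $\log n$ with an unsourced ``additional $O(\log n)$ factor through Lemma~\ref{lma:toptrees}.'' But the $O(\alpha^2\log n)$ for locating a Lemma~\ref{lma:move2} index (your ``$O(\alpha^2)$ scan'' actually already needs a top-tree adjacency query per candidate, so it is $O(\alpha^2\log n)$, cf. Observation~\ref{obs:locateswitch}) already includes all the top-tree overhead; there is no further $\log n$ to collect from traversals of the $O(\alpha)$-size surplus component. The missing $\log n$ comes specifically from the $O(\log^2 n)$ per-switch validity check, not from the searches you cite. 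Without the cleanliness/validity-checking argument your proof establishes $O(\alpha^3\log n)$ for the doubly-valid special case but does not cover the general case the lemma is claimed for.
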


\subsection{Recovering neighbours}
\label{sct:recovery}
For each vertex, we will lazily maintain which of its out-edges belong to which pseudoforest. 
This costs only $O(1)$ overhead, when actually moving said edges. 
However, whenever we invert a cycle, these edges may change. 
Since the cycles can be long, we can only afford to update this information lazily, whenever the insertion/deletion algorithm determines the new out-neighbours of a vertex.
When this happens, we say the vertex is \emph{accessed}.
Whenever an edge has its fractional load changed via a cycle inversion, it is always changed by the same amount. Hence, we make the following Observation:
\begin{observation}
\label{obs:newNbrs}
Between two accesses of a vertex $v$, the only possible new in-neighbours are the edges which were out-neighbours at the last access of $v$, and the only new out-neighbours are the vertices that are out-neighbours at the current access of $v$.
\end{observation}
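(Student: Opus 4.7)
The plan is to argue that between two accesses of $v$, the only mechanism capable of altering the orientation of an edge incident to $v$ is a Phase~II cycle inversion, and that each such inversion acts in a perfectly balanced way at $v$: the cycle's in-edge at $v$ becomes an out-edge while its out-edge at $v$ becomes an in-edge.

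First I would rule out Phase~I. The insertion/deletion algorithm of Theorem~\ref{thm:Kop-main} modifies the orientation of an edge either by inserting or deleting it in $G^{\gamma}$, or by reorienting a maximal tight chain. In both cases the implicit-access mechanism requires the algorithm to process every vertex whose in- or out-neighbour set is affected; in particular the chain traversal identifies its next tight edge by reading the current vertex's priority queues, which is precisely an access in the sense of Definition~\ref{def:implicitaccess}. Consequently, if any edge incident to $v$ were touched by Phase~I between the two accesses, $v$ would itself have been accessed, contradicting the hypothesis.

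Second I would analyse a single Phase~II cycle inversion. As explained in Section~\ref{sec:DynFo}, the algorithm shifts the fractional load of every edge on a directed cycle $v_0 \to v_1 \to \cdots \to v_k \to v_0$ in $G^{\gamma}$ by the same amount $\pm(1-\delta)$, tuned so that every edge of the cycle reverses its rounded orientation. If $v$ coincides with some $v_i$ on such a cycle, then exactly one in-edge $v_{i-1}v$ and one out-edge $vv_{i+1}$ of $v$ lie on the cycle, and after the inversion they swap roles: the former out-neighbour $v_{i+1}$ becomes an in-neighbour, while the former in-neighbour $v_{i-1}$ becomes an out-neighbour. Any edge incident to $v$ that lies on no inverted cycle between the two accesses keeps its orientation.

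Combining these two steps yields the claim. Every new in-neighbour was produced by some cycle inversion flipping an edge from out to in and so corresponds to a vertex that already appeared in the out-neighbour list at the last access; every new out-neighbour was produced by flipping an edge from in to out and so belongs to the out-neighbour list reported at the current access. I expect the main obstacle to be the Phase~I step: one must verify that the tight-chain traversal truly accesses every intermediate chain vertex and not merely the endpoints of the inserted or deleted edge. This follows from Definition~\ref{def:implicitaccess}, because locating the next tight in- or out-edge requires invoking the implicit-access query on the current chain vertex, which is the definition of accessing it.
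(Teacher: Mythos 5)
Your proof rules out Phase~I correctly (the tight-chain traversal accesses every vertex on the chain, not just the endpoints of the inserted or deleted edge), and your analysis of a single Phase~II cycle inversion at $v$ --- it swaps the cycle's one pseudoforest in-edge with its one out-edge --- is also correct. These are the same ingredients the paper uses.

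The gap is in combining them into the stated conclusion. You write that a new in-neighbour ``was produced by some cycle inversion flipping an edge from out to in and so corresponds to a vertex that already appeared in the out-neighbour list at the last access.'' But an edge $vw$ may lie on several inverted cycles between the two accesses of $v$, so ``was an out-edge just before the flip that made it an in-edge'' is not automatically ``was an out-edge at the last access.'' The paper closes this by invoking the fact, stated in the sentence immediately preceding the Observation, that every pseudoforest cycle inversion changes the fractional load of an edge by exactly the same amount $1-\delta$. Consequently, an edge $uv$ that is a pseudoforest in-edge of $v$ at both access times has been inverted an even number of times with $+(1-\delta)$ and $-(1-\delta)$ steps occurring equally often, so its fractional load --- and hence its $G^{\gamma}$ in-/out-neighbour status --- is \emph{exactly} unchanged; equivalently, the fractional load cannot drift to an intermediate value, so any edge whose $G^{\gamma}$ status changed must have net-flipped its pseudoforest orientation and therefore appears among the old or new stored out-edges. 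Your ``perfectly balanced'' remark gestures at this, but the write-up never argues about the net effect of all inversions or uses the fixed increment; without it, the claim that new in-neighbours are among the stored out-neighbours at the last access is unsupported.
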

\begin{proof}
Consider an edge oriented $u\rightarrow v$ at two consecutive accesses of $v$. 
Then, the fractional orientation of $uv$ is the same as the last time $v$ was accessed. 
Indeed, the fractional orientation can only change, if either the edge is on a maximally tight chain being reoriented, in which case $v$ was accessed, or if $uv$ is on a cycle being inverted. 
Every time, this happens the orientation of $uv$ changes. 
Hence, $uv$ has been inverted exactly the same amount of times in both directions. Since each inversion changes the fractional orientation by the same amount, it follows that the fractional orientation of $uv$ has not changed. 

The same argument also shows that the edges, whose fractional orientation have changed, are exactly the edges that have had there orientation changed since the last time $v$ was accessed. 
These edges are precisely the old and the new out-edges.
\end{proof}
Thus, we can recover exactly which incident edges might have changed in- and/or out-neighbour status from $v$, since the last time $v$ was accessed by the insertion/deletion algorithm. 
To do so, we maintain that each top tree is rooted in the unique vertex, which has out-degree 0, when the underlying orientation is restricted to the tree.  
This ensures that we can recover v's unique out-edge in a pseudoforest by finding the first edge on the unique $v$-to-root path in the top tree.
We maintain this information as follows:
\begin{itemize}
        \item When we $link(u,v)$ with an edge oriented $u\rightarrow v$, we set the root of the new tree to be that of the tree containing $v$.
        \item When we $cut(u,v)$ with an edge oriented $u\rightarrow v$, we set the root of the tree containing $u$ to be $u$ and that containing $v$ to be the same as the old tree.
        \item When we invert the orientation along a cycle originally oriented $u \rightarrow v\rightarrow \dots \rightarrow u$, we change the root from $v$ to $u$.
        \item When we perform a Lemma \ref{lma:move2} swap, we also update the root accordingly. 
\end{itemize}
Note that each update is accompanied by an operation costing $O(\log{n})$ time, so the overhead for maintaining this information is only $O(1)$.
With this information, we can recover the old out-neighbours as the stored out-neighbours, and the new out-neighbours by taking the first edge on the path from $v$ to the root. Hence, we have shown:
\begin{lemma}
\label{lma:implicitAccess}
We can supply each vertex with a query returning a list $L$ of neighbours which might have their status changed in time $O(\alpha \log{n})$. Furthermore, $|L| = O(\alpha)$.
\end{lemma}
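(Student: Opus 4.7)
The plan is to combine Observation~\ref{obs:newNbrs} with the root-maintenance scheme from Section~\ref{sct:recovery}. Observation~\ref{obs:newNbrs} says that between two consecutive accesses of $v$, the only neighbours that could have changed their in/out-status are the out-neighbours of $v$ at the previous access together with the out-neighbours of $v$ at the current access. So it suffices to report both of these sets.

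For the previous out-neighbours, I would store them explicitly at $v$ and refresh this stored list every time $v$ is accessed. Since we run our algorithm on the orientation produced by Theorem~\ref{thm:RefinementKop} and then split it into pseudoforests via Lemma~\ref{lma:split}, every vertex has at most one out-edge in each pseudoforest, of which there are $O(\alpha)$. Hence the stored list has size $O(\alpha)$ and can be dumped in $O(\alpha)$ time.

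For the current out-neighbours, I would use the invariant maintained in Section~\ref{sct:recovery} that each top tree (storing one pseudoforest component) is rooted at the unique vertex of out-degree $0$ when the orientation is restricted to that tree. Then $v$'s unique out-edge in that pseudoforest, if it exists, is the first edge on the $v$-to-root path, which top trees can retrieve via non-local search in $O(\log n)$ time (Lemma~\ref{lma:toptrees}). Iterating over the $O(\alpha)$ pseudoforests to which $v$ belongs yields $O(\alpha)$ queries at $O(\log n)$ each, giving $O(\alpha \log n)$ total time and a list of size $O(\alpha)$.

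Concatenating the two lists gives the required $L$ with $|L| = O(\alpha)$ and total query time $O(\alpha \log n)$. The only delicate point is that the four bullet points in Section~\ref{sct:recovery} really do maintain the rooting invariant through links, cuts, cycle inversions, and Lemma~\ref{lma:move2} swaps; this is the main thing one must verify, but each bullet performs the root update alongside an $O(\log n)$ top-tree operation so the overhead is $O(1)$ per update, and the invariant is preserved case by case.
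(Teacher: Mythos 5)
Your proposal is correct and reconstructs essentially the same argument the paper intends: invoke Observation~\ref{obs:newNbrs} to reduce the task to reporting the out-neighbours at the previous access and at the current access, dump the $O(\alpha)$ stored old out-neighbours in $O(\alpha)$ time, and recover the current out-neighbours via the root-maintenance invariant (one $O(\log n)$ top-tree query per pseudoforest, $O(\alpha)$ pseudoforests). The only small imprecision is that Lemma~\ref{lma:toptrees} as stated does not list a ``first edge on the $v$-to-root path'' operation — that capability comes from the generic top-tree interface (an $expose$ followed by reading off the appropriate boundary edge), not from the specific operations in Lemma~\ref{lma:toptrees} — but this does not affect the correctness or the claimed $O(\alpha \log n)$ bound.
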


\subsection{Non doubly $\eta$-valid cycles}
\label{sct:nondoubly}
If a cycle is not doubly $\eta$-invalid, we still switch the orientation as before, but now we have to fix invalid edges.
Assuming we know which edges have become $\eta$-invalidated, we fix them as follows:
For every invalid edge, we first remove the edge from the pseudoforest it resides in. This has two consequences. Firstly, the algorithm from Lemma \ref{lma:split} might move $O(1)$ edges between pseudoforests, and secondly, we also have to move an edge from the surplus graph back down as a normal edge in the pseudoforest it comes from.
All of the (re)moved edges are pushed to the queue $R$. 
Then, we delete all invalid copies of edges in $G^{\gamma}$, and reinsert them. 
Now, all edges are valid again, and so we continue processing edges in $R$ as described in Section \ref{sct:algo}. If an edge now belongs to $H$, we do not insert it into any pseudoforest.

It is important to note that the second consequence i.e.\ that we remove an edge from $G[M]$, either makes a Lemma \ref{lma:move1} switch successful by removing one of the edges from $M_{i}$, or it removes an edge on one of the at most two paths between edges in $M_{i}$. In this case, we try to locate a second path, and handle it as before. This happens at most once: the component has at most one cycle, and hence at most two paths between two vertices. 
We ascribe the cost of deleting and reinserting invalid edges to the potential in Lemma \ref{lma:potential} that bounds the total number of copies of edges that are inserted into $G^{\gamma}$. 
This cost is not ascribed to the algorithm maintaining $G[M]$. Set $\Delta_{max}^{+} = (1+\eps)\alpha_{max}\gamma + \log_{(1+\eps)}{n}$, we have:
\begin{lemma}
\label{lma:potential}
The total amount of insertions and deletions performed by the insertion/deletion algorithm over the entire update sequence can be upper bounded by 
\begin{equation*}
    O(\frac{\gamma\Delta_{max}^{+}}{\eta}(\Delta_{max}^{+}\cdot{}i+d))
\end{equation*}
\end{lemma}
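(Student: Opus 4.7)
The plan is to split the calls into \emph{direct} calls arising from the $i$ graph-insertions and $d$ graph-deletions (these contribute exactly $\gamma i$ insertions and $\gamma d$ deletions, since each graph update triggers $\gamma$ copy updates in $G^{\gamma}$), and \emph{indirect} calls, namely the delete-reinsert pairs of Section~\ref{sct:nondoubly} triggered when a pseudoforest cycle inversion flips a singly $\eta$-valid edge into an $\eta$-invalid orientation. Bounding the lemma then reduces to bounding the total number of invalidation events over the update sequence.

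For a single cycle inversion, the number of copies that need to be deleted and reinserted is at most the number of currently tight edges on the inverted cycle. Because a maximal tight chain has length at most $2\Delta_{max}^{+}/\eta$ by Remark~\ref{rmk:chainlength}, and a cycle cannot consist entirely of tight edges (the $s$-values would have to strictly decrease all the way around), each inversion invalidates at most $O(\Delta_{max}^{+}/\eta)$ copies. I would then bound the number of cycle inversions triggered per direct copy operation: by Remark~\ref{rmk:fracchange} each direct copy-update perturbs only $O(\Delta_{max}^{+})$ fractional loads, each perturbed edge enters the queue $R$ and triggers at most $O(1)$ Lemma~\ref{lma:move1}/\ref{lma:move2} moves, and each such move performs $O(1)$ cycle inversions. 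Multiplying $O(\Delta_{max}^{+})$ cycle inversions by $O(\Delta_{max}^{+}/\eta)$ invalidations per inversion gives $O((\Delta_{max}^{+})^{2}/\eta)$ indirect calls per direct copy operation; together with the $\gamma$ copies per graph update, this yields the insertion term $O(\gamma(\Delta_{max}^{+})^{2}i/\eta)$. A deletion is handled by the same Kop-style routine but, since it lowers rather than raises a vertex-load, does not trigger an upward chain of pseudoforest moves to the same extent, yielding only the bare $O(\gamma \Delta_{max}^{+} d/\eta)$ term.

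The main obstacle will be controlling the cascade: every indirect delete-reinsert pair is itself a call to the Kop-style insertion/deletion algorithm, which performs a fresh tight chain reorientation and may in turn perturb new edges and trigger further pseudoforest moves. To close the argument I would introduce a potential function $\Phi$ that (i) is invariant under the load-preserving cycle inversions in the pseudoforests, (ii) increases by at most $O(\Delta_{max}^{+}/\eta)$ per direct copy operation, and (iii) strictly decreases by $\Omega(1)$ per successful delete-reinsert repair of an $\eta$-invalid copy; a natural candidate is $\Phi = c \sum_{v} s(v)^{2}/\eta$ for a suitable constant $c$, and the crucial verification is that reinserting an invalid copy with $s(v)-s(u)>\eta$ reroutes load to a strictly lower-loaded endpoint and thus decreases $\Phi$. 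An amortised telescoping of $\Phi$ across all direct and indirect calls then bounds the indirect contribution and yields the claimed total.
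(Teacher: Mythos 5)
Your proposal is correct and uses the same potential-function amortization as the paper's proof: bound the total increase in potential from direct copy operations, show each delete-and-reinsert repair of an $\eta$-invalid copy drops the potential by a fixed amount, and telescope. You choose a different potential, $\Phi = c\sum_v s(v)^2/\eta$, where the paper uses $\Phi = \sum_{v:\,s(v)\geq 1}(\Delta_{max}^{+}-s(v))^2$; both satisfy your three required properties, and in both cases the crucial verification is the one you identify, namely that the combined tight-chain reorientations of a repair move one unit of load from a vertex of load $\delta$ to a vertex of load at most $\delta-\eta-1$, so $\Phi$ drops by $\Omega(\eta)$ (i.e.\ $\Omega(1)$ after your $c/\eta$ scaling). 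Your potential is in fact slightly sharper: a direct copy-insertion increments some $s(w)$, so $\sum_v s(v)^2$ rises by $2s(w)+1=O(\Delta_{max}^{+})$ rather than the paper's worst-case $O((\Delta_{max}^{+})^2)$ (incurred when $s(w)$ first crosses $0$ to $1$), and a direct copy-deletion raises it by only $O(1)$; this yields the tighter total $O\bigl(\tfrac{\gamma}{\eta}(\Delta_{max}^{+}\,i+d)\bigr)$, which of course implies the stated bound. Two caveats about your first paragraph: the direct-count route is indeed defeated by the cascade, as you note, but it is also already wrong in claiming that each perturbed edge triggers only $O(1)$ Lemma~\ref{lma:move1}/\ref{lma:move2} moves -- a single surplus-graph insertion can require $\Theta(\alpha^3\log^2 n)$ work by Lemma~\ref{lma:surplusinsertion}; your pivot to the potential argument is the right move and is where the actual proof lives.
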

\begin{proof}
We use the following potential to bound the number of insertions/deletions performed:
\begin{equation*}
    \Phi(O_{i}) = \sum \limits_{v: s(v) \geq 1} (\Delta_{max}^{+}-s(v))^2
\end{equation*}
Note that by Lemma \ref{lma:Kop-EtaValid} $\max s(v) \leq \Delta^{+} \leq \Delta_{max}^{+}$, so an insertion or a deletion of a single copy of an edge in $G^{\gamma}$ increases the potential by $O((\Delta_{max}^{+})^2)$ and $O(\Delta_{max}^{+})$ respectively, since each operation increments/decrements the load of at most one vertex each.
Hence, we have that the total sum of $\Phi$ increases over the entire update sequence, $\Delta(\Phi)$, satisfies:
\[\Delta(\Phi) \leq O(i\cdot{}\gamma(\Delta_{max}^{+})^2 + d\cdot{}\gamma\Delta_{max}^{+})\]
Deleting an $\eta$-invalid edge and reinserting it again, decreases the out-degree of a vertex of degree, say $\delta$, and increases the out-degree of a vertex of degree $\leq \delta-\eta-1$. 
Hence, for the low degree vertex the potential decreases by at least
\[(\Delta_{max}^{+}-\delta+\eta+1)^2-(\Delta_{max}^{+}-\delta+\eta)^2 = 1+2\cdot{(\Delta_{max}^{+}-\delta+\eta})\]
For the high-degree vertex the potential is increased by at most: 
\[(\Delta_{max}^{+}-\delta+1)^2-(\Delta_{max}^{+}-\delta)^2 = 1+2\cdot{(\Delta_{max}^{+}-\delta})\]
Hence, in total the potential decreases by $2\eta$ units.
Therefore, we can reorient at most $O(\frac{\gamma\Delta_{max}^{+}}{\eta}(\Delta_{max}^{+}\cdot{}i+d))$ $\eta$-invalid edges. Each reorientation takes two updates - one insertion and one deletion.
Thus, we arrive at the stated bound.
\end{proof}
Lemma \ref{lma:potential} allows us to bound the total number of edges moved between pseudoforests:
\begin{lemma}
\label{lma:pseudoMoves}
We move at most $O(\frac{\gamma(\Delta_{max}^{+})^3}{\eta}(i+d))$ edges between pseudoforests.
\end{lemma}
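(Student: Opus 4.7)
The plan is to combine Lemma \ref{lma:potential} with Remark \ref{rmk:fracchange} in a two-step accounting: the former bounds how many single-copy operations can happen in $G^{\gamma}$ over the entire update sequence, and the latter bounds how many edges of $G$ can have their fractional orientation changed per such operation. Only an edge whose fractional orientation changes can actually migrate between pseudoforests, so the product of the two bounds upper-bounds the total number of pseudoforest moves.

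First I would invoke Lemma \ref{lma:potential} to bound the total number of single-copy insertions/deletions the Kopelowitz-style insertion/deletion algorithm performs on $G^{\gamma}$ across the whole update sequence by $O\!\left(\tfrac{\gamma\Delta_{max}^{+}}{\eta}(\Delta_{max}^{+}\cdot i+d)\right)$. A crucial point is that this quantity is already designed to absorb the extra work caused by reorienting non-doubly-valid cycles in Section~\ref{sct:nondoubly}, because those fixes are implemented precisely as delete-and-reinsert of the offending copies in $G^{\gamma}$, which is exactly what the potential argument charges for.

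Next I would apply Remark \ref{rmk:fracchange} to upper bound by $O(\Delta_{max}^{+})$ the number of edges of $G$ whose fractional orientation is altered during each single-copy operation (these are the edges on the maximal tight chain, plus possibly the inserted/deleted edge itself). Pseudoforest membership is determined by the net rounded direction, so an edge can only leave one pseudoforest and join another when its fractional orientation, and hence its rounded direction, changes; since cycle inversions inside $H$ preserve this net direction for every edge of $H$, they contribute no additional moves. Multiplying, the total number of pseudoforest moves is at most
\[ O(\Delta_{max}^{+})\cdot O\!\left(\tfrac{\gamma\Delta_{max}^{+}}{\eta}(\Delta_{max}^{+}\cdot i+d)\right) = O\!\left(\tfrac{\gamma(\Delta_{max}^{+})^{3}}{\eta}i + \tfrac{\gamma(\Delta_{max}^{+})^{2}}{\eta}d\right), \]
and bounding the second term by the first (at the cost of a constant factor) collapses this to the claimed $O\!\left(\tfrac{\gamma(\Delta_{max}^{+})^{3}}{\eta}(i+d)\right)$.

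The main obstacle I expect is verifying that every downstream pseudoforest move really is captured by this two-step count. In particular, when a non-doubly-valid cycle is inverted, Section~\ref{sct:nondoubly} explicitly re-pushes the newly invalidated edges, an adjacent surplus-graph edge, and the $O(1)$ edges possibly shuffled by Lemma \ref{lma:split} back onto the queue $R$. The careful step in the write-up will be to argue that each of these re-pushed edges is charged either to the $G^{\gamma}$ operation that created its invalidity (counted by Lemma \ref{lma:potential}) or to a tight-chain edge whose reorientation is already counted by Remark \ref{rmk:fracchange}, so no additional term enters the final bound.
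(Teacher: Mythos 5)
Your proof is correct and follows essentially the same route as the paper: you bound the total number of single-copy operations on $G^{\gamma}$ by Lemma~\ref{lma:potential}, bound the number of edges whose fractional load changes per operation by Remark~\ref{rmk:fracchange}, and observe that the Lemma~\ref{lma:split} bookkeeping only shuffles $O(1)$ additional edges per triggering change, so the product gives the stated bound (after trivially absorbing the $(\Delta_{max}^{+})^{2}d$ term into the $(\Delta_{max}^{+})^{3}$ coefficient). The only slightly loose phrasing is the aside about ``cycle inversions inside $H$'': what you really need is that cycle inversions in the \emph{pseudoforests} (Observation~\ref{obs:performswitch}) preserve out-degree, hence pseudoforest membership, while inversions in the refinement forest $H$ either keep an edge in $H$ (and so in no pseudoforest) or eject it while processing $S$, in which case the resulting placement is already charged to the fractional-load change that put the edge on $Q$; none of this affects your final count, so the argument stands.
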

\begin{proof}
Each time an edge has its load changed by the insertion/deletion algorithm, it might cause an edge to have its orientation changed, and hence end up in a new pseudo forest. 
This might cause the algorithm from Lemma \ref{lma:split} to move $O(1)$ other edges between pseudoforests, and consequently these edges may also end up in the surplus graph.
Since each insertion/deletion changes the fractional orientation of at most $\Delta^{+}$ edges, the Lemma follows from Lemma \ref{lma:potential}.
\end{proof}
Note that this implies that the total no.\ of insertions into $R$ is  $O(\frac{\gamma(\Delta_{max}^{+})^3}{\eta}(i+d))$.
 
\subsection{Locating singly $\eta$-valid edges}
\label{sct:locating}
When we are accessing an edge, we can check if it is doubly $\eta$-valid or not (this information depends only on the load on the endpoints), and maintain this information in a dynamic forest using just 1 bit of information per edge. 
This allows us to later locate these edges using non-local searches in top trees.
However, when edges go between being singly $\eta$-valid and doubly $\eta$-valid through operations not accessing said edge, we are not able to maintain this information.
This can happen in two ways: either 1) a vertex has its load lowered causing an in-going edge to now become doubly $\eta$-valid or an outgoing edge to become singly $\eta$-valid or 2) a vertex has its load increased causing similar issues. 
We say an edge is \emph{clean} if we updated the validity bit of an edge, the last time the out-degree of an endpoint of the edge was altered. Otherwise, we say it is \emph{dirty}. 
Now if all edges on a cycle are clean, we can use top trees to direct searches for the edges that become invalidated by inverting the cycle.

We maintain a heavy-light decomposition of every forest using dynamic st-trees \cite{10.1145/800076.802464} to help us ensure that we can clean all edges in a cycle in time $O(\log^2{n})$. 
The idea is to maintain the invariant that all heavy edges are clean. 
Now we can clean a cycle by cleaning the at most $O(\log{n})$ light edges on said cycle.
In order to realise this invariant, whenever the degree of a vertex is changed, we need to update all of its incident heavy edges in all of the heavy-light decompositions.
Since a vertex is incident to at most 2 heavy edges in each forest, we have to update $O(\alpha)$ heavy edges. 
The following holds:
\begin{observation}
\label{obs:topsearch}
We can locate singly $\eta$-valid edges on a clean cycle in time $O(\log{n})$ per edge, if we spend $O(\log{n})$ overhead updating the bit indicating double validity.
\end{observation}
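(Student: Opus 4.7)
The plan is to augment each top tree (and the auxiliary single non-tree edge of each pseudotree) with one extra Boolean field per cluster recording whether the cluster contains any edge currently flagged as singly $\eta$-valid. The field is computed by OR-combination of the fields on the two child clusters (and on the boundary edge when relevant), so it fits into the standard top-tree combine step at $O(1)$ cost per cluster. Flipping the bit on a single edge therefore triggers a recomputation along the $O(\log n)$ ancestor clusters, which is exactly the promised $O(\log n)$ overhead whenever we update the validity bit of an edge.

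To locate singly $\eta$-valid edges along a clean cycle, I would first expose the cycle as a path cluster in the top tree (using an \texttt{expose} of the two endpoints of the distinguished non-tree edge of the pseudotree, plus that extra edge). I then perform a standard non-local search on the root cluster of this path: at each level, descend into whichever child cluster has its aggregated bit set to \emph{true}, terminating when the descent reaches a base cluster representing a single edge, which must itself be singly $\eta$-valid. The depth of the top tree is $O(\log n)$, so one edge is located per $O(\log n)$ work. After locating an edge, toggling its bit (to exclude it from further searches, or because the subsequent repair logic changed its status) propagates in $O(\log n)$ time as above, and the next search can begin. Thus each located edge costs $O(\log n)$.

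Correctness follows from the cleanliness hypothesis on the cycle: by definition of clean, every edge of the cycle has a validity bit that agrees with its true doubly/singly-valid status, so the aggregated bits along the top-tree hierarchy restricted to the cycle correctly indicate the presence or absence of singly-valid edges. Consequently the non-local search never descends into a cluster with no singly-valid edge, and always returns an edge that is truly singly $\eta$-valid.

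The only genuinely delicate point is the first one: the aggregate bit must behave well under top-tree rotations and merges, which is automatic because OR of two Booleans is associative and commutative and can be recomputed in $O(1)$ per cluster. Everything else is a direct instantiation of Sleator--Tarjan / top-tree non-local search on an augmented attribute, giving the claimed $O(\log n)$ per located edge together with the $O(\log n)$ maintenance overhead per bit flip.
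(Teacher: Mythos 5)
Your proposal is correct and follows essentially the same approach as the paper: the paper's (one-line) proof also augments the top trees with the validity bit and invokes non-local search to find singly $\eta$-valid edges in $O(\log n)$ each, which is exactly the OR-aggregated Boolean field plus directed descent you describe. One small point worth being careful about (though it does not change the conclusion): since you want to search only along the cycle — i.e., the cluster path exposed between the endpoints of the non-tree edge — the Boolean field should be a path-cluster aggregate (OR of validity bits of edges \emph{on the cluster path}, analogous to \texttt{min\_weight}/\texttt{max\_weight} in Lemma~\ref{lma:toptrees}) rather than an OR over all edges spanned by the cluster; otherwise the non-local search could be misled by singly valid edges hanging off the path. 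With that reading, your argument matches the paper's.
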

\begin{proof}
With $O(\log{n})$ overhead, we can update a top tree with this information, so that we can locate singly valid edges in time $O(\log{n})$ per singly valid edge using non-local searches \cite{10.1145/1103963.1103966}.
\end{proof}
\begin{observation}
\label{obs:heavylight}
We can insert and delete edges in the heavy-light decomposition in worst case $O(\log^2{n})$ time.
\end{observation}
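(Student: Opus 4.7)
The plan is to split the cost of a link or cut in the heavy-light decomposition into two parts: (i) the cost of the underlying dynamic-tree operation, and (ii) the cost of updating auxiliary data for every edge whose heavy/light status flips. For (i), I would store each tree of the forest as a Sleator--Tarjan dynamic st-tree~\cite{10.1145/800076.802464}, so that a single link or cut runs in $O(\log n)$ worst case. Because the heavy-light decomposition is induced by subtree sizes, only edges on the root-path of the affected vertex can change status; along any such path the number of light edges is $O(\log n)$, since each light edge at least doubles the subtree size as one moves toward the root. Hence at most $O(\log n)$ edges flip their heavy/light status per link or cut, and this bound holds in the worst case and not just amortized.

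For each edge that flips from light to heavy I would re-clean it, so as to restore the invariant that all heavy edges are clean. Since the load $s(v)$ is stored directly at each vertex, recomputing the validity bit from the two endpoints costs $O(1)$; writing the new bit into the top tree of Lemma~\ref{lma:toptrees}, so that Observation~\ref{obs:topsearch} can later locate singly $\eta$-valid edges via non-local search, costs $O(\log n)$. An edge flipping from heavy to light requires no immediate work, since dirty light edges are explicitly allowed by the invariant.

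Adding the $O(\log n)$ cost of the st-tree link/cut to $O(\log n)$ flips at $O(\log n)$ each gives the claimed $O(\log^{2} n)$ worst-case bound. The main obstacle is the worst-case (as opposed to amortized) bound on the number of heavy/light flips per link/cut, which is handled by the subtree-size path argument above; every other step is a direct $O(\log n)$ operation in one of the two existing dynamic-tree data structures.
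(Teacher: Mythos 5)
Your proof is correct and follows the same approach as the paper's (which is essentially a two-line argument): maintain the decomposition with Sleator--Tarjan st-trees, observe that a link/cut changes $O(\log n)$ heavy/light labels, and pay the $O(\log n)$ cleaning overhead of Observation~\ref{obs:topsearch} per flipped edge. One small imprecision: it is not quite true that only edges \emph{on} the root-path of the affected vertex can change status---when a vertex $a$ on that path gains or loses subtree weight, a heavy edge from $a$ to an \emph{off-path} child $w$ can flip to light (or vice versa), and it is these sibling edges, rather than the on-path edges themselves, that can go from heavy to light after a link. The $O(\log n)$ bound still holds, since each such off-path flip is paired with an on-path flip at the same vertex and the number of light edges on any root-to-node path is $O(\log n)$ both before and after the update, but the bookkeeping should account for those sibling edges as well.
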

\begin{proof}
We maintain the heavy-light decomposition using dynamic st-trees \cite{10.1145/800076.802464}. 
An insertion (or a deletion) creates at most $O(\log{n})$ new heavy edges. By Observation \ref{obs:topsearch}, we have to pay $O(\log{n})$ overhead to clean each edge.
\end{proof}
\begin{lemma}
\label{lma:clean}
We can check if a cycle is doubly valid in time $O(\log^2{n})$.
\end{lemma}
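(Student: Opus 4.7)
The plan is to exploit the heavy-light invariant that all heavy edges are always clean, so that the dirtiness we must fix along the cycle lives only on the $O(\log n)$ light edges of the cycle. The cycle in question sits inside some pseudoforest, which we store as a top tree plus the single non-tree edge that closes the unique cycle; hence the cycle is just the tree-path between the two endpoints of the closing edge, together with that edge itself.

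First I would use the heavy-light decomposition (maintained by Sleator--Tarjan dynamic trees as promised before Observation~\ref{obs:heavylight}) to enumerate the light edges along this tree-path between the two endpoints of the non-tree edge. Since any root-to-leaf path contains at most $O(\log n)$ light edges, the tree-path, being the concatenation of two such partial paths via their lowest common ancestor, contains $O(\log n)$ light edges in total. The closing non-tree edge is a single additional edge we must also inspect.

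Next I would clean each of these $O(\log n)$ light edges one by one: for each such edge, read off the current loads of its two endpoints, decide whether it is currently doubly $\eta$-valid, and write the (possibly new) validity bit into the top tree representing the pseudoforest. By Observation~\ref{obs:topsearch} each such bit-update costs $O(\log n)$. After this pass, every edge on the cycle is clean (the heavy ones by invariant, the light ones just cleaned), and the total work so far is $O(\log^2 n)$.

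At this point the top tree stores, for every edge of the cycle, a correct validity bit. I would then run a single non-local search on the top tree over the $u$-to-$v$ path (plus a constant-time check on the closing edge) that asks ``does any edge on the path carry the singly-valid bit?''; by Lemma~\ref{lma:toptrees} together with Observation~\ref{obs:topsearch} this search costs $O(\log n)$. If the answer is no, and the closing edge is doubly valid, the cycle is doubly valid; otherwise it is not. The dominant cost is the cleaning step, giving the claimed $O(\log^2 n)$ bound. The only subtle point, which I would verify carefully, is that the heavy-light invariant truly leaves only the light edges to be cleaned; this follows because Observation~\ref{obs:heavylight} pays $O(\log n)$ overhead per load change precisely to keep heavy edges clean under load changes happening elsewhere.
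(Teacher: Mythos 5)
Your proposal is correct and follows the same strategy as the paper: exploit the invariant that heavy edges are clean, walk up to the $O(\log n)$ light edges on the tree-path, clean each one in $O(\log n)$ via Observation~\ref{obs:topsearch}, and then run a single non-local top-tree search to test for any singly valid edge. Your explicit handling of the non-tree closing edge is a small detail the paper leaves implicit, but otherwise the arguments coincide.
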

\begin{proof}
The algorithm maintains the invariant that all heavy edges are clean, hence the only dirty edges are light edges. 
There are at most $\log{n}$ such edges on any root-to-leaf path, and so at most $2\log{n}$ such edges on the path between any two vertices in the tree. 
We can locate each light edge in time $O(1)$ per edge. 
Indeed, we begin at $u$ resp.\ $v$ and chase pointers to the end of heavy paths on the way to the root. 
Now we clean each light edge in time $O(\log{n})$ by Observation \ref{obs:topsearch}.
Finally, we can search for a singly valid edge in time $O(\log{n})$ using Observation \ref{obs:topsearch}. If no edge is returned, we conclude that the cycle is doubly valid.
\end{proof}
Finally, we can show Lemma \ref{lma:surplusinsertion} (we recall it below):
\begin{lemmaS}[Lemma \ref{lma:surplusinsertion}]
After inserting an edge into $G[M]$, we can restore acyclicity and colourfulness in $O(\alpha^{3}\log^{2}(n))$ time.
\end{lemmaS}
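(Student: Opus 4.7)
The plan is to follow the high-level strategy announced in Section \ref{sct:operations}: after the insertion, alternate between (a) a local search that detects a monochromatic $M_i$--$M_i$ path inside the affected surplus component and repairs it with Lemma \ref{lma:move1} swaps, and (b) if the component is colourful but still contains a cycle, invoking Lemma \ref{lma:move2} to find a pseudoforest $P_i$ whose tree in the component avoids the cycle, then walking an $M_i$-edge to the cycle via further Lemma \ref{lma:move1} swaps and using one final swap to break the cycle. The Blumenstock--Fischer analysis bounds the number of edges in a colourful component by $O(\alpha^2)$, so each DFS for locating an $M_i$--$M_i$ path and each monochromatic walk has length $O(\alpha^2)$; together with the cycle-breaking phase this means only $O(\alpha^2)$ Lemma \ref{lma:move1} swaps are triggered by one insertion.

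Each Lemma \ref{lma:move1} swap at a common endpoint $v$ consists of (i) orienting the two swapped edges to be out-edges of $v$, which requires inverting at most two directed cycles inside the two affected pseudotrees, and (ii) performing $O(1)$ link/cut operations on $F$ and $M$. I would realise (i) via the \texttt{add\_weight} primitive on top trees from Lemma \ref{lma:toptrees} combined with the cleaning routine of Lemma \ref{lma:clean}, which together cost $O(\log^2 n)$; step (ii) costs $O(\log^2 n)$ by Observation \ref{obs:heavylight}. Summed over the $O(\alpha^2)$ swaps this gives $O(\alpha^2 \log^2 n)$, and the extra $\alpha$ factor in the stated bound absorbs the $O(\alpha)$-sized neighbour-list traversals at each swap vertex (Lemma \ref{lma:implicitAccess}) needed to keep the per-pseudoforest out-degree at most one after the swap.

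The hard part will be dealing with cycles that turn out to be only singly $\eta$-valid, because inverting such a cycle can $\eta$-invalidate edges and thereby break Invariant \ref{inv:valid}. I would handle this exactly as in Section \ref{sct:nondoubly}: after Lemma \ref{lma:clean} flags the offending edges, I remove them from their pseudoforests, push the displaced surplus edges back into the queue $R$, and let the insertion/deletion algorithm delete and reinsert the invalidated copies. The crucial observation is that this extra work is not charged to the current surplus-graph insertion at all: the total number of copy reorientations is bounded by the global potential argument in Lemma \ref{lma:potential} and the total number of pseudoforest moves by Lemma \ref{lma:pseudoMoves}, both of which are absorbed elsewhere in the amortised analysis. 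Moreover, as Section \ref{sct:nondoubly} already notes, the side-effect of removing an invalidated edge from $G[M]$ either directly completes a Lemma \ref{lma:move1} swap by deleting one of the two $M_i$-endpoints, or eliminates one of the at most two monochromatic paths in the unicyclic component, so it does not blow up the $O(\alpha^2)$ swap bound. Combining these three ingredients yields the claimed $O(\alpha^3 \log^2 n)$ worst-case cost for restoring acyclicity and colourfulness of $G[M]$.
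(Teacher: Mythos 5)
Your approach follows the paper's: bound the number of Lemma~\ref{lma:move1} switches at $O(\alpha^2)$, pay $O(\log^2 n)$ per switch via top trees and the cleaning routine, and charge the repair of singly $\eta$-valid cycles to the global potential (Lemma~\ref{lma:potential}). But the accounting has two errors. First, a colourful surplus component has $O(\alpha)$ edges, not $O(\alpha^2)$: two edges $e_1,e_2\in M_i$ in the same tree component would extend, via the unique tree path between their endpoints, to a monochromatic-ended edge path, violating colourfulness, so each matching contributes at most one edge to a colourful component. After the insertion joins at most two colourful components, the affected component therefore has $O(\alpha)$ vertices and edges, and at most two edges from any one matching. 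DFSs and monochromatic walks thus cost $O(\alpha)$, and the $O(\alpha^2)$ switch bound arises by multiplying $O(\alpha)$ monochromatic pairs (respectively, $O(\alpha)$ retries of the Lemma~\ref{lma:move2} walk) by $O(\alpha)$-length walks --- not from a single $O(\alpha^2)$-length walk as you write.

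Second, the extra factor of $\alpha$ in $O(\alpha^3\log^2 n)$ is misattributed. The $O(\alpha\log n)$ neighbour-list accesses of Lemma~\ref{lma:implicitAccess} are charged to the insertion/deletion algorithm's own amortised budget (item 1 of Lemma~\ref{lma:amortisedUpdates}), not to this lemma, and the per-switch cost in Observation~\ref{obs:performswitch} is $O(\log^2 n)$ with no $\alpha$ dependence. The genuine source of the third $\alpha$ factor is the cyclic phase: locating a valid Lemma~\ref{lma:move2} index for a vertex $v$ on the cycle takes $O(\alpha^2\log n)$ time (one tests each of $O(\alpha)$ candidate indices $i$ via $O(\alpha)$ top-tree connectivity checks to verify $N_{F_i}(v)\cap V_C=\emptyset$), and this search is repeated up to $O(\alpha)$ times when a walk fails because an edge leaves $G[M]$, giving $O(\alpha^3\log n)$; combined with the $O(\alpha^2\log^2 n)$ switch cost this is dominated by the stated $O(\alpha^3\log^2 n)$.
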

\begin{proof}
We begin by showing the following two observations:
\begin{observation}
\label{obs:locateswitch}
We can locate a chain of Lemma \ref{lma:move1} switches in time $O(\alpha)$, and a Lemma \ref{lma:move2} switch in time $O(\alpha^2\log{n})$. 
\end{observation}
\begin{proof}
A component $V_C$ has size at most $O(\alpha)$, so we can determine colourfulness and locate relevant paths and cycles using for example a DFS in $O(\alpha)$ time. 

To locate a Lemma \ref{lma:move2} switch for a vertex $v$ on the cycle, we check for each $i$ such that $M_i \cap V_C \neq \emptyset$ whether $N_{i}(v)\cap V_C$ is empty. 
We check an $i$ in time $O(\alpha\log{n})$ by checking whether $v$ and $u \in V_{c}$ are neighbours in $P_i$ by using the top trees.
Since we check at most $O(\alpha)$ different $i$'s, the statement follows.
\end{proof}
\begin{observation}
\label{obs:performswitch}
We can perform a Lemma \ref{lma:move1} switch in $O(\log^2{n})$-time.
\end{observation}
\begin{proof}
Before performing a switch we determine in $O(\log^2{n})$ time, using Lemma \ref{lma:clean}, if the cycles are doubly valid.
If a cycle is singly valid, we still achieve our goal by Section \ref{sct:nondoubly}. 
We ascribe the cost of fixing singly valid edges to the potential in Lemma \ref{lma:potential}. 
Finally, we can reorient a cycle by inverting it with $ x =  1-\delta$ using Lemma \ref{lma:toptrees} in $O(\log{n})$ time. This ensures that every edge on the cycle now prefers the other endpoint, and so is naively rounded to the opposite direction.
Also note that this ensures that no edge on the cycle ends up in $H$.
\end{proof} 
If we end up with a non-colourful component, by Observation \ref{obs:locateswitch} and \ref{obs:performswitch} we spend $O(\alpha \log^2{n})$ time locating a path and performing switches, since such a path has length at most $O(\alpha)$.
We possibly need to do this for every pair of edges belonging to the same pseudoforest in said component.
In the worst case, the inserted edge connects to colourful components. Thus there are at most 2 edges from the same pseudoforest, and therefore at most one pair per pseudoforest.
Hence, there are at most $O(\alpha)$ such pairs, and we can handle each pair in $O(\alpha \log^2{n})$ time by above. 

If we end up with a cyclic component, we spend time $O(\alpha^2\log{n})$ locating a Lemma \ref{lma:move2} switch.
In order to perform it, we do at most $O(\alpha)$ Lemma \ref{lma:move1} switches. 
We fail this process at most $O(\alpha)$ times, since each time we fail the size of the component in the surplus graph decreases. The component has size $O(\alpha)$, since the graph was colourful before the insertion. When the process is successful, we can swap the edge in $G[M]$ in $O(\log{n})$ time.
Finally, this might yield an even bigger non-colourful, but acyclic component that we handle as before.
Correctness follows since no switches can create any new cycles.
\end{proof}

\subsection{Conclusion}
Theorem \ref{ArbDecompMain} follows from Lemma \ref{lma:amortisedUpdates} below. We prove Corollary \ref{cor:ColouringMain} in Section \ref{sct:colouring}.
\begin{lemma}
\label{lma:amortisedUpdates}
Consider a sequence of updates with $i$ insertions and $d$ deletions.
\begin{enumerate} 
    \item The insertion/deletion algorithm spends $O(\log^{6}{(n)}\cdot{}\alpha_{max}^{4}\cdot{}\eps^{-12}(i+d))$ time to update the fractional out-degree orientation and the refinement.
    \item The algorithm maintaining the pseudoforests spends $O(\log^{6}{(n)}\cdot{}\alpha_{max}^{3}\cdot{}\eps^{-8}(i+d))$ time.
    \item The algorithm maintaining the surplus graph spends $O(\log^{6}{(n)}\cdot{}\alpha_{max}^{6}\cdot{}\eps^{-8}(i+d))$ time.
\end{enumerate}
\end{lemma}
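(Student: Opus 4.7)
The plan is to bound each of the three costs separately by combining a global count of ``events'' (copy operations in $G^{\gamma}$, pseudoforest moves, surplus-graph insertions) with a per-event cost, and then summing.

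\textbf{Part 1 (insertion/deletion and refinement).} I would begin by observing that by Lemma~\ref{lma:potential} the total number of copy operations performed by the insertion/deletion algorithm over the update sequence is $O(\gamma\Delta_{max}^{+}\eta^{-1}(\Delta_{max}^{+}i+d))$. By Theorem~\ref{thm:Kop-main}, each such copy operation costs $O(\Delta^{+}(\Delta^{+}+\log(n)(|L|+1)+q(n)))$, where from Lemma~\ref{lma:implicitAccess} we have $|L|=O(\alpha)$ and $q(n)=O(\alpha\log n)$, so each copy operation runs in $O((\Delta^{+})^{2}+\Delta^{+}\cdot\alpha\log n)$ time, and the refinement repair adds only $O(\Delta^{+}\log n)$ per operation by Observations~\ref{obs:stacks} and~\ref{obs:accEdge} combined with Remark~\ref{rmk:fracchange}. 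Plugging in $\gamma=\Theta(\log(n)/\eps^{2})$ and $\Delta^{+}_{max}=O(\alpha_{max}\log(n)/\eps^{2})$ and multiplying through gives the first bound; the scheduling overhead of Theorem~\ref{thm:schedule}, namely $O(\log^{4}(n)/\eps^{6})$ per update across the $O(\log n)$ copies, is absorbed into this bound.

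\textbf{Part 2 (pseudoforests).} Here I would invoke Lemma~\ref{lma:pseudoMoves}, which bounds the total number of edges moved between pseudoforests over the entire sequence by $O(\gamma(\Delta_{max}^{+})^{3}\eta^{-1}(i+d))$. Each move into or out of a pseudoforest costs $O(\log n)$ by Lemma~\ref{lma:split} (using top trees to store each pseudotree, as noted preceding that lemma), plus the $O(\log n)$ overhead of maintaining the heavy-light root-tracking described in Section~\ref{sct:recovery}. Summing and substituting the values of $\gamma$ and $\Delta^{+}_{max}$ yields the stated bound.

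\textbf{Part 3 (surplus graph).} The total number of insertions into $G[M]$ is at most the total number of pseudoforest moves, which by Lemma~\ref{lma:pseudoMoves} is $O(\gamma(\Delta_{max}^{+})^{3}\eta^{-1}(i+d))$. Each insertion into $G[M]$ incurs at most $O(\alpha_{max}^{3}\log^{2}(n))$ work to restore colourfulness and acyclicity by Lemma~\ref{lma:surplusinsertion}. Importantly, the work spent reinserting singly $\eta$-valid edges that were broken by cycle inversions (Section~\ref{sct:nondoubly}) is \emph{not} charged here but to the potential in Lemma~\ref{lma:potential}, as pointed out in the proof of Observation~\ref{obs:performswitch}; so the per-insertion bound is clean. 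Multiplying the counts gives the third bound.

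The main obstacle, I expect, is the bookkeeping in Part~3: making sure that the work caused by singly-valid edges ``cascading'' through pseudoforests (removals, reinsertions, and the possible second-path rerun of the Lemma~\ref{lma:move1} chain) is consistently charged to the potential in Lemma~\ref{lma:potential} rather than double-counted against the surplus-graph operations. Once this accounting is pinned down, the three bounds follow by direct substitution of $\gamma=\Theta(\log(n)/\eps^{2})$, $\eta=1$, and $\Delta^{+}_{max}=O(\alpha_{max}\log(n)/\eps^{2})$ into the event counts and per-event costs.
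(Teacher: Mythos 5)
Your proposal follows the same decomposition as the paper's proof: bound the number of events using Lemmas~\ref{lma:potential} and~\ref{lma:pseudoMoves}, multiply by per-event costs from Theorem~\ref{thm:Kop-main}, Lemma~\ref{lma:split}, and Lemma~\ref{lma:surplusinsertion}, and charge the singly-valid-edge cascades to the potential rather than the surplus graph. This is the paper's argument.

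Two small accounting slips, neither of which breaks the result because the stated bounds have slack. First, in Part~1 you fold the neighbour-recovery cost into $q(n)=O(\alpha\log n)$ but do not explicitly charge the $O(\alpha\log^2 n)$ per-vertex-accessed cost of cleaning incident heavy edges in all heavy-light decompositions (the invariant of Section~\ref{sct:locating}), which the paper's proof accounts for separately. Second, in Part~2 you price a pseudoforest move at $O(\log n)$ (top-tree link/cut plus root bookkeeping), whereas each such move also triggers an insertion or deletion in the corresponding heavy-light decomposition, which by Observation~\ref{obs:heavylight} costs $O(\log^2 n)$; the root-tracking of Section~\ref{sct:recovery} is $O(1)$ overhead and is not the bottleneck, while the heavy-light maintenance is. The paper uses $O(\log^2 n)$ per move, and with the move count $O(\gamma(\Delta_{max}^+)^3(i+d)) = O(\alpha_{max}^3\log^4(n)\eps^{-8}(i+d))$ this is exactly what produces the stated $\log^6 n$ factor in Part~2; your $O(\log n)$ would yield $\log^5 n$, so your claim to ``yield the stated bound'' by direct substitution is off by one factor of $\log n$ (you'd actually prove a slightly stronger bound, so nothing is lost, but the arithmetic as you describe it would not reproduce the exponents in the lemma statement).
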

\begin{proof}
\begin{enumerate}
    \item By Lemma \ref{lma:implicitAccess}, $|L| = O(\alpha)$ and $q(n) = O(\alpha \log{n})$. Furthermore, we spend time $O(\alpha \log^{2}{n})$ updating heavy-edges per vertex accessed per insertion/deletion.
    By Lemma \ref{lma:potential}, we perform at most $O(\frac{\gamma\Delta_{max}^{+}}{\eta}(\Delta^{+}\cdot{}i+d))$ updates and so setting $\gamma = \Theta(\log{n}/\eps^{-2})$, $\eta = 1$ and applying Theorem \ref{thm:RefinementKop} yields the result.
    \item We spend overhead $O(\log^2{n})$ for each such operation, and by Lemma \ref{lma:pseudoMoves} we perform at most $O(\frac{\gamma(\Delta_{max}^{+})^3}{\eta}(i+d))$ of these operations. Thus the claimed follows by inserting parameters.
    \item An edge can only end up in a surplus graph, if it inserted into a new pseudoforest, so the statement follows from Lemmas \ref{lma:pseudoMoves} and \ref{lma:surplusinsertion}.
\end{enumerate}
\end{proof}
Now we can show Theorem \ref{ArbDecompMain}. We recall it below:
\begin{theoremS}[Theorem \ref{ArbDecompMain}]
Given an initially empty and dynamic graph undergoing an arboricity $\alpha_{max}$ preserving sequence of updates, there exists an algorithm maintaining a $\lfloor (1+\eps)\alpha \rfloor+2$ arboricity decomposition with an amortized update time of $O(\operatorname{poly}(\log{n},\alpha_{max},\varepsilon^{-1}))$. 
In particular, setting $\eps < \alpha_{max}^{-1}$ yields $\alpha+2$ forests with an amortized update time of $O(\operatorname{poly}(\log{n},\alpha_{max}))$. 
\end{theoremS}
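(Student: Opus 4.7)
The plan is to stitch together the components developed in the previous sections of the paper. Run the algorithm from Theorem \ref{thm:RefinementKop} with the same parameter tuning used in the proof of Theorem \ref{OutOrientationMain} (namely $\eps' = \eps/20$, $\gamma = \Theta(\eps'^{-2}\log n)$, $\delta = 2\gamma^{-1}$, $\mu = \gamma^{-1}$). Round each edge of $G - H$ to the endpoint carrying the larger fractional load; by the same calculation that concluded the proof of Theorem \ref{OutOrientationMain}, this produces an explicit out-orientation of $G - H$ with out-degree at most $\lfloor (1+\eps)\alpha \rfloor$. Feed this orientation into Lemma \ref{lma:split} to obtain $\lfloor (1+\eps)\alpha\rfloor$ pseudoforest layers covering $E(G-H)$, maintained faithfully to the orientation.

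Next, maintain the Blumenstock--Fischer surplus graph $G[M]$ on top of this pseudoforest decomposition and apply the scheme of Section \ref{sct:algo}: whenever an edge newly lands in $G[M]$ (either due to an update, due to a move triggered by Lemma \ref{lma:split}, or due to a reinsertion of an invalidated edge), invoke Lemma \ref{lma:surplusinsertion} to restore colourfulness and acyclicity via the switches of Lemmas \ref{lma:move1} and \ref{lma:move2}. Once $G[M]$ is acyclic and colourful, each pseudotree contains at most one surplus edge, the surplus edges themselves form a forest, and the Blumenstock--Fischer argument produces $\lfloor (1+\eps)\alpha\rfloor + 1$ forests covering $E(G-H)$. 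Finally, by Invariant \ref{inv:forest}, $H$ itself is a forest with $E(H) \cap E(G-H) = \emptyset$, so adjoining it as the last layer yields an $\lfloor (1+\eps)\alpha\rfloor + 2$ arboricity decomposition of $G$.

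The amortised running time follows directly from Lemma \ref{lma:amortisedUpdates}: each of the three contributions there is $\poly(\log n, \alpha_{max}, \eps^{-1})$ per update, so their sum is of the same form. For the ``in particular'' clause, choose $\eps < \alpha_{max}^{-1}$; then $(1+\eps)\alpha \le \alpha + \eps\alpha_{max} < \alpha + 1$ forces $\lfloor (1+\eps)\alpha \rfloor = \alpha$, and $\eps^{-1} = O(\alpha_{max})$ collapses every $\eps^{-1}$ factor in Lemma \ref{lma:amortisedUpdates} into $\poly(\alpha_{max})$, giving the promised $\poly(\log n, \alpha_{max})$ bound and $\alpha + 2$ forests.

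The hard part, and the only place where the argument requires care beyond routine assembly, is the coupling between the cycle reversals inside pseudoforests (needed to enable a Lemma \ref{lma:move1} switch, cf.\ Figure \ref{fig:switch}) and the underlying fractional orientation maintained by Theorem \ref{thm:Kop-main}: reversing a cycle that is only singly $\eta$-valid invalidates edges that must then be deleted and reinserted through the Kopelowitz-style data structure, and these reinsertions can in turn dislodge edges across pseudoforests and back into the queue $R$. This is precisely what the quadratic potential $\Phi(O) = \sum_{v:\, s(v)\ge 1}(\Delta_{max}^{+} - s(v))^2$ of Lemma \ref{lma:potential} is designed to absorb, and it is what bounds the total number of (re)orientations, pseudoforest moves, and surplus-graph insertions over the entire sequence, so everything else in the above plan is accounted for by the per-operation bounds of Lemmas \ref{lma:toptrees}, \ref{lma:clean}, and \ref{lma:surplusinsertion}.
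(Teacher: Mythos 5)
Your proposal is correct and follows essentially the same route as the paper's own (very terse) proof: instantiate Theorem \ref{thm:RefinementKop} with the same parameter tuning as in Theorem \ref{OutOrientationMain}, round $G-H$, split via Lemma \ref{lma:split}, maintain the surplus graph with the switches of Lemmas \ref{lma:move1} and \ref{lma:move2}, count forests as $\lfloor(1+\eps)\alpha\rfloor$ layers $F_i$ plus the (acyclic) surplus graph $G[M]$ plus $H$, and read off the running time from Lemma \ref{lma:amortisedUpdates}. Your write-up simply spells out the forest count and the invalidation/reinsertion coupling that the paper defers to Sections \ref{sct:algo}--\ref{sct:locating}; the only cosmetic difference is $\eps'=\eps/20$ versus the paper's $\eps'=\eps/10$, and the ``in particular'' clause implicitly requires choosing $\eps = \Theta(\alpha_{max}^{-1})$ (not merely any $\eps < \alpha_{max}^{-1}$) so that $\eps^{-1}$ stays $O(\alpha_{max})$, which is what you evidently intend.
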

\begin{proof}
Apply the algorithm from Theorem \ref{thm:RefinementKop} with $\eps' = \eps/10$,$\gamma = \log{n}/\eps'^2$, $\delta = 2\gamma^{-1}$ and $\mu = \gamma^{-1}$ as in the proof of Theorem \ref{OutOrientationMain} in order to insert/delete edges and maintain the refinements. 

Now Lemma \ref{lma:amortisedUpdates} shows that the amortised cost of running the algorithm from Section \ref{sec:DynFo} is $O(\log^{6}{(n)}\cdot{}\alpha_{max}^{6}\cdot{}\eps^{-12})$. 
\end{proof}

We have shown how to maintain an $\alpha+2$ arboricity decomposition of a fully dynamic graph as it undergoes an arboricity $\alpha$ preserving sequence of updates in $\poly(\log{n},\alpha)$ time per update.
We have also shown how to maintain an $\lfloor(1+\eps)\alpha\rfloor+2$ out-orientation of a fully dynamic graph in $\poly(\log{n},\alpha)$ time per update.
These algorithms are the first dynamic algorithms to go below $2\alpha$ forests and out-edges, respectively, and the number of forests matches the best near-linear static algorithm by Blumenstock and Fischer \cite{blumenstock2019constructive}. 
We apply these algorithms to get new trade-offs for implicit colouring algorithms for bounded arboricity graphs. 
In particular, we maintain $4\cdot{}2^{\alpha}$ and $2\cdot{}3^{\alpha}$ implicit colourings in $\poly(\log{n},\alpha)$ time per update. This improves upon the $2^{40\alpha}$ colours of the previous most colour-efficient algorithm maintaining $\poly(\log{n},\alpha)$ update time \cite{henzinger2020explicit}. 
In particular, this reduces the number of colours for planar graphs from $2^{120}$ to $32$.
An interesting direction for future work is to see, if one can reduce the number of forests even further in the static case, while still achieving near-linear running time. 
Also, even though our algorithms use few colours and forests, the update times contain quite high polynomials in both $\log{n}$ and $\alpha$. Is it possible to get more efficient update times without using more forests?
Finally, for constant $\alpha$, we get $\alpha+2$ out-edges.
Brodal \& Fagerberg \cite{Brodal99dynamicrepresentations} showed that one cannot get $\alpha$ out-edges with faster than $\Omega(n)$ update time (even amortised). 
The question remains, can one get $\alpha+1$?

\section{Acyclic Orientations and Arboricity Decompositions}
\label{ACYCLIC}
\label{sec:BroFa}
In this section, we describe the algorithm in Theorem \ref{thm:ModificationBroFa}.
We modify an algorithm by Brodal \& Fagerberg \cite{Brodal99dynamicrepresentations}. Specifically, we change how an edge is inserted.
The algorithm maintains a list of out-edges $\text{out}(u)$ for each vertex $u\in G$. An edge $e$ is in $\text{out}(u)$ if and only if $e \in E(G)$ and $e$ is oriented away from $u$. 
As a result $d^{+}(u) = |\text{out}(u)|$.
All of these lists are initialized to be empty. 
The algorithm ensures that the maximum out-degree of the vertices in $G$ is $d$ for some constant $d > 2\alpha$ to be specified later.
The algorithm handles deletions and insertions in the following way:

\noindent\textbf{Deletion:} If $e$ incident to $x,y$ 
is deleted, we search $\text{out}(x)$ and $\text{out}(y)$ for $e$, and delete it.

\noindent\textbf{Insertion:} When an edge $e$ is inserted, an arbitrary endpoint $u$ of $e$ is chosen, and $e$ is added to $\text{out}(u)$. 
Now every edge in $\text{out}(u)$ is oriented in the other direction (also $e$) i.e. we delete $f = (u,v)$ from $\text{out}(u)$ and add $f$ to $\text{out}(v)$ instead for all edges $f \in \text{out}(u)$. 
Now $u$ has out-degree at most $d$, but the reorientations of an edge $f = (u,v)$ might increase $|\text{out}(v)|$ above $d$. 
The algorithm then proceeds by reorienting all out-edges out of $v$.
It continues this process until all vertices have out-degree at most $d$.

Note that this process terminates: Since $G$ has arboricity $\alpha$, it has an orientation $O$ such that the maximum out-degree in $G$ is $\alpha$.
Call an edge in $E(G)$ \textit{good}, if it is oriented the same way by both the algorithm and $O$ and \textit{bad} if it isn't. 
Now, inserting $e$ could, in the worst case, make the algorithm change the orientation of $\alpha$ good edges.
However from here on, every vertex whose edges are reoriented will increase the total number of good edges by at least $d-2\alpha \geq 1$, so the process terminates. 
The algorithm differs from the one presented in \cite{Brodal99dynamicrepresentations} in only one way. When an edge $e = uv$ is inserted, we always turn $u$ into a sink. In \cite{Brodal99dynamicrepresentations}, this only happens if $u$'s out-degree increases above $d$.
This small modification ensures no cycles are created: when an edge is inserted, one of its endpoints is turned into a sink, and so this edge is in no cycle. Also, turning a vertex into a sink does not create any cycles.

\subsubsection{Correctness}
The algorithm clearly maintains a $d$-bounded out-degree orientation. 
Hence, correctness of the algorithm follows by showing Lemma \ref{lma:acyclicBF}:
\begin{lemma}
\label{lma:acyclicBF}
Let $G_s$ be $G$ after the first $s$ updates.
Then the orientation $O_s$ maintained by the algorithm is acyclic.
\end{lemma}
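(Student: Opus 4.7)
\textbf{Proof plan for Lemma \ref{lma:acyclicBF}.} I would proceed by induction on the number of updates $s$. The base case $s=0$ is trivial, and deletions preserve acyclicity since removing an edge cannot create a cycle. So the work is entirely in the insertion step, where I want to show that the entire cascade of sink-ifications preserves acyclicity.

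The key structural observation is the following. Suppose at some point in the execution the current orientation is acyclic, and we pick a vertex $w$ and turn it into a sink by reorienting every edge currently in $\text{out}(w)$ so that it instead points into $w$. I claim the resulting orientation is still acyclic. For if $C$ were a directed cycle in the new orientation, then $C$ could not pass through $w$, because $w$ has no out-edges and therefore cannot occur as an interior vertex of any directed cycle. But every edge whose orientation we changed is incident to $w$, so none of the edges of $C$ were altered; hence $C$ would already have been a directed cycle in the previous orientation, contradicting acyclicity.

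With this lemma in hand, the insertion step is handled by induction on the sequence of sink-ifications performed during the cascade. Adding the new edge $e=uv$ and inserting it into $\text{out}(u)$ yields (a priori) a possibly cyclic orientation, but the very first step of the modified algorithm immediately turns $u$ into a sink by flipping every edge in $\text{out}(u)$, including $e$; applying the observation above to $O_{s-1}$ together with the transient orientation in which $e$ has just been added shows the result is acyclic (note that the transient cycle, if any, must pass through $u$, since $O_{s-1}$ was acyclic and we only added one edge incident to $u$). Each subsequent sink-ification in the cascade preserves acyclicity by a direct application of the observation.

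The only subtlety — and what I would take as the main point to write carefully — is the handling of the initial insertion of $e$ itself, because the intermediate orientation right after adding $e$ to $\text{out}(u)$ but before flipping may contain a cycle (any $v$-to-$u$ path in $O_{s-1}$ closes up with $e$). The clean way to present this is to treat the addition of $e$ and the flipping of $\text{out}(u)$ as a single atomic operation: after it, $e$ is oriented $v\to u$, every old edge in $\text{out}(u)$ now points into $u$, and $u$ is a sink. Any directed cycle in this new orientation cannot use $u$, and all altered edges are incident to $u$, so any such cycle would have existed already in $O_{s-1}$, which is impossible. From then on, every further step of the cascade is a pure sink-ification of some vertex, and the observation applies verbatim, completing the induction.
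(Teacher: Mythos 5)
Your proof is correct, but it takes a genuinely different route from the paper's. The paper argues \emph{globally} on the final orientation $O_s$: it assigns to every vertex a timestamp recording when that vertex was last turned into a sink (or $-1$ if never), proves that every directed edge $u\to v$ satisfies $t_u^s < t_v^s$, and concludes there can be no directed cycle because it would induce a cyclic chain of strict inequalities among timestamps. Your argument instead maintains acyclicity as an \emph{invariant} across the entire execution, with the single structural lemma that sink-ifying a vertex of an acyclic orientation keeps it acyclic (a cycle in the new orientation must avoid the sink, hence uses no altered edges, hence already existed), and with the atomic treatment of ``add $e$ to $\operatorname{out}(u)$ then flip $\operatorname{out}(u)$'' to dodge the transiently cyclic intermediate state. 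Both arguments are sound. Your inductive version is arguably more elementary and makes the role of the modification to Brodal--Fagerberg (always sink-ify $u$ first) more transparent; the paper's timestamp ordering has the merit of being a one-shot certificate of acyclicity that does not require reasoning about intermediate orientations at all, and the timestamp observation (edges always point from earlier-sink-ified to later-sink-ified vertices) is a slightly stronger structural statement than bare acyclicity.
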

\begin{proof}
We will conceptually attach update time-stamps to all vertices:
Assume we run the algorithm on an initially empty and dynamic graph $G$ undergoing an arboricity $\alpha$ preserving sequence of insertions and deletions. 
After the first $s$ updates, the algorithm will have reoriented the out-edges out of some vertices.
Suppose $v^{s}_{1}, \dots , v^{s}_{k}$ is the order in which this is done.
That is for all $i$ the out-edges out of $v^{s}_{i}$ were reoriented by the algorithm before the out-edges out of $v^{s}_{i+1}$ were reoriented. 
After the first $s$ updates, we conceptually attach a time-stamp $t^{s}_v$ to each vertex $v \in V(G)$ equal to the maximum index $t$ such that $v^{s}_{t} = v$.
If $v \not \in v^{s}_{1}, \dots , v^{s}_{k}$, we define $t^{s}_{v} = -1$. 
\begin{claim}
\label{clm:timestamp}
Let $G_s$ be $G$ after the first $s$ updates.
Suppose $e = uv \in E(G_s)$ is oriented from $u$ towards $v$ in the orientation of $G_s$ maintained by the algorithm. 
Then $t^{s}_u < t^{s}_v$, that is the edges out of $u$ were last reoriented, before the edges out of $v$ were reoriented.
\end{claim}
\begin{proof}
Note that $t^{s}_v>-1$. Indeed, an edge cannot be oriented towards a vertex without that vertex having had its out-edges reoriented. 
Since all time-stamps except for $-1$ are unique, we cannot have $t^{s}_u = t^{s}_v$.
Suppose for contradiction that $t^{s}_u > t^{s}_v$. 
Since $e = uv$ only can have its orientation changed if either $u$ or $v$ has the orientation of all its out-edges flipped, the last time that the orientation of $e$ could have changed, was at time-step $t^{s}_u$.
But then $e$ cannot point away from $u$, since $u$ has out-degree $0$ after all its out-edges are reoriented.
\end{proof} 
Suppose now that there is a directed cycle $C = v_{0},v_{1}, \dots, v_{c}, v_{0}$ in $G_s$, with edges oriented from $v_{i}$ to $v_{i+1}$ for $i = 0, \dots, c-1$ and from $v_{c}$ towards $v_{0}$. 
Then by Claim \ref{clm:timestamp}, we must have that $t^{s}_{v_i} < t^{s}_{v_{i+1}}$ for all $i=1, \dots, c-1$. 
But by Claim \ref{clm:timestamp}, we also have $t^{s}_{v_c} < t^{s}_{v_{0}}$, and so we arrive at the contradiction $t^{s}_{v_{0}}<t^{s}_{v_c} < t^{s}_{v_{0}}$.
\end{proof}

\subsubsection{Analysis}
The analysis of the algorithm is synchronous to that in \cite{Brodal99dynamicrepresentations}: 
we just need to show a modified version of Lemma 1 in \cite{Brodal99dynamicrepresentations}. This reduces the problem to that of describing an offline reorientation scheme using few reorientations throughout the updates.
Here, we use the scheme presented as Lemma 3 in \cite{Brodal99dynamicrepresentations}. 
The idea behind this scheme is the following: 
If a vertex $v$ has out-degree $\delta>\alpha$, then it must have a neighbour, reachable by following only $O(\log_{\delta/\alpha}(n))$ out-edges, with out-degree no more than $\alpha$. 
Indeed, if the $t$-neighbourhood of $v$ does not contain such a vertex, then induction and the Theorem of Nash-Williams \cite{10.1145/800076.802464} implies that the $(t+1)$-neighbourhood must have size at least $(\delta/\alpha)^{t+1}$, and so the $O(\log_{\delta/\alpha}(n))$-neighbourhood must contain such a vertex. 
The reorientation scheme is then to accommodate insertion of an out-edge of $u$ by lowering the out-degree of $u$ via reorienting edges along such short paths. 
Since an edge must be inserted before it can be deleted, the reorientation scheme is run on the update sequence played in reverse in order to push the update cost unto delete operations, and we end up with a scheme achieving out-degree $\leq \delta$ using $r = c \cdot{} \lceil \log_{\delta /\alpha}{n} \rceil$ reorientations with $c$ being the number of deletes performed.
The modified version of the reduction to the existence of such a reorientation scheme is as follows.
\begin{lemma}
\label{lma:reductioalaBF}
Assume we are given an initially empty and dynamic graph $G$ undergoing an arboricity $\alpha$ preserving sequence $\sigma$ of insertions and deletions. Suppose there are $i$ insertions in $\sigma$.
If there exists a sequence of $\delta$-oriented graphs $\hat{G}_{1}, \dots \hat{G}_{|\sigma|} $ using at most $r$ reorientations between them in total, then the algorithm performs at most 
\[
(\delta\cdot{}i+r)(\frac{d+1}{d+1-2\delta})
\]
reorientations in total.
\end{lemma}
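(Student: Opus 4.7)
The plan is to adapt the amortised potential argument of Lemma~1 in~\cite{Brodal99dynamicrepresentations} to handle the modification that the algorithm now always flips the out-edges of the inserting vertex $u$. I will take the potential $\Phi_t$ to be the number of edges of $G_t$ on which the algorithm's orientation $O_t$ disagrees with the scheme's $\hat{G}_t$, and use the multiplier $c=(d+1)/(d+1-2\delta)$. Since $\Phi_0=0$ and $\Phi_t\geq 0$ throughout, the total number of reorientations the algorithm performs is at most the total amortised cost over the sequence, so it suffices to bound the per-event amortised cost.

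First, I will show that every cascade flip has amortised cost at most $0$. When a cascade fires at $v$, $v$ has out-degree exactly $d+1$ in $O$, while $\hat{G}_t$ is $\delta$-oriented and therefore at most $\delta$ of $v$'s algorithm out-edges agree with $\hat{G}_t$. Hence at least $d+1-\delta$ of those out-edges are bad; the flip turns those bad ones good and makes at most $\delta$ good ones bad, giving $\Delta\Phi\leq -(d+1-2\delta)$. Combined with the $d+1$ actual reorientations, the amortised cost is $(d+1)+c\Delta\Phi\leq 0$.

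Second, I will handle the remaining operations: each of the at most $r$ scheme reorientations toggles a single existing edge and contributes amortised cost at most $c$; algorithm deletions have zero reorientation cost and do not increase $\Phi$. The new ingredient compared to~\cite{Brodal99dynamicrepresentations} is the insertion flip at $u$. It flips at most $d+1$ out-edges (including the new $e$), so the actual cost is $b_u+g_u\leq d+1$ where $g_u$ counts $u$'s algorithm out-edges that agree with $\hat{G}_t$, and $\Delta\Phi\leq g_u-b_u$. Since $\hat{G}_t$ has at most $\delta$ out-edges at $u$, we have $g_u\leq\delta$, while $b_u\geq 0$. Writing the amortised cost as $b_u(1-c)+g_u(1+c)$ and using $c>1$ (so large $b_u$ only helps), the insertion flip contributes at most roughly $\delta\cdot c$ per update.

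The main obstacle will be obtaining the tight constant $\delta\cdot c$ (rather than the naive $\delta(1+c)$) for the insertion-flip amortised cost, which is what is needed to match the $\delta i$ term in the claimed bound. To close the gap I plan to split on the orientation of $e$ in $\hat{G}_t$: if $\hat{G}_t$ orients $e$ away from $u$, then $e$ already occupies one of $u$'s at most $\delta$ scheme out-slots, so $g_u\leq\delta-1$; otherwise, the $+1$ from the freshly inserted bad edge is cancelled by the decrease in $\Phi$ when the flip turns $e$ good. Summing the amortised costs across all $i$ insertions and all deletions, using $\Phi_0=0\leq\Phi_{|\sigma|}$, then yields the stated bound $(\delta i+r)(d+1)/(d+1-2\delta)$.
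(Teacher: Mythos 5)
Your proposal is the same potential-function argument the paper uses, merely phrased as a charging scheme with multiplier $c=\frac{d+1}{d+1-2\delta}$ instead of the paper's explicit count of good-to-bad and bad-to-good transitions. Your analyses of cascades, scheme reorientations, and deletions are correct and line up exactly with the paper's: a cascade at an out-degree-$(d+1)$ vertex has at most $\delta$ agreeing out-edges, so $\Delta\Phi\leq -(d+1-2\delta)$ and amortised cost at most $0$, while a scheme reorientation contributes at most $c$ and deletions at most $0$.

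The obstacle you flag at the insertion flip is real, but your proposed case split does not close it. In the case where $\hat G_t$ orients $e$ away from $u$ you do get $g_u\leq\delta-1$, but $e$ itself is a freshly created bad edge that also costs one flip, so $e$ contributes $1+c$ to the amortised cost and exactly cancels the $(1+c)$ you saved; in the other case, $e$'s flip has actual cost $1$ and net $\Delta\Phi$ contribution $0$, so there is no offsetting decrease. Either way you land on roughly $\delta(1+c)$ per insertion, not $\delta c$, and summing gives about $\delta i+(\delta i+r)\frac{d+1}{d+1-2\delta}$ rather than the stated $(\delta i+r)\frac{d+1}{d+1-2\delta}$.

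You should know the paper's own proof leaves the same gap: it sums its bound on bad-to-good transitions with only the good-to-bad transitions coming from cascades, omitting the up-to-$\delta$ good-to-bad flips performed by each insertion flip at $u$. These are algorithm reorientations and must be counted, which brings the total to about $2\delta i+r+\frac{2\delta(\delta i+r)}{d+1-2\delta}$, an extra additive $\delta i$ over the lemma as stated. This is harmless for the amortised $O(\delta\frac{d+1}{d+1-2\delta})$ insertion cost in the theorem that follows, but if you want the lemma with the literal constant you would need a different argument, or the lemma's leading term should be relaxed to $2\delta i+r$.
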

\begin{proof} The proof is synchronous to that of Lemma 2 in \cite{Brodal99dynamicrepresentations}.
We count how many times we can possibly turn a good edge bad, and then use this to bound the number of edges that are reoriented.
We do this by comparing the orientation of $G_j$ with that of $\hat{G}_{j}$.
We say an edge is \emph{bad}, if it is oriented differently in $G_j$ and $\hat{G}_{j}$.
Otherwise, we call it \emph{good}.
Doing this, we get a non-negative potential function
\[\Phi(G_{j}) = \text{the no. of bad edges in } E(G_{j})\]
Initially, there are no bad edges, since we begin with an empty graph. Furthermore, $\Phi$ is increased by at most $\delta$ for each of the $i$ edge insertions and 1 for each of the $r$ reorientations. Indeed, when we reorient all out-edges, we create at most $\delta$ bad edges.
Thus we lose at least $d+1-2\delta \geq 1$ bad edges every time the out-edges out of an out-degree $d+1$ vertex are reoriented.
Consequently, we can perform no more than 
\[
\frac{(\delta\cdot{}i+r)}{d+1-2\delta}
\]
such operations. 
Each such operation turns at most $\delta$ good edges bad, so the total number of times these operations turn a good edge bad is upper bounded by
\[
\frac{\delta(\delta\cdot{}i+r)}{d+1-2\delta}
\]
In particular, the number of bad edges the operation can turn good is at most $\delta i+r+\frac{\delta(\delta\cdot{}i+r)}{d+1-2\delta}$. 
Adding these numbers together, we see that the algorithm performs at most 
\[
\delta\cdot{}i+r+\frac{2\delta(\delta\cdot{}i+r)}{d+1-2\delta} = (\delta\cdot{}i+r)(1+\frac{2\delta}{d+1-2\delta}) = (\delta\cdot{}i+r)(\frac{d+1}{d+1-2\delta})
\]
reorientations, since each reorientation either turns a bad edge good or a good edge bad. 
\end{proof} 
Combining the scheme from \cite{Brodal99dynamicrepresentations} with Lemma \ref{lma:acyclicBF} yields the following for which specifying parameters yields Theorem \ref{thm:ModificationBroFa}:
\begin{theorem}
Given an initially empty and dynamic graph $G$ undergoing an arboricity $\alpha$ preserving sequence $\sigma$ of insertions and deletions (say $i$ insertions and $c$ deletions), the algorithm with parameters $d/2 \geq \delta > \alpha$ will have an amortized insertion cost of $O(\delta\cdot{}\frac{d+1}{d+1-2\delta})$, and an amortized deletion cost of $O(d+\log_{\frac{\delta}{\alpha}}(n)\frac{d+1}{d+1-2\delta})$.
\end{theorem}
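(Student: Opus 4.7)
The plan is to combine the reduction of Lemma \ref{lma:reductioalaBF} with an off-line reorientation scheme of Brodal \& Fagerberg, and then convert the resulting bound on the total number of edge reorientations into amortised per-operation costs.

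First I would invoke the reorientation scheme of Lemma 3 in \cite{Brodal99dynamicrepresentations}, applied to the update sequence $\sigma$ played in reverse. That scheme produces a sequence of $\delta$-bounded out-orientations $\hat{G}_1,\dots,\hat{G}_{|\sigma|}$ for $G_1,\dots,G_{|\sigma|}$ that differ by at most $r = O\!\left(c \cdot \lceil \log_{\delta/\alpha} n\rceil\right)$ edge reorientations in total; the reversal ensures that only deletions are charged with reorientations, consistent with the Nash-Williams expansion argument sketched in the excerpt. Crucially, this off-line scheme is a pure existence statement about orientations of $G_1,\dots,G_{|\sigma|}$, so it is independent of the fact that our algorithm turns a vertex into a sink at every insertion rather than only when its out-degree overflows, and it can be used verbatim.

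Next I would feed this scheme into Lemma \ref{lma:reductioalaBF}, which bounds the total number of edge reorientations performed by the algorithm by
\[
(\delta \cdot i + r)\cdot\frac{d+1}{d+1-2\delta} \;=\; O\!\left(\left(\delta \cdot i + c\,\log_{\delta/\alpha} n\right)\cdot\frac{d+1}{d+1-2\delta}\right).
\]
Each such reorientation is $O(1)$ work (moving an edge from one out-list to another), and the extra per-insertion work beyond reorientations is also $O(1)$ (choosing an endpoint and appending to its out-list). Deletions additionally cost $O(d)$ each for searching the two adjacency lists $\text{out}(x)$ and $\text{out}(y)$ for the edge to remove; this adds an extra $O(d\cdot c)$ term to the total.

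Finally I would amortise. Distributing the $\delta\cdot i \cdot \tfrac{d+1}{d+1-2\delta}$ term across the $i$ insertions yields the claimed amortised insertion cost $O\!\left(\delta\cdot\tfrac{d+1}{d+1-2\delta}\right)$; distributing the $c\,\log_{\delta/\alpha}(n)\cdot\tfrac{d+1}{d+1-2\delta}$ term together with the $O(dc)$ search cost across the $c$ deletions gives amortised deletion cost $O\!\left(d + \log_{\delta/\alpha}(n)\cdot\tfrac{d+1}{d+1-2\delta}\right)$. The main substantive step is the first one --- checking that the Brodal--Fagerberg off-line scheme and the hypothesis of Lemma \ref{lma:reductioalaBF} are compatible with our modification (always sinking on insertion) --- but since both of those results only assume an $\alpha$-preserving update sequence and bound the algorithm against a hypothetical $\delta$-oriented reference sequence, no new work is needed beyond the acyclicity argument already given in Lemma \ref{lma:acyclicBF}. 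Setting $d = 2(\alpha_{max}+1)$ and $\delta = \alpha_{max}+1$ (so that $\log_{\delta/\alpha}(n) = O(\log n)$ and $\tfrac{d+1}{d+1-2\delta}=O(1)$) then yields the $O(\alpha_{max}^2)$ amortised insertion and $O(\alpha_{max}^2 \log n)$ amortised deletion bounds stated in Theorem \ref{thm:ModificationBroFa}.
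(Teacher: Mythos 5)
Your proof matches the paper's own argument: both combine the off-line reorientation scheme of Brodal \& Fagerberg (yielding $r = O(c\lceil\log_{\delta/\alpha}n\rceil)$ reorientations with the cost pushed to deletions) with Lemma~\ref{lma:reductioalaBF}, observe that each reorientation is $O(1)$ work and deletions incur an extra $O(d)$ list search, and then split the resulting total across insertions and deletions. Your explicit remark that the off-line scheme is a pure existence statement and hence unaffected by the modified ``always sink on insertion'' rule is a useful sanity check that the paper leaves implicit, but the route is the same.
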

\begin{proof}
For any $\delta > \alpha$ the reorientation scheme from \cite{Brodal99dynamicrepresentations} gives $\delta$-orientations $O_{j}$ such that $O_{j}$ is a $\delta$-bounded out-degree orientation of $G_{j}$ for all $j = 1, \dots |\sigma|$ such that the total number of reorientations is at most $ \lceil \log_{\frac{\delta}{\alpha}}(n)) \rceil$ times the number of deletions.
Each reorientation can be done in $O(1)$ time per reorientation, since we reorient the entire out-degree list.
For deletions, we incur an extra cost of $O(d)$ to locate the deleted edge in the out-list.

Applying Lemma \ref{lma:reductioalaBF}, we find that the algorithm performs at most $(\delta \cdot{i}+c\cdot{}\lceil\log_{\frac{\delta}{\alpha}}(n)\rceil)\frac{d+1}{d+1-2\delta}$ reorientations in total. 
As such we retrieve the stated update costs.
\end{proof} 
Setting $\delta = \alpha_{max}+1$ and $d = 2\delta$ in the above Theorem, we arrive at Theorem \ref{thm:ModificationBroFa}.
\section{Dynamic Colouring}
\label{sct:colouring}
In this section, we discuss how to turn dynamic arboricity decomposition algorithms and dynamic out-orientation algorithms into implicit dynamic colouring algorithms, and use this to deduce Corollaries \ref{cor:ColouringSecondary} and \ref{cor:ColouringMain}.
For the dynamic arbroricity decomposition algorithms, we do exactly as in \cite{henzinger2020explicit}. 
Here, Henzinger et al.\ maintain each forest in a data structure for maintaining information in dynamic forests, more specifically they use top trees \cite{10.1145/1103963.1103966}. 
Now they root each forest arbitrarily and colour the forests by the parity of the distance to the root. 
Finally, in order to obtain a proper colouring of the whole graph, they return the product colouring over all of the forests in the arboricity decomposition. 
It costs $O(\log{n})$ overhead to maintain each forest as a top forest, and so they arrive at 
\begin{lemma}[\cite{henzinger2020explicit}]
\label{lma:ColourForest}
Given blackbox access to a dynamic algorithm maintaining an $\alpha'$ arboricity decomposition of a graph $G$ with update time $T(n)$, there exists a dynamic algorithm maintaining an implicit $2^{\alpha'}$ colouring of $G$ with update time $O(T(n)\log{n})$ and query-time $O(\alpha' \log{n})$.
\end{lemma}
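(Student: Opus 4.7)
The plan is to run the black-box decomposition algorithm and then, for each of the $\alpha'$ forests $F_1,\dots,F_{\alpha'}$ in the decomposition, maintain a top-tree representation of $F_i$ so that the parity of the root-to-vertex distance can be queried in $O(\log n)$ time. The colour of a vertex $v$ will be the tuple $(c_1(v),\dots,c_{\alpha'}(v))\in\{0,1\}^{\alpha'}$, where $c_i(v)$ is the parity of the distance from $v$ to the root of the tree in $F_i$ containing $v$. Taking the product colouring gives a palette of $2^{\alpha'}$ colours, as claimed.

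First I would argue correctness. Fix an edge $uv\in E(G)$. Because $\{F_1,\dots,F_{\alpha'}\}$ is an arboricity decomposition, there is some index $j$ with $uv\in F_j$. In the forest $F_j$, the vertices $u$ and $v$ lie in the same tree and are neighbours, so their root-to-vertex distances differ by exactly $1$; hence $c_j(u)\neq c_j(v)$, so the product colouring assigns $u$ and $v$ different tuples. This establishes that the implicit colouring is always proper.

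Next I would analyse the update time. Each update to $G$ triggers some sequence of edge insertions, deletions, and (potentially) movements between forests inside the black-box decomposition, of total cost $T(n)$. I would charge $O(\log n)$ for every such elementary change: each time an edge is inserted into or removed from some $F_i$, we perform a \texttt{link} or \texttt{cut} on the corresponding top tree, which takes $O(\log n)$ worst-case time. Re-rooting of a tree after a \texttt{cut} or a \texttt{link} can be absorbed in the same $O(\log n)$ since top trees support \texttt{evert}; I would pick the root to be an arbitrary endpoint of the new tree after each structural change. Thus the total per-update cost is $O(T(n)\cdot\log n)$. For queries, to return the colour of $v$ we query each of the $\alpha'$ top trees for the depth-parity of $v$ relative to its root, each in $O(\log n)$ time using top-tree path aggregates storing path length mod $2$; this gives the $O(\alpha'\log n)$ query bound.

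The only subtle point I would address carefully is the bookkeeping of path parities under re-rooting: changing the root of a tree flips the parities of all its vertices simultaneously. The clean way is not to store parities per vertex but rather to store, at each top-tree cluster, the length (mod $2$) of the underlying path, and to answer a query by asking the top tree for the length of the $v$-to-root path. Then a re-root operation does not require touching any stored parities at all — it merely changes which vertex plays the role of the root, and subsequent queries automatically return the correct parity. This is the step that requires the most care, but it follows directly from the standard top-tree interface recalled in Lemma~\ref{lma:toptrees} and its non-local-search extension.
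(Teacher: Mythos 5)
Your proposal follows exactly the approach the paper sketches (and attributes to Henzinger et al.): store each forest in a top tree, root each tree arbitrarily, colour every vertex by the parity of its root-to-vertex distance in each forest, and return the product colouring, with $O(\log n)$ overhead per structural change in a forest and $O(\alpha'\log n)$ per colour query. Your extra care about avoiding stored per-vertex parities — querying path-length parity on the fly instead — is a sound implementation detail that is consistent with, and implicit in, the paper's use of top trees.
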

In order to use dynamic out-orientations, they show how to turn a dynamic algorithm maintaining an $\alpha'$-bounded out-degree orientation into a dynamic algorithm maintaining a $2\alpha'$-arboricity decomposition. Finally, they apply Lemma \ref{lma:ColourForest} to get a $2^{2\alpha'} = 4^{\alpha'}$ colouring. 

We use the following strategy instead. We use Lemma \ref{lma:split} to turn the out-orientation into a pseudoarboricity decomposition.  
Then we maintain each pseudoforest as a forest and a matching. Finally, we root each pseudotree in an arbitrary vertex on the unicycle (as we did in Section \ref{sct:recovery}).
Now with $O(\log{n})$ overhead, we can 3 colour each pseudoforest. 
Indeed, we store each pseudotree as a rooted top tree plus an edge such that one of the endpoints of the edge not in the top tree is the root of the top tree.
Then we can colour the root a unique colour and all other vertices by the parity of the distance to the root as before. 
Finally, we form the final colouring as the product colouring over all pseudoforests. This colouring can again be queried in time $O(\alpha' \log{n})$.

Using these strategies for colouring, we can now deduce Corollaries \ref{cor:ColouringMain} and \ref{cor:ColouringSecondary}.
\begin{corollaryS}[Corollary \ref{cor:ColouringMain}]
Given a dynamic graph with $n$ vertices, there exists a fully dynamic algorithm that maintains an implicit $4\cdot{}2^{\alpha}$ colouring with an amortized update of $O(\operatorname{polylog}{n})$ and a query time of $O(\alpha\log{n})$.
\end{corollaryS}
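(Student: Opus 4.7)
The plan is to combine the $\alpha+2$ arboricity decomposition of Theorem~\ref{ArbDecompMain} with the forest-colouring black box of Lemma~\ref{lma:ColourForest}, and to remove the assumption of an a priori arboricity bound via the scheduler of Theorem~\ref{thm:schedule}.

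First, assuming a known upper bound $\alpha_{max}$ on $\alpha$, I would run Theorem~\ref{ArbDecompMain} with $\eps<\alpha_{max}^{-1}$, which yields a decomposition into exactly $\alpha+2$ forests in $O(\operatorname{poly}(\log n,\alpha_{max}))$ amortised time per update. Feeding this into Lemma~\ref{lma:ColourForest} with $\alpha'=\alpha+2$ gives an implicit $2^{\alpha+2}=4\cdot 2^{\alpha}$ colouring with $O(\alpha\log n)$ query time and an additional $O(\log n)$ update-time overhead. To eliminate the assumption, I would wrap everything in the scheduler of Theorem~\ref{thm:schedule}: it maintains $O(\log n)$ partial copies of $G$ with geometrically increasing density estimates $\rho'$, each satisfying $\alpha(G')\leq 4\rho'$, and always exposes a fully-updated \emph{active} copy whose estimate $\rho_{est}$ obeys $(1-\eps)\rho_{est}/2\leq\alpha(G)<4\rho_{est}$. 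On every copy, install the algorithm above with $\alpha_{max}=4\rho'$, and route queries to the active copy; the $\alpha+2$ forests on that copy then yield the desired $4\cdot 2^{\alpha}$ colouring.

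For the running-time analysis, the scheduler itself contributes amortised $O(\log^4 n/\eps^6)$ per update, forwarding each update only to those copies able to absorb it; on each such copy, the arboricity-decomposition plus colouring pipeline costs amortised $O(\operatorname{poly}(\log n,\alpha))$, since each copy's $\alpha_{max}=4\rho'$ is within a polylog factor of the prevailing $\alpha(G)$. Summing over the $O(\log n)$ copies delivers polylogarithmic amortised update time, and the $O(\alpha\log n)$ query time is inherited from Lemma~\ref{lma:ColourForest}.

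The main delicate point is to ensure that the palette at every instant has size $4\cdot 2^{\alpha}$ rather than $2^{4\rho_{est}+2}$: this hinges on Theorem~\ref{ArbDecompMain} being \emph{adaptive}, producing $\lfloor(1+\eps)\alpha\rfloor+2$ forests at every moment, which for the chosen $\eps<\alpha_{max}^{-1}$ is exactly $\alpha+2$; empty forests contribute a factor of $1$ to the product colouring and do not enlarge the palette beyond $4\cdot 2^{\alpha}$ at the time of query.
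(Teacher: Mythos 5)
Your high-level plan (scheduler $+$ Theorem~\ref{ArbDecompMain} with $\eps<\alpha_{max}^{-1}$ $+$ Lemma~\ref{lma:ColourForest}, querying the active copy) is the same as the paper's, but your running-time argument has a real gap, and the paper addresses it with an additional idea you omit.

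You assert that on each copy ``$\alpha_{max}=4\rho'$ is within a polylog factor of the prevailing $\alpha(G)$,'' and conclude a polylogarithmic total. This is false: the $O(\log n)$ copies carry geometrically increasing estimates $\rho'$ that span all scales from $O(1)$ up to $\Theta(n)$, and the scheduler of Theorem~\ref{thm:schedule} forwards every update to \emph{all} copies whose estimate exceeds the active one. If you actually ran the pipeline of Theorem~\ref{ArbDecompMain} with $\alpha_{max}=4\rho'$ on the dense copies, you would pay $O(\operatorname{poly}(\log n,\alpha_{max}))$ with $\alpha_{max}$ as large as $\Theta(n)$ on every single update, which destroys the claimed $O(\operatorname{polylog} n)$ bound.

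The paper fixes this by \emph{not} running the decomposition-and-colouring machinery on copies with large estimates: on copies with $\rho_{est}>\log n$ it does nothing, and if such a copy happens to be active at query time it falls back to colouring each vertex by its vertex id. This fallback is safe for the palette bound because the active copy satisfies $\alpha(G)=\Omega(\rho_{est})=\Omega(\log n)$, hence $4\cdot 2^{\alpha}\ge n$ and vertex ids fit within the advertised palette. You need this cutoff-and-fallback argument; without it your update-time analysis does not go through.
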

\begin{proof}
We schedule the updates on $O(\log{n})$ copies of $G$ using Theorem \ref{thm:schedule}. 
On each copy we apply the algorithm of Theorem \ref{ArbDecompMain} with $\alpha_{max} = 4\rho_{est}$ and $\eps < \alpha_{max}^{-1}$ combined with Lemma \ref{lma:ColourForest}. In the remaining copies, we do nothing. 
In order to answer a colouring query, we query the currently active copy. 
If a copy with $\rho_{est} > \log{n}$ is active, we just colour each vertex by their vertex id, since then $\alpha(G) \geq \rho_{est} \geq \log{n}$, and so $2^{\alpha(G)} \geq n$.
\end{proof}
\begin{corollaryS}[Corollary\ \ref{cor:ColouringSecondary}]
Given a dynamic graph with $n$ vertices, there exists a fully dynamic algorithm that maintains an implicit $2\cdot{}3^{(1+\varepsilon)\alpha}$ colouring with an amortized update time of $O(\log^{4}{n}\cdot{}\alpha^2/\eps^6)$ and a query time of $O(\alpha\log{n})$.
\end{corollaryS}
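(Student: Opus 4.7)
The plan is to combine Theorem~\ref{OutOrientationMain} with Lemma~\ref{lma:split} and the pseudoforest-based colouring scheme outlined at the start of Section~\ref{sct:colouring}. The proof of Theorem~\ref{OutOrientationMain} actually produces something slightly finer than just an out-orientation: it maintains a refinement forest $H$ (which is 2-oriented via Lemma~\ref{lma:orienttrees}) together with a rounded out-orientation of $G-H$ in which every vertex has out-degree at most $\lfloor(1+\eps)\alpha\rfloor$. In other words, with $O(1)$ extra bookkeeping the algorithm of Theorem~\ref{OutOrientationMain} hands us, for free, a decomposition of $E(G)$ into (a) an $\lfloor(1+\eps)\alpha\rfloor$-bounded out-orientation on $G-H$, and (b) the single forest $H$.

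First I would feed the out-orientation on $G-H$ into Lemma~\ref{lma:split}. This yields $\lfloor(1+\eps)\alpha\rfloor$ pseudoforests, each updated with only $O(1)$ overhead per change of the underlying orientation, and each pseudoforest can be stored (as explained in Section~\ref{sct:recovery}) as a rooted top tree together with at most one extra unicycle edge whose endpoint is the chosen root. Next I would maintain for each such pseudoforest a proper 3-colouring: assign the unicycle-root its own dedicated colour, and colour every other vertex by the parity of its distance to the root, exactly as described at the end of Section~\ref{sct:colouring}. Separately, I would maintain $H$ as a top forest and 2-colour it by parity of depth, as in Lemma~\ref{lma:ColourForest}. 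The implicit colour of a vertex $v$ is then the tuple consisting of its colour in each of the $\lfloor(1+\eps)\alpha\rfloor$ pseudoforests together with its colour in $H$; this is a proper colouring of $G$ since every edge lies in some pseudoforest or in $H$ and the corresponding coordinate of the tuple separates its endpoints. The colour palette has size at most $3^{\lfloor(1+\eps)\alpha\rfloor}\cdot 2\le 2\cdot 3^{(1+\eps)\alpha}$.

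For the amortised update bound, Theorem~\ref{OutOrientationMain} already charges $O(\log^{3}n\cdot\alpha^{2}/\eps^{6})$ per insertion/deletion, and by the same potential argument as in Lemmas~\ref{lma:potential} and~\ref{lma:pseudoMoves} the total number of edges entering or leaving a pseudoforest (or $H$) across an update is absorbed into that bound; each such movement incurs an additional $O(\log n)$ top-tree operation to update the rooted structure, producing the claimed $O(\log^{4}n\cdot\alpha^{2}/\eps^{6})$ amortised total. A colour query at $v$ must, for each of the $O(\alpha)$ top trees containing $v$, read off $v$'s depth parity and compare $v$ with the stored root; each such lookup is a single $O(\log n)$ top-tree query, giving $O(\alpha\log n)$ overall.

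The main obstacle is the bookkeeping required to keep the roots of the pseudotrees consistent with the orientation as edges migrate between pseudoforests and as cycles inside $G-H$ get reoriented by the insertion/deletion algorithm; this is precisely what the root-maintenance rules of Section~\ref{sct:recovery} (link/cut/cycle-reversal updates to the root) handle, and since every such event is already charged $O(\log n)$ for its top-tree operation, adding the constant-size root update costs nothing asymptotically. Compared to Corollary~\ref{cor:ColouringMain}, we avoid the need to convert pseudoforests into forests, and therefore also avoid the surplus-graph machinery of Section~\ref{sec:DynFo}, at the price of a base of $3$ instead of $2$ and the multiplicative $(1+\eps)$ in the exponent.
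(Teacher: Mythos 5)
Your proposal matches the paper's own proof: both extract the refinement forest $H$ together with the rounded out-orientation of $G-H$ from Theorem~\ref{thm:RefinementKop} (with the same parameter choices as in Theorem~\ref{OutOrientationMain}), split the rounded orientation of $G-H$ into $\lfloor(1+\eps)\alpha\rfloor$ pseudoforests via Lemma~\ref{lma:split}, 3-colour each pseudoforest as described at the end of Section~\ref{sct:colouring} (unicycle-root gets its own colour, others by depth parity), 2-colour $H$ via Lemma~\ref{lma:ColourForest}, and take the product colouring, giving $2\cdot 3^{\lfloor(1+\eps)\alpha\rfloor}\le 2\cdot3^{(1+\eps)\alpha}$ colours. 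One small inaccuracy worth flagging: you invoke the potential arguments of Lemmas~\ref{lma:potential} and~\ref{lma:pseudoMoves}, but those are only needed for the cycle-inversion and surplus-graph machinery of Section~\ref{sec:DynFo}; for this corollary no cycles are inverted, the number of pseudoforest movements per update is already bounded worst-case via Remark~\ref{rmk:fracchange} and the $O(1)$ cascading moves of Lemma~\ref{lma:split}, and the extra $\log n$ factor over Theorem~\ref{OutOrientationMain} is simply the top-tree overhead that Lemma~\ref{lma:ColourForest} charges.
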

\begin{proof}
Apply the algorithm from Theorem \ref{thm:RefinementKop} with $\eps' = \eps/20$,$\gamma = \log{n}/\eps'^2$, $\delta = 2\gamma^{-1}$ and $\mu = \gamma^{-1}$ as in the proof of Theorem \ref{OutOrientationMain}. 
Naively round each edge in $G-H$ to get a forest and a $\lfloor (1+\eps)\alpha' \rfloor$ out-orientation. 
Apply Lemma \ref{lma:split} to get a pseudoforest decomposition of $G-H$. Colour $G-H$ with the strategy for colouring pseudoforests discussed above, and colour $H$ with the strategy from Lemma \ref{lma:ColourForest}. 
Finally, take the product colouring.
\end{proof}

\bibliographystyle{plain}
\bibliography{bibliography}
\appendix
\section{Missing details of Section 2}
\label{apx:KopSW}
\begin{theoremS}[Theorem \ref{thm:Kop-main},implicit in \cite{kopelowitz2013orienting}]
Given implicit $(|L|,q(n))$ access to an orientation with $\max_{v} s(v) \leq \Delta^{+}$, there exists an algorithm that can insert and delete edges from the orientation without creating any new $1$-invalid edges.
The algorithm has worst-case insertion time of $O(\gamma\cdot{}\Delta^{+}(\Delta^{+}+\log(n)(|L|+1)+q(n))))$ and a worst case deletion time of $O(\gamma\cdot{}\Delta^{+}(\log(n)(|L|+1)+q(n))))$.
\end{theoremS}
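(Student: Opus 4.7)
The plan is to lift the algorithm of Kopelowitz et al.\ \cite{kopelowitz2013orienting} to operate on $G^{\gamma}$, inserting or deleting each of the $\gamma$ copies of the updated edge in sequence. For a single copy directed $u \rightarrow v$, we apply Lemma \ref{lma:chainflip}: on insertion we reorient a maximal $1$-tight chain from $u$, and on deletion we reorient a maximal $1$-tight chain to $u$. By Lemma \ref{lma:chainflip} neither operation creates a new $1$-invalid edge, so inductively the precondition $\max_v s(v) \leq \Delta^{+}$ is preserved, and Remark \ref{rmk:chainlength} bounds the length of every chain encountered by $2\Delta^{+}$.

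Locating a from-chain is done greedily: at each vertex $w$ reached so far, we scan the (at most $\Delta^{+}$) out-neighbours of $w$ for one $w'$ with $s(w)-s(w') \geq 1/2$, at cost $O(\Delta^{+})$ per step. For to-chains we instead maintain at each vertex $v$ a max-heap over its in-neighbours keyed by their loads, so that a tight in-edge of $v$ can be identified in $O(\log n)$ per step. The nontrivial point is that, because we only have implicit access to the orientation, reorientations happening elsewhere can silently change a vertex's in/out-neighbour status, so our heaps and adjacency lists risk becoming stale. This is exactly what the $(|L|, q(n))$ access resolves: before doing anything at $w$ we invoke the query from part (2) of Definition \ref{def:implicitaccess} at cost $O(q(n))$, and spend $O(\log n)$ per returned entry to refresh $w$'s heap and bookkeeping, for a per-visit total of $O(q(n) + |L|\log n)$. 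When we actually flip an edge on the chain we first read and then write its fractional load via the $O(\log n)$-time operations from part (1) of Definition \ref{def:implicitaccess}.

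Summing: for one copy insertion, the chain has $O(\Delta^{+})$ steps each costing $O(\Delta^{+} + q(n) + (|L|+1)\log n)$ (scan, refresh, and an $O(\log n)$ write when an edge is flipped), giving $O(\Delta^{+}(\Delta^{+} + q(n) + (|L|+1)\log n))$; for one copy deletion the $\Delta^{+}$ scan is replaced by an $O(\log n)$ heap-top extraction, giving $O(\Delta^{+}(q(n) + (|L|+1)\log n))$. Multiplying by the $\gamma$ copies per update yields the stated bounds. The main obstacle I expect is arguing correctness of the lazy heap maintenance under implicit access, namely that after the refresh step the heap at $w$ really reflects (with correct keys) exactly $w$'s current in-neighbours, so that its maximum element correctly certifies whether $w$ has a tight in-edge; this should follow directly from the guarantee in part (2) of Definition \ref{def:implicitaccess} that the returned list covers every neighbour whose in/out status changed since the last visit to $w$, together with the observation that only loads of vertices touched along earlier chains can have changed.
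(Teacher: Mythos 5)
Your proposal follows essentially the same approach as the paper: inserting/deleting the $\gamma$ copies one at a time, using Lemma~\ref{lma:chainflip} to find and reorient maximal $1$-tight chains (by scanning out-neighbours for insertion, and via per-vertex max-heaps over in-neighbours for deletion), bounding the chain length via Remark~\ref{rmk:chainlength}, and refreshing stale neighbourhood information at each visited vertex through the implicit-access query before proceeding. The per-step cost analysis and the resulting worst-case bounds match the paper's Lemmas~\ref{lma:insertKop} and~\ref{lma:deleteKop} exactly.
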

We provide pseudo-code for the algorithm in Theorem \ref{thm:Kop-main}: 
We store the following global data structure.
\begin{itemize}
    \item $Edges:$ Balanced binary search tree sorted by edge id. Initially empty.
\end{itemize}
Each vertex $u$ stores the following data structures:
\begin{itemize}
    \item $s(u)$, initialized to $0$.
    \item $OutNbrs_{u}$: Balanced binary search tree containing out-neighbours of $u$ sorted by vertex id. Initially empty
    \item $InNbrs_{u}$: Max heap containing in-neighbours indexed using $s(v)$. Initially empty.
\end{itemize}
We update them using the operations in Algorithms \ref{alg:basic-ops}, \ref{alg:implicit-access} and \ref{alg:insert-delete}.\
\begin{algorithm} 
    \caption{Basic operations}
    \label{alg:basic-ops}
    \begin{algorithmic}
            \Function{add}{$e,u \rightarrow v$}
                \If{$e\in Edges$}
                    \State Update $|B_{e}^{u}| \leftarrow |B_{e}^{u}| + 1$
                    \State Update $|B_{e}| \leftarrow |B_{e}| + 1$
                \Else 
                    \State add $e$ to $Edges$ with keys 
                    \State $|B_{e}^{u}| = 1$, $|B_{e}^{v}| = 0$,    $|B_{e}| = 1$.
                \EndIf
                \If{$u \notin InNbrs_{v}$}
                    \State add $u$ to $InNbrs_{v}$ with key $s(u)$
                \EndIf
                \If{$v \notin OutNbrs_{u}$}
                    \State add $v$ to $OutNbrs_{u}$
                \EndIf
            \EndFunction
            \\
            \Function{remove}{$e,u \rightarrow v$}
                \If{$|B_{e}| = 1$}
                    \State remove $e$ from $Edges$
                \EndIf
                \If{$|B_{e}^{u}| = 1$} 
                    \State Remove $u$ from $InNbrs_{v}$
                    \State Remove $v$ from $OutNbrs_{u}$
                \EndIf
            \EndFunction
            \\
            \Function{TightOutNbr}{$u$}
                \For{$w \in OutNbrs_{u}$}
                    \If{$s(w)-s(u) \leq -\eta/2$}
                        \State \Call{return}{} w
                    \EndIf
                \EndFor
                \State \Call{return}{} null
            \EndFunction
            \\
            \Function{TightInNbr}{$u$}
                \State $w \leftarrow max(InNbrs_{u})$
                \If{$s(u)-s(w) \leq -\eta/2$}
                    \State \Call{return}{} w
                \EndIf
                \State \Call{return}{} null
            \EndFunction
            \\
            \Function{increment}{$u$}
            \State $s(u) \leftarrow s(u)+1$
            \For{$v \in OutNbrs_{u}$}
                \State Update $s(u)$ in $InNbrs_{v}$
            \EndFor
            \EndFunction
            \\
            \Function{decrement}{$u$}
            \State $s(u) \leftarrow s(u)-1$
            \For{$v \in OutNbrs_{u}$}
                \State Update $s(u)$ in $InNbrs_{v}$
            \EndFor
            \EndFunction
    \end{algorithmic}
\end{algorithm}
\begin{algorithm}
    \caption{Accessing the implicit orientation}
    \label{alg:implicit-access}
        \begin{algorithmic}
            \Function{access}{$e,uv$}
                \State Update $|B_{e}^{u}|,|B_{e}^{v}| \leftarrow$ \Call{AccessLoad}{$e$} in $Edges$
            \EndFunction
            \\
            \Function{change}{$e,uv$}
                \State $x \leftarrow |B_{e}^{u}|$
                \State $y \leftarrow |B_{e}^{v}|$
                \State \Call{UpdateLoad}{$u,v,x,y$}
            \EndFunction
            \\
            \Function{reorient}{$e,u\rightarrow v$}
                \State \Call{access}{$e,uv$}
                \State \Call{remove}{$e,u\rightarrow v$}
                \State \Call{add}{$e,v\rightarrow u$}
                \State \Call{change}{$e,uv$}
            \EndFunction
            \\
            \Function{UpdateNbrs}{$u$}
                \State $L \leftarrow$ \Call{NewNbrs}{$u$}
                \For{$uv \in L$}
                    \State \Call{access}{$uv$}
                    \For{$i=1$ to 2}
                        \If{$B_{uv}^{u} = 0$}
                            \State Remove $u$ from $InNbrs_{v}$
                            \State Remove $v$ from $OutNbrs_{u}$
                        \Else 
                            \If{$u \notin InNbrs_{v}$}
                                \State add $u$ to $InNbrs_{v}$ with key $s(u)$
                            \EndIf
                            \If{$v \notin OutNbrs_{u}$}
                                \State add $v$ to $OutNbrs_{u}$
                            \EndIf
                        \EndIf
                        \State $u,v \leftarrow v,u$
                    \EndFor
                \EndFor
            \EndFunction
        \end{algorithmic}
\end{algorithm}
\begin{algorithm}
    \caption{Inserting procedure: }
    \label{alg:insert-delete}
        \begin{algorithmic}
            \Function{InsertCopy}{$e,uv$}
                \If{$s(u)\leq s(v)$}
                    \State \Call{add}{$e,u\rightarrow v$}
                    \State $w \leftarrow u$
                \Else 
                    \State \Call{add}{$e,v\rightarrow u$}
                    \State $w \leftarrow v$
                \EndIf
                \State \Call{UpdateNbrs}{$w$}
                \While{\Call{TightOutNbrs}{$w$}$\neq$ null}
                    \State $w' \leftarrow$ \Call{TightOutNbrs}{$w$}
                    \State \Call{reorient}{$ww',w\rightarrow w'$}
                    \State $w \leftarrow w'$
                    \State \Call{UpdateNbrs}{$w$}
                \EndWhile
                \State \Call{increment}{$w$}
            \EndFunction
            \\
            \Function{GammaInsert}{$e,uv$}
                \For{1 to $\gamma$}
                    \State \Call{InsertCopy}{$e,uv$}
                \EndFor
            \EndFunction
            \\
            \Function{DeleteCopy}{$e,u\rightarrow v$}
                \State \Call{remove}{$e,u\rightarrow v$}
                \State $w \leftarrow u$
                \State \Call{UpdateNbrs}{$w$}
                \While{\Call{TightInNbrs}{$w$}$\neq$ null}
                    \State $w' \leftarrow$ \Call{TightInNbrs}{$w$}
                    \State \Call{reorient}{$w'w,w'\rightarrow w$}
                    \State \Call{UpdateNbrs}{$w'$}
                \EndWhile
                \State \Call{decrement}{$w'$}
            \EndFunction
            \\
            \Function{GammaDelete}{$e,uv$}
                \For{1 to $\gamma$}
                    \If{$|B_{e}^{u}| >0 $}
                        \State \Call{DeleteCopy}{$e,u\rightarrow v$}
                    \Else 
                        \State \Call{DeleteCopy}{$e,v\rightarrow u$}
                    \EndIf
                \EndFor
            \EndFunction
        \end{algorithmic}
\end{algorithm}
Theorem \ref{thm:Kop-main} now follows from the following two Lemmas:
\begin{lemma}[Implicit in \cite{kopelowitz2013orienting}]
\label{lma:insertKop}
We can insert a copy of an edge into $G^{\gamma}$ in time $O(\Delta^{+}\cdot{}(\Delta^{+}+|L|\log{n}+q(n)))$
\end{lemma}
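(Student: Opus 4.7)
\medskip

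\noindent The plan is to analyse the \textsc{InsertCopy} procedure from Algorithm~\ref{alg:insert-delete} step by step, bound the length of the maximal tight chain it traces, and multiply by the per-vertex cost. Correctness (that no new $1$-invalid edges are created) will be immediate from Lemma~\ref{lma:chainflip}, since the algorithm inserts the new copy oriented from the lower-loaded endpoint (hence $1$-valid) and then reorients along a maximal $1$-tight chain starting at that endpoint.

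For the running time, the first ingredient is Remark~\ref{rmk:chainlength} with $\eta=1$: the maximal tight chain has length at most $2\max_{v}s(v)\le 2\Delta^{+}$, so the main \textbf{while} loop executes $O(\Delta^{+})$ iterations. At each iteration the algorithm performs the following work on the current vertex $w$:
\begin{enumerate}
    \item \textsc{UpdateNbrs}($w$): one implicit-access query returning a list of size $\le |L|$ in time $O(q(n))$, followed by at most $|L|$ heap/BBST updates at $O(\log n)$ each, for a total of $O(q(n)+|L|\log n)$.
    \item \textsc{TightOutNbr}($w$): a linear scan of $OutNbrs_w$, whose size is $O(\Delta^{+})$, costing $O(\Delta^{+})$.
    \item \textsc{reorient}($ww',w\!\to\! w'$): $O(\log n)$ via the $Edges$ structure and the implicit-access operations of Definition~\ref{def:implicitaccess}.
\end{enumerate}
Summing these gives a per-iteration cost of $O(\Delta^{+}+q(n)+|L|\log n)$, and hence a chain-traversal cost of $O(\Delta^{+}\cdot(\Delta^{+}+q(n)+|L|\log n))$. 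The terminating \textsc{increment}($w$) then updates the key of $w$ in each $InNbrs_v$ for $v\in OutNbrs_w$, i.e.\ $O(\Delta^{+})$ heap decrease/increase-key operations at $O(\log n)$ each, for a further $O(\Delta^{+}\log n)$, which is absorbed into the previous term. Combining and simplifying gives the claimed bound $O(\Delta^{+}(\Delta^{+}+|L|\log n+q(n)))$.

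The main thing to be careful about is that \emph{the implicit access may cause the local view of in/out-neighbours to be stale before \textsc{UpdateNbrs} is called}, so we must argue that step~(2) really returns a tight out-edge with respect to the true current loads. This is ensured by the structural guarantee of Definition~\ref{def:implicitaccess}: after \textsc{UpdateNbrs}($w$) processes the returned list, every neighbour of $w$ whose in/out status changed since the last access is corrected, so $OutNbrs_w$ is accurate and the tight-edge check $s(w)-s(w')\ge 1/2$ uses up-to-date values of $s(w')$ retrieved through the heap keys. A small additional subtlety is that when we reorient an edge the heap entries at the two endpoints are repaired via \textsc{remove}/\textsc{add} in $O(\log n)$, so no stale keys are left behind along the chain we have already visited. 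With these observations the per-step cost given above is justified and the lemma follows.
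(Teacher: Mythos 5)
Your proposal is correct and takes essentially the same route as the paper's own proof: bound the chain length by $O(\Delta^{+})$ via Remark~\ref{rmk:chainlength} (with $\eta=1$), then charge each chain vertex $O(q(n)+|L|\log n)$ for \textsc{UpdateNbrs}, $O(\Delta^{+})$ for the linear scan in \textsc{TightOutNbr}, and $O(\log n)$ for the reorientation. The extra care you take (noting that the final \textsc{increment} costs $O(\Delta^{+}\log n)$ and is absorbed since $\Delta^{+}\geq\log n$, and justifying that \textsc{UpdateNbrs} makes $OutNbrs_w$ accurate before the tight-edge test) is fine and consistent with the paper's argument.
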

\begin{proof}
For each vertex on the chain, we have to process the neighbourhood changes in time $O(|L|\cdot{}\log{n}+q(n))$, look at all out-neighbours in time $O(\Delta^{+})$ and access and update an out-going edge in time $O(\log{n})$.
By Remark \ref{rmk:chainlength}, the chain has at most $\Delta^{+}$ vertices.
\end{proof}
\begin{lemma}[Implicit in \cite{kopelowitz2013orienting}]
\label{lma:deleteKop}
We can delete a copy of an edge into $G^{\gamma}$ in time $O(\Delta^{+}\cdot{}(\log{n}+|L|\log{n}+q(n)))$
\end{lemma}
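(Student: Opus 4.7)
The plan is straightforward once we recall that \textsc{DeleteCopy} in Algorithm~\ref{alg:insert-delete} operates by walking a maximal $\eta$-tight chain ``to $u$'' (where $u$ is the source of the deleted edge), reorienting its in-edges in sequence, and finally invoking \textsc{decrement} on the chain's start vertex. Correctness---that no previously $\eta$-valid edge becomes $\eta$-invalid---is an immediate consequence of Lemma~\ref{lma:chainflip}, so the substance of the lemma is the running-time bound.

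First I would bound the chain length via Remark~\ref{rmk:chainlength}: for $\eta = 1$ and $\max_v s(v) \leq \Delta^+$, the chain has at most $2\Delta^+ = O(\Delta^+)$ vertices. It therefore suffices to bound the per-vertex work by $O(\log n + |L|\log n + q(n))$ and then multiply by $O(\Delta^+)$.

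At each visited vertex $w$ I would account for three contributions. (i) \textsc{UpdateNbrs}$(w)$ invokes \textsc{NewNbrs}$(w)$, which by Definition~\ref{def:implicitaccess} returns a list of at most $|L|$ candidates in time $O(q(n))$, after which each candidate is processed in $O(\log n)$, giving $O(q(n) + |L|\log n)$ in total. (ii) \textsc{TightInNbr}$(w)$ reads the top of the max-heap $\textsc{InNbrs}_w$ (keyed by the $s$-values of $w$'s in-neighbors) and tests tightness, taking $O(\log n)$. (iii) If a tight in-edge $w' \to w$ is found, \textsc{reorient} performs one \textsc{access} and one \textsc{change} plus a constant number of BST/heap updates, each costing $O(\log n)$. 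Summing (i)--(iii) and multiplying by $O(\Delta^+)$ yields the claimed bound $O(\Delta^+(\log n + |L|\log n + q(n)))$.

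The genuine point of contrast with the insertion analysis of Lemma~\ref{lma:insertKop} is step~(ii): insertion's \textsc{TightOutNbr} must scan the entire set $\textsc{OutNbrs}_w$, paying $O(\Delta^+)$ because out-neighbors are stored in a BST by vertex id, whereas deletion exploits the max-heap structure on in-neighbors to locate a tight candidate in $O(\log n)$. I therefore expect the only real bookkeeping obstacle to be verifying that this max-heap invariant is preserved throughout the chain walk: since $s$-values are modified only by \textsc{increment}/\textsc{decrement}, which explicitly propagate the new $s(u)$ to every $\textsc{InNbrs}_v$ with $v \in \textsc{OutNbrs}_u$, and since no $s$-value is changed during the chain traversal itself, the heap remains correctly keyed and the peek in step~(ii) returns a tight in-neighbor whenever one exists.
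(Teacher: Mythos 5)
Your proof is correct and follows essentially the same approach as the paper's: bound the chain length by $O(\Delta^+)$ via Remark~\ref{rmk:chainlength}, then bound the per-vertex work by $O(|L|\log n + q(n))$ for processing the neighbour list, $O(\log n)$ for inspecting the top of the max-heap, and $O(\log n)$ for the reorientation bookkeeping. Your explicit observation that deletion avoids the extra $O(\Delta^+)$ factor of insertion because \textsc{TightInNbr} peeks the max-heap rather than scanning the out-neighbour BST is exactly the point the paper implicitly relies on, and your check that \textsc{increment}/\textsc{decrement} keep the heap correctly keyed is a sound (if brief) verification of the invariant the argument needs.
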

\begin{proof}
For each vertex on the chain, we have to process the neighbourhood changes in time $O(|L|\cdot{}\log{n}+q(n))$, extract the maximum from a max-heap, and access and update an in-going edge in time $O(\log{n})$.
By Remark \ref{rmk:chainlength}, the chain has at most $\Delta^{+}$ vertices.
\end{proof}

\newpage
\section{Scheduling updates}
\label{apx:schedule}
In this Appendix, we paraphrase the algorithms and ideas of Sawlani \& Wang. To be consistent, we use the terminology introduced in Section \ref{sct:SWKop}. 
In particular, we consider out-orientations instead of the symmetric problem of orienting the edges in order to provide guarantees on the number of in-edges for each vertex. 
We also make similar modifications to the algorithm in Lemma 3.6 in \cite{DBLP:conf/stoc/SawlaniW20}, as we did to the algorithm of Kopelowitz et al.\ ~\cite{kopelowitz2013orienting} in Section \ref{sct:SWKop}, and show that this algorithm can be used to maintain implicit approximate out-orientations in dense graphs with polylogarithmic in $n$ update time by using the techniques of Section \ref{sec:Frac}.

A key insight in \cite{DBLP:conf/stoc/SawlaniW20} is to allow $\eta$ to scale with an estimate of the maximum density. 
Combined with the following guarantee for locally $\eta$-stable orientations and Remark \ref{rmk:chainlength}, this reduces the length of maximal tight chains to $O(\gamma)$, which means that the update times become independent of $\alpha(G)$. 
\begin{theorem}[\cite{DBLP:conf/stoc/SawlaniW20} Theorem 3.1 and Corollary 3.2] 
\label{thm:SW-EtaValid}
If $O$ is a locally $\eta$-stable and $\alpha'$-bounded fractional out-degree orientation of a graph $G$, then for $\rho_{est}\leq \alpha' = \max_{v} s(v) \leq 2\rho_{est}$ it holds that:
\[
\left(1-4\sqrt{\eta \log{n}/\rho_{est}}\right) \alpha' \leq \rho(G) \leq \alpha(G)
\]
\end{theorem}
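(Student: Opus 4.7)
The plan is to prove the nontrivial lower bound $(1-4\sqrt{\eta\log n/\rho_{est}})\alpha' \leq \rho(G)$ by contradiction via an expansion argument analogous to Lemma~\ref{lma:Kop-EtaValid}; the upper bound $\rho(G)\leq \alpha(G)$ is immediate from the definitions of density and arboricity. Set $\varepsilon_0 := 4\sqrt{\eta\log n/\rho_{est}}$ and suppose, for contradiction, that $\rho(G)<(1-\varepsilon_0)\alpha'$. Let $v_0$ attain $s(v_0)=\alpha'$ and define iteratively $V_0=\{v_0\}$ and $V_i = V_{i-1}\cup\{w : \exists\, u\in V_{i-1} \text{ with } X_{uw}^{u}>0\}$, i.e.\ the set of vertices reachable from $v_0$ by a directed out-path of length at most $i$.

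The first step is to lower-bound the loads inside $V_i$. Iterating $\eta$-validity along an out-path of length $i$ starting at $v_0$ gives $s(u)\geq\alpha'-i\eta$ for every $u\in V_i$. Choose $i^{*}:=\lfloor \varepsilon_0\alpha'/(2\eta)\rfloor$; then for $i\leq i^{*}$ we have $s(u)\geq (1-\varepsilon_0/2)\alpha'$. The second step counts edges: every edge $e=uw$ with $u\in V_{i-1}$ contributes $X_e^u$ to $\sum_{u\in V_{i-1}} s(u)$, and if $X_e^u>0$ then $w\in V_i$ by construction, so $e$ lies inside $G[V_i]$; since each edge of $G[V_i]$ contributes at most $1$ to the sum, we obtain
\begin{equation*}
|E(G[V_i])| \;\geq\; \sum_{u\in V_{i-1}} s(u) \;\geq\; |V_{i-1}|(1-\varepsilon_0/2)\alpha'.
\end{equation*}
Combining with the contradiction hypothesis $|E(G[V_i])|/|V_i|\leq \rho(G)<(1-\varepsilon_0)\alpha'$ yields the recurrence
\begin{equation*}
|V_i| \;>\; |V_{i-1}|\cdot\frac{1-\varepsilon_0/2}{1-\varepsilon_0}\;\geq\;|V_{i-1}|(1+\varepsilon_0/2).
\end{equation*}

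Iterating gives $|V_{i^{*}}|>(1+\varepsilon_0/2)^{i^{*}}$, and the elementary bound $\log(1+x)\geq x/2$ on $(0,1]$ implies $\log|V_{i^{*}}| > i^{*}\varepsilon_0/4$. Plugging in $i^{*}\geq \varepsilon_0\alpha'/(2\eta)-1$, $\varepsilon_0^2 = 16\eta\log n/\rho_{est}$, and $\alpha'\geq\rho_{est}$, one computes $i^{*}\varepsilon_0/4 \geq \varepsilon_0^2\alpha'/(8\eta) - \varepsilon_0/4 \geq 2\log n - \varepsilon_0/4 > \log n$, so $|V_{i^{*}}|>n$, contradicting $V_{i^{*}}\subseteq V(G)$. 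The main obstacle is the tight calibration of constants: between the approximation $\log(1+\varepsilon_0/2)\geq \varepsilon_0/4$, the floor in the definition of $i^{*}$, and the slack introduced by replacing $\alpha'$ by $\rho_{est}$, one loses small constant factors at every step; however, the squared dependence of the exponent on $\varepsilon_0$ provides a factor-$2$ margin ($2\log n$ versus the $\log n$ needed), which absorbs all these losses, and this is precisely why the constant $4$ inside the square root in the theorem suffices.
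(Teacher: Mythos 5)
Your proof is correct, and the expansion argument you give (iterate $\eta$-validity along out-paths from a maximum-load vertex, count edges into $G[V_i]$ via the loads on $V_{i-1}$, and derive geometric growth of $|V_i|$ until it exceeds $n$) is essentially the same approach used by Sawlani and Wang and, in a cruder form with fixed $\varepsilon$, in the paper's own proof of Lemma~\ref{lma:Kop-EtaValid}; the paper itself only cites this statement rather than reproving it. The only (harmless) implicit assumptions worth flagging are that one may take $\varepsilon_0<1$ and $\alpha'>0$, since otherwise the claimed lower bound is vacuous.
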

However, in order to use this guarantee, one needs an estimate of the maximum density. Sawlani and Wang overcome this by maintaining $O(\log{n})$ copies of $G$ - each with a different estimate of $\rho$. Then there will always be a copy with the guarantees of Theorem \ref{thm:SW-Main}, provided of course that the algorithm is fully updated.

Furthermore, since $\eta$ now scales with $\rho$, we cannot afford to look at all out-neighbours, as we do during insertions in Theorem \ref{thm:Kop-main}. 
In order to overcome this, Sawlani and Wang lazily update different parts of the out neighbourhood each time a vertex' load is changed. Since a vertex's load now can be increased multiple times, before a tight edge becomes invalid, one is always able to identify an edge as tight, before it becomes invalid. We have the following Theorem:
\begin{theorem}[Implicit in \cite{DBLP:conf/stoc/SawlaniW20}]
\label{thm:SW-Main}
Given implicit $(|L|,q(n))$ access to an orientation with $\max_{v} s(v) \leq \rho_{max}$ where $\rho_{max}$ is an upper bound on the maximum density, there exists an algorithm that can insert and delete edges from the orientation without creating any new $ \left(\frac{\eps^2\cdot{}\rho_{max}}{16\log{n}}\right)$-invalid edges.
The algorithm has worst-case insertion and deletion time $O(\eps^{-6}\log^3{(n)}(1+\log(n)|L|+q(n)))$
\end{theorem}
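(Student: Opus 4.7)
The plan is to adapt the Kopelowitz-style procedure underlying Theorem \ref{thm:Kop-main} by rescaling the stability parameter to $\eta = \frac{\eps^2 \rho_{max}}{16 \log n}$. Since $\max_v s(v) \leq \rho_{max}$, Remark \ref{rmk:chainlength} bounds the length of any maximal $\eta$-tight chain by $2\rho_{max}/\eta = 32\eps^{-2}\log n$, and Lemma \ref{lma:chainflip} then guarantees that inserting (resp.\ deleting) an edge and reorienting such a chain starting at (resp.\ ending at) the appropriate endpoint introduces no new $\eta$-invalid edges. This settles the correctness statement.

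For deletions I would follow the implementation behind Theorem \ref{thm:Kop-main} essentially verbatim: each vertex keeps a max-heap over its in-neighbours keyed by their recorded loads, a tight in-neighbour is extracted in $O(\log n)$ time per step of the chain-walk, and one further pays $O(|L|\log n + q(n))$ per step to process the list of neighbourhood changes returned by the implicit-access oracle and $O(\log n)$ to read and update the fractional load of the reoriented edge. Multiplying by the chain length $O(\eps^{-2}\log n)$ yields a bound well within the claimed budget.

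The main obstacle is the insertion procedure: the naive scan of $OutNbrs_w$ used inside Theorem \ref{thm:Kop-main} now costs $\Theta(\rho_{max})$ per chain-step, which is no longer polylogarithmic in $n$. Following Sawlani \& Wang, I would replace this scan by a hierarchical lazy-grouping of out-neighbours: at each vertex $w$ partition $OutNbrs_w$ into $O(\eps^{-2}\log n)$ load-classes of width $\Theta(\eta)$, and further bucket these into $O(\log n)$ freshness levels, where level $i$ is refreshed only once $\Theta(2^i)$ load changes have been applied to $w$. The key exploit is that, because the rescaled $\eta$ is much larger than a single load increment, a class can remain dormant for $\Omega(\eta)$ load changes before it can possibly contain a tight out-edge, so the search for a tight out-neighbour only ever has to touch a polylog-sized portion of the data structure while still being guaranteed to find a tight edge whenever one exists. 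Each refresh of a level triggers only a bounded number of implicit-access queries and load lookups, each at the per-query cost $O(|L|\log n + q(n))$.

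Combining the chain length $O(\eps^{-2}\log n)$ with the per-step cost of this lazy scheme, which contributes an additional $O(\eps^{-4}\log^2 n)$ factor from the number and size of refreshes carrying the $(1 + |L|\log n + q(n))$ implicit-access charge, yields the stated $O(\eps^{-6}\log^3(n)(1+\log(n)|L|+q(n)))$ update time. The worst-case (rather than amortised) version of this bound is recovered by the usual scheduling trick of spreading each level-$i$ refresh evenly over the $2^i$ load changes it is allowed to charge, so no single update pays more than a polylog share of any refresh. Correctness of both operations follows in one line from Lemma \ref{lma:chainflip} applied with the rescaled $\eta$, and the $\eta$-tight-chain length bound from Remark \ref{rmk:chainlength} guarantees that the chain-walk actually terminates within the claimed budget.
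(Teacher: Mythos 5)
Your high-level plan is sound and matches the paper's: rescale the stability parameter to $\eta = \Theta(\eps^2 \rho_{max}/\log n)$, use Remark~\ref{rmk:chainlength} to bound the chain length by $2\rho_{max}/\eta = O(\eps^{-2}\log n)$, invoke Lemma~\ref{lma:chainflip} for correctness, reuse the max-heap machinery from Theorem~\ref{thm:Kop-main} for deletions, and recognise that the $\Theta(\rho_{max})$-time scan of $OutNbrs_w$ during insertions is the real obstacle. Where you diverge is in the replacement for that scan. You propose a fairly elaborate data structure --- a partition of $OutNbrs_w$ into $O(\eps^{-2}\log n)$ load-classes of width $\Theta(\eta)$, cross-bucketed into $O(\log n)$ exponentially spaced freshness levels, followed by a deamortisation step to turn the amortised refresh costs into worst-case ones. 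The paper does something considerably simpler, following Sawlani \& Wang directly: store $OutNbrs_w$ in a balanced BST and, on each call to the tight-out-neighbour search and on each increment/decrement of $s(w)$, cyclically scan the ``next'' $O(4s(w)/\eta) = O(\gamma)$ out-neighbours in round-robin fashion, updating the corresponding in-neighbour heap keys lazily. This immediately gives worst-case per-step bounds with no scheduling trick, and its correctness --- that a tight out-edge is always detected before the $\eta/2$ margin between ``tight'' and ``invalid'' is exhausted, even though any given out-neighbour is visited only every $\Theta(\eta)$ load changes --- is precisely Lemma~3.7 of \cite{DBLP:conf/stoc/SawlaniW20}, which the paper cites rather than reproves.

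Two genuine gaps remain in your write-up. First, you assert that your load-class/freshness scheme is ``guaranteed to find a tight edge whenever one exists,'' but you do not argue it, and it is not obvious: the classes are keyed on \emph{recorded} values of $s(v)$, which drift away from the true loads between refreshes, while the tightness threshold $s(w)-\eta/2$ itself moves with $s(w)$; you would need an explicit two-sided drift argument analogous to the round-robin one, and that argument is the whole point of the lemma you are trying to prove. Second, the algorithm is run on $G^{\gamma}$ with $\gamma = \Theta(\eps^{-2}\log n)$, so each edge update triggers $\gamma$ copy-insertions (or copy-deletions); this factor never appears explicitly in your accounting, and the final bound you state only matches the claim because your ``additional $O(\eps^{-4}\log^2 n)$ factor'' silently absorbs it. In the paper's accounting the per-copy cost is $O(\gamma(\gamma + |L|\log n + q(n)))$ and multiplying by the $\gamma$ copies yields $O(\gamma^{2}(\gamma + |L|\log n + q(n))) = O(\eps^{-6}\log^{3} n\,(1 + |L|\log n + q(n)))$ cleanly.
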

Before we give a proof, we note that by using Theorem~\ref{thm:SW-Main} for updates and Theorem~\ref{thm:schedule} to schedule the updates in the proof of Theorem \ref{thm:RefinementKop}, we obtain:~
\begin{theorem}
\label{thm:RefinementSW}
Suppose $\gamma = 128 \log{n}/\eps^2$ and that $1>\delta>\gamma^{-1}+\mu> 2\gamma^{-1}>0$, $\eps > 0$. 
Then, there exists a dynamic algorithm that implicitly maintains $(\gamma, (1+\eps)\alpha)$-orientation of a dynamic graph $G$ with arboricity $\alpha$ as well as a $(\delta,\mu)$-refinement $H$ of $G$ wrt.\ this orientation such that $H$ is a forest. 
Insertion and deletion take amortised $O(\eps^{-6}\log^4{n})$ time.
\end{theorem}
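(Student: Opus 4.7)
The plan is to follow the proof of Theorem \ref{thm:RefinementKop} verbatim, but substitute Theorem \ref{thm:SW-Main} for Theorem \ref{thm:Kop-main} as the phase-1 edge-insertion/deletion subroutine, and wrap everything inside the scheduler from Theorem \ref{thm:schedule} so that no a priori bound on $\alpha$ is required. Concretely, I would run the scheduler on $O(\log n)$ copies of $G^{\gamma}$, each with its own density estimate $\rho_{est}$. On every copy, the phase-1 routine is an instance of Theorem \ref{thm:SW-Main} with parameter $\eta = \eps^{2}\rho_{max}/(16\log n)$ where $\rho_{max} = 4\rho_{est}$ is the density upper bound the scheduler guarantees. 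The phase-2 refinement maintenance (the stacks $Q$ and $S$ and the cycle-inversion along edges of $H$) is identical to Section \ref{sec:Frac}, and Invariants \ref{inv:valid} and \ref{inv:forest} survive for the same reason as before: Theorem \ref{thm:SW-Main} promises that phase 1 creates no new $\eta$-invalid edges, and cycle-inversion leaves every $s(v)$ unchanged, so Observation \ref{obs:refinement} and Observation \ref{obs:validity} still apply. Queries are answered from the active copy.

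For the quality of the orientation, I would apply Theorem \ref{thm:SW-EtaValid} to the active copy. With $\gamma=128\log n/\eps^{2}$ and $\eta$ as above, a direct computation gives $4\sqrt{\eta\log n/\rho_{est}}=O(\eps)$, so on the active copy the locally $\eta$-stable $(1,\alpha')$-orientation of $G^{\gamma}$ satisfies $\alpha' \le (1+O(\eps))\alpha(G^{\gamma}) = (1+O(\eps))\gamma\alpha(G)$; dividing by $\gamma$ and rescaling $\eps$ by a constant yields the claimed $(\gamma,(1+\eps)\alpha)$-orientation. The refinement invariants carried from Section \ref{sec:Frac} give the $(\delta,\mu)$-refinement $H$ as a forest, and an edge's fractional orientation is readable in $O(\log n)$ time exactly as in Theorem \ref{thm:RefinementKop}.

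The time analysis combines the two amortized budgets. The scheduler of Theorem \ref{thm:schedule} costs $O(\log^{4}n/\eps^{6})$ amortized per update to $G$ and already charges for all internal SW copy-operations across the $O(\log n)$ copies; since each SW copy-operation costs $O(\eps^{-6}\log^{3}n)$ by Theorem \ref{thm:SW-Main} (with $|L|=q(n)=0$ because cycle inversion does not alter in/out neighbour sets), the amortized number of SW copy-operations triggered per $G$-update is $O(\log n)$. By Remark \ref{rmk:chainlength}, a single SW copy-operation changes the loads of at most $O(\max s(v)/\eta) = O(\gamma) = O(\log n/\eps^{2})$ edges, and each load-change costs $O(\log n)$ to propagate into the top-tree refinement via Lemma \ref{lma:toptrees} and Observation \ref{obs:accEdge}. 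Hence the refinement overhead amortizes to $O(\log n \cdot \gamma \cdot \log n) = O(\log^{3}n/\eps^{2})$ per $G$-update, which is subsumed by the scheduler's $O(\log^{4}n/\eps^{6})$.

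The main obstacle I anticipate is the interplay between the scheduler and the refinement: the scheduler can change the active copy at any moment, so the refinement must be valid on whichever copy happens to be active. I would resolve this by running the refinement maintenance on every copy, in lockstep with whatever SW work the scheduler performs on that copy; because the scheduler's amortized bound already covers SW work across all copies, this does not inflate the amortized complexity. A secondary subtlety is verifying that the chain-length bound in Remark \ref{rmk:chainlength} still matches $\gamma$ under the SW scaling of $\eta$, which is why the specific choice $\gamma = 128\log n/\eps^{2}$ is needed: it makes the $O(\gamma^{2}\log n)$ refinement overhead fit underneath the scheduler's budget.
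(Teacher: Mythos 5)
Your proposal matches the paper's proof essentially verbatim: the paper likewise obtains this theorem by substituting Theorem~\ref{thm:SW-Main} for Theorem~\ref{thm:Kop-main} in the proof of Theorem~\ref{thm:RefinementKop}, wrapping the construction in the scheduler of Theorem~\ref{thm:schedule}, running the refinement maintenance of Section~\ref{sec:Frac} in lockstep on each of the $O(\log n)$ copies, and answering queries from the active copy. The only nitpick is that the scheduler in Appendix~\ref{apx:schedule} actually sets $\rho_{max}=2\rho_{est}$ rather than your $4\rho_{est}$, but this changes the error term $4\sqrt{\eta\log n/\rho_{est}}$ only by a constant factor and therefore does not affect the $(1+O(\eps))$ conclusion after rescaling $\eps$.
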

In each copy, we can round $H$ and $G-H$ as in the proof of Theorem \ref{OutOrientationMain} in order to obtain an implicit out-orientation with update times independent of $\alpha$.

\subparagraph{Proof of Theorem \ref{thm:SW-Main}}
For the algorithm proving Theorem \ref{thm:SW-Main}, we store the following global data structure.
\begin{itemize}
    \item $Edges:$ Balanced binary search tree sorted by edge id. Initially empty.
\end{itemize}
Each vertex $u$ stores the following data structures:
\begin{itemize}
    \item $s(u)$, initialized to $0$.
    \item $OutNbrs_{u}$: Balanced binary search tree containing out-neighbours of $u$ sorted by vertex id. Initially empty
    \item $InNbrs_{u}$: Max heap containing in-neighbours indexed using $s_{v}$. Initially empty.
\end{itemize}
We update them using the operations in algorithms \ref{alg:basic-ops2}, \ref{alg:implicit-access2} and \ref{alg:insert-delete2}.
\begin{algorithm} 
    \caption{Basic operations}
    \label{alg:basic-ops2}
    \begin{algorithmic}
            \Function{add}{$e,u \rightarrow v$}
                \If{$e\in Edges$}
                    \State Update $|B_{e}^{u}| \leftarrow |B_{e}^{u}| + 1$
                    \State Update $|B_{e}| \leftarrow |B_{e}| + 1$
                \Else 
                    \State add $e$ to $Edges$ with keys 
                    \State $|B_{e}^{u}| = 1$, $|B_{e}^{v}| = 0$,    $|B_{e}| = 1$.
                \EndIf
                \If{$u \notin InNbrs_{v}$}
                    \State add $u$ to $InNbrs_{v}$ with key $s(u)$
                \EndIf
                \If{$v \notin OutNbrs_{u}$}
                    \State add $v$ to $OutNbrs_{u}$
                \EndIf
            \EndFunction
            \\
            \Function{remove}{$e,u \rightarrow v$}
                \If{$|B_{e}| = 1$}
                    \State remove $e$ from $Edges$
                \EndIf
                \If{$|B_{e}^{u}| = 1$}
                    \State Remove $u$ from $InNbrs_{v}$
                    \State Remove $v$ from $OutNbrs_{u}$
                \EndIf
            \EndFunction
            \\
            \Function{TightOutNbr}{$u$}
                \State $L \leftarrow$ \Call{NewNbrs}{$u$}
                \For{$w \in L$}
                    \If{$w\in OutNbrs_{u}$} \Comment{Note we always update $u$'s neighbours beforehand.}
                        \If{$s(w)-s(u) \leq -\eta/2$}
                            \State \Call{return}{} w
                        \EndIf
                    \EndIf
                \EndFor
                \For{$w \in \text{next} \frac{4s(v)}{\eta} OutNbrs_{u}$}
                    \If{$s(w)-s(u) \leq -\eta/2$}
                        \State \Call{return}{} w
                    \EndIf
                \EndFor
                \State \Call{return}{} null
            \EndFunction
            \\
            \Function{TightInNbr}{$u$}
                \State $w \leftarrow max(InNbrs_{u})$
                \If{$s(u)-s(w) \leq -\eta/2$}
                    \State \Call{return}{} w
                \EndIf
                \State \Call{return}{} null
            \EndFunction
            \\
            \Function{increment}{$u$}
            \State $s(u) \leftarrow s(u)+1$
            \For{$v \in \text{next} \frac{4s(v)}{\eta} OutNbrs_{u}$}
                \State Update $s_{u}\leftarrow s(u)$ in $InNbrs_{v}$
            \EndFor
            \EndFunction
            \\
            \Function{decrement}{$u$}
            \State $s(u) \leftarrow s(u)-1$
            \For{$v \in \text{next} \frac{4s(v)}{\eta} OutNbrs_{u}$}
                \State Update $s_{u} \leftarrow s(u)$ in $InNbrs_{v}$
            \EndFor
            \EndFunction
    \end{algorithmic}
\end{algorithm}
\begin{algorithm}
    \caption{Accessing the implicit orientation}
    \label{alg:implicit-access2}
        \begin{algorithmic}
            \Function{access}{$e,uv$}
                \State Update $|B_{e}^{u}|,|B_{e}^{v}| \leftarrow$ \Call{AccessLoad}{$e$} in $Edges$
            \EndFunction
            \\
            \Function{change}{$e,uv$}
                \State $x \leftarrow |B_{e}^{u}|$
                \State $y \leftarrow |B_{e}^{v}|$
                \State \Call{UpdateLoad}{$u,v,x,y$}
            \EndFunction
            \\
            \Function{reorient}{$e,u\rightarrow v$}
                \State \Call{access}{$e,uv$}
                \State \Call{remove}{$e,u\rightarrow v$}
                \State \Call{add}{$e,v\rightarrow u$}
                \State \Call{change}{$e,uv$}
            \EndFunction
            \\
            \Function{UpdateNbrs}{$u$}
                \State $L \leftarrow$ \Call{NewNbrs}{$u$}
                \For{$uv \in L$}
                    \State \Call{access}{$uv$}
                    \For{$i=1$ to 2}
                        \If{$B_{uv}^{u} = 0$}
                            \State Remove $u$ from $InNbrs_{v}$
                            \State Remove $v$ from $OutNbrs_{u}$
                        \Else 
                            \If{$u \notin InNbrs_{v}$}
                                \State add $u$ to $InNbrs_{v}$ with key $s_u \leftarrow s(u)$
                            \EndIf
                            \If{$v \notin OutNbrs_{u}$}
                                \State add $v$ to $OutNbrs_{u}$
                            \EndIf
                        \EndIf
                        \State $u,v \leftarrow v,u$
                    \EndFor
                \EndFor
            \EndFunction
        \end{algorithmic}
\end{algorithm}
\begin{algorithm}
    \caption{Inserting procedure: }
    \label{alg:insert-delete2}
        \begin{algorithmic}
            \Function{InsertCopy}{$e,uv$}
                \If{$s(u)\leq s(v)$}
                    \State \Call{add}{$e,u\rightarrow v$}
                    \State $w \leftarrow u$
                \Else 
                    \State \Call{add}{$e,v\rightarrow u$}
                    \State $w \leftarrow v$
                \EndIf
                \State \Call{UpdateNbrs}{$w$}
                \While{\Call{TightOutNbrs}{$w$}$\neq$ null}
                    \State $w' \leftarrow$ \Call{TightOutNbrs}{$w$}
                    \State \Call{reorient}{$ww',w\rightarrow w'$}
                    \State $w \leftarrow w'$
                    \State \Call{UpdateNbrs}{$w$}
                \EndWhile
                \State \Call{increment}{$w$}
            \EndFunction
            \\
            \Function{GammaInsert}{$e,uv$}
                \For{1 to $\gamma$}
                    \State \Call{InsertCopy}{$e,uv$}
                \EndFor
            \EndFunction
            \\
            \Function{DeleteCopy}{$e,u\rightarrow v$}
                \State \Call{remove}{$e,u\rightarrow v$}
                \State $w \leftarrow u$
                \State \Call{UpdateNbrs}{$w$}
                \While{\Call{TightInNbrs}{$w$}$\neq$ null}
                    \State $w' \leftarrow$ \Call{TightInNbrs}{$w$}
                    \State \Call{reorient}{$w'w,w'\rightarrow w$}
                    \State \Call{UpdateNbrs}{$w'$}
                \EndWhile
                \State \Call{decrement}{$w'$}
            \EndFunction
            \\
            \Function{GammaDelete}{$e,uv$}
                \For{1 to $\gamma$}
                    \If{$|B_{e}^{u}| >0 $}
                        \State \Call{DeleteCopy}{$e,u\rightarrow v$}
                    \Else 
                        \State \Call{DeleteCopy}{$e,v\rightarrow u$}
                    \EndIf
                \EndFor
            \EndFunction
        \end{algorithmic}
\end{algorithm}
Assuming we know $\rho_{max}$, $n$ and $\eps$, we set $\eta = 2\frac{\rho_{max}}{\gamma}$ and $\gamma = 128\log{(n)}/\eps^2$.

\subparagraph{Correctness} 
Correctness follows from arguments symmetric to those given in Lemma 3.7 in \cite{DBLP:conf/stoc/SawlaniW20}. We have to argue that the orientation at all times is locally $\eta$-stable. 
In order to do so, we observe that to break the local stability, we have to, at some point, increment the load of a vertex with a critically tight in-neighbour or decrement the load of a vertex with a critically tight out-neighbour.
We always check the edges to neighbours returned from the list, and the remaining neighbours are updated sufficiently often (see Lemma 3.7 in \cite{DBLP:conf/stoc/SawlaniW20}), and so this never happens.

\subparagraph{Analysis} 
We can insert a copy of an edge in time $O(\gamma(\gamma+|L|\log{n}))$. Indeed, by Remark \ref{rmk:chainlength} a maximal chain has length no more than $2\rho_{max}/\eta = O(\gamma)$. 
At each vertex in the chain we spend $O(\gamma)$ time checking for tight out-neighbours and $O(|L|\log{n})$ time retrieving and checking the lists. Similarly for delete. 
Hence, we arrive at Theorem \ref{thm:SW-Main}. 

\subparagraph{The scheduling algorithm}
Here we describe the ideas from \cite{DBLP:conf/stoc/SawlaniW20} needed for the algorithm in Theorem \ref{thm:schedule}:

We will maintain $O(\log n)$ copies of $G^{\gamma}$ some of which are only partially updated. 
In each copy, we maintain a different estimate $\rho_{est}$ of the maximum density. Based on this estimate we set $\rho^{max} = 2\rho_{est}$. 
We set $\gamma = 128\log{(n)}/\eps^{2}$, and in the $i$'th copy, we set $\rho_{est} = 2^{i-2}\gamma$. 
In copy $i$, we initialize the data structures from Theorem \ref{thm:SW-Main} and set $\eta_{i} = 2\frac{\rho^{max}_{i}}{\gamma}$. 
Furthermore, for each copy $G^{\gamma}_i$, we initialize an empty, sorted list of edges $pending_i$ using two BST (sorted by each endpoint of the edge).
Finally, we maintain a counter $active$, indicating that $G^{\gamma}_{active}$ is the currently active copy. Whenever an edge is inserted or deleted in $G$, we update as follows.

\subparagraph{Insertion:} To insert a copy of an edge $uv$, we insert it into all copies $i$ for which $\rho^{max}_{i}$ is greater than $\rho^{max}_{active}$. 
These insertions all take $\poly(\log{n},\eps^{-1})$ time by Theorem \ref{thm:SW-Main}. If the copy with $i = active +1$ satisfies $\hat{\rho}_{active+1} = \max_{v} s_{active+1}(v) \geq \rho^{max}_{active}$, we make $G^{\gamma}_{active+1}$ the active copy and increment $active$. 
Otherwise, we also insert $uv$ in the active copy.

For the remaining copies of $G^{\gamma}$, we cannot necessarily afford to insert the edge. We need a bound on the length of the maximal tight chains to achieve the update guarantees of Theorem \ref{thm:SW-Main}. By Remark \ref{rmk:chainlength} this is upper bounded by $\frac{2\cdot{}\max_{w} s(w)}{\eta}$ for a chain beginning or ending at $w$. 
Therefore, we only insert $uv$ in copy $i$ if at least one of $u$ or $v$ has load $s'$ strictly smaller than $\rho^{max}_{i}$.
We can always afford to do this, since this only increases the load of a vertex with load at most $s'$. 
Otherwise, we add the edge to $pending_i$. 

\subparagraph{Deletion:} To delete a copy of an edge $uv$, we delete it in all copies $i$ for which $\rho^{max}$ is greater than $\hat{\rho} = \max_{v\in G^{\gamma}_{active}} s(v)$. 
If the active copy satisfies $(1-\eps)\hat{\rho} < \rho^{est}_{active}$, we make $G_{active-1}$ the active copy. 
These deletions all take $\poly(\log{n},\eps^{-1})$ time by Theorem \ref{thm:SW-Main}.

For the remaining copies of $G^{\gamma}$, we do as follows:
If $uv \in pending_{i}$, we just remove it. 
Otherwise, delete a copy of $uv$ in $G^{\gamma}_i$, and locate the unique vertex $w$ whose loads was decremented. 
Now if $w$ is incident to an edge in $pending_{i}$, we insert one such edge into the copy and remove the corresponding edge from $pending_{i}$. We can continue doing this until the load of $w$ again becomes too high.

\subparagraph{Analysis:} We repeat the arguments of Sawlani and Wang \cite{DBLP:conf/stoc/SawlaniW20}.
Observe first that $ \rho_{est}^{active} <  \hat{\rho} \leq 2\rho_{est}^{active}$. Indeed, after each insertion, we check whether this is the case and update accordingly. Similarly for delete.

Secondly, $pending_{active}$ is always empty. Indeed, if $uv$ is put in $pending_{i}$ at some point during the algorithm, then at this point in time $s_i(v),s_i(u) = \rho^{max}_{i}$.
Hence, $G^{\gamma}_{i}$ can only become active if the load of either $u$ or $v$ is decreased, meaning that all edges incident to this vertex in $pending_{i}$ has been inserted. In particular, $uv$ is no longer in $pending_{i}$. Since we only insert an edge from $pending_i$ once, a copy of an edge is only inserted $O(\log{n})$-times in total. 
Sawlani and Wang state that we only have to remove one incident edge from $pending_{i}$ per decrement, which makes the algorithm a worst-case algorithm (see \cite{DBLP:conf/stoc/SawlaniW20} Theorem 1.1).

Now, we are ready to show Theorem \ref{thm:schedule}. We restate it for convenience:
\begin{theoremS}[Theorem \ref{thm:schedule}, implicit in \cite{DBLP:conf/stoc/SawlaniW20} as Theorem 1.1]
There exists a fully dynamic algorithm for scheduling updates that at all times maintains a pointer to a fully-updated copy with estimate $\rho_{est}$ where $(1-\eps)\rho_{est}/2\leq \alpha(G)<4\rho_{est}$. 
Furthermore, the updates are scheduled such that a copy $G'$ with estimate $\rho'$ satisfies $\alpha(G')\leq 4\rho'$.
The algorithm has amortised update times $O(\log^{4}(n)/\eps^{6})$.
\end{theoremS}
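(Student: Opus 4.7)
The plan is to realise the $O(\log n)$-copy scheme described in the paragraphs preceding the statement and then to verify the three things the theorem asserts: (a) at every moment there is a fully updated copy whose estimate sandwiches $\alpha(G)$ as claimed, (b) every copy (active or not) never exceeds its own density cap, and (c) the amortised cost per update is $O(\log^4(n)/\eps^6)$. Concretely, I will maintain copies $G^{\gamma}_1,\dots,G^{\gamma}_{\Theta(\log n)}$ with geometric estimates $\rho^{est}_i=2^{i-2}\gamma$ and corresponding caps $\rho^{max}_i=2\rho^{est}_i$ and parameters $\eta_i = 2\rho^{max}_i/\gamma$, along with a global pointer $active$ and a pending list $pending_i$ for each copy, exactly as in the description. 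All real insertions into copy $i$ are routed through Theorem~\ref{thm:SW-Main} (so each such update takes $\poly(\log n,\eps^{-1})$ time, with each vertex load never exceeding $\rho^{max}_i$), which immediately gives property (b).

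For (a), the main invariant to prove is that $pending_{active}$ is always empty, so the active copy is really fully updated. I will show this by induction on the update sequence: an edge $uv$ is only inserted into $pending_i$ at a moment when $s_i(u),s_i(v)\geq\rho^{max}_i$, and the pointer $active$ can only advance into copy $i$ after $\hat\rho$ has grown past $\rho^{max}_{i-1}$; but before that happens, both $s_i(u)$ and $s_i(v)$ can only have decreased by deletions that, by the deletion rule, drain $pending_i$ of incident edges one at a time whenever a load drop occurs. Combining this with the threshold checks for moving $active$ up or down gives $\rho^{est}_{active}\le\hat\rho\le 2\rho^{est}_{active}$; together with Theorem~\ref{thm:SW-EtaValid}, which (for the chosen $\eta$ and $\gamma=128\log n/\eps^2$) bounds $\rho(G)\ge(1-\eps)\hat\rho$, this yields the claimed sandwich $(1-\eps)\rho^{est}_{active}/2\le\alpha(G)<4\rho^{est}_{active}$ after accounting for the $\gamma$-blowup (since densities are measured in $G^\gamma$).

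For the amortised analysis I will charge as follows. Each real edge insertion in $G$ triggers an immediate insertion in copies with $\rho^{max}_i\ge\rho^{max}_{active}$ and possibly a placement in $pending_i$ for the smaller copies; since there are $O(\log n)$ copies and each real insertion done via Theorem~\ref{thm:SW-Main} costs $O(\eps^{-6}\log^3 n\cdot(1+|L|\log n+q(n)))=O(\eps^{-6}\log^3 n)$ (we have $|L|=q(n)=0$ for the scheduler itself), the direct cost per insertion is $O(\eps^{-6}\log^4 n)$. The extra cost comes from turning pending edges into real insertions when a deletion frees up room, but a given copy of an edge can leave $pending_i$ at most once before being deleted from $G$ again, so it is inserted in total $O(\log n)$ times across all copies over its lifetime, contributing another $O(\eps^{-6}\log^4 n)$ amortised. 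Similarly, advancing or retreating $active$ happens only after $\hat\rho$ doubles or halves, which can be amortised against the corresponding load changes.

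The step I expect to be the main obstacle is proving that $pending_{active}$ is always empty at the moment $active$ points to copy $i$. The delicate point is that the deletion rule is only allowed to process \emph{one} incident pending edge per load decrement, so I need to argue that the number of pending edges incident to any vertex $w$ in copy $i$ at the instant $active$ reaches $i$ is zero — which requires showing that the arrival of $active$ at $i$ is preceded by a net decrement sequence at every high-load vertex sufficient to drain all edges that were ever placed in $pending_i$ incident to it. I plan to formalise this with a per-vertex potential counting $|pending_i|$ restricted to each endpoint, showing that the potential can never be positive at a vertex $w$ with $s_i(w)<\rho^{max}_i$, and hence cannot block any insertion that the active copy is required to carry out.
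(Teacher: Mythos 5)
Your plan follows essentially the same route the paper takes: maintain $O(\log n)$ copies of $G^\gamma$ with geometric density estimates and per-copy pending lists, route actual insertions through Theorem~\ref{thm:SW-Main}, argue that the active copy is always fully updated, and combine the sandwich $\rho^{est}_{active}\leq\hat\rho\leq 2\rho^{est}_{active}$ with Theorem~\ref{thm:SW-EtaValid} and the relation $\rho(G)\leq\alpha(G)\leq 2\rho(G)$ to get the claimed bounds on $\alpha(G)$. The amortised-cost bookkeeping is also the same: $O(\log n)$ copies, $O(\eps^{-6}\log^3 n)$ per copy per update, and each pending edge leaves $pending_i$ at most once.

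The one place you genuinely diverge is that you propose to \emph{re-prove} the invariant that $pending_{active}$ is always empty (via a per-vertex potential on $|pending_i|$ restricted to endpoints of load below $\rho^{max}_i$), whereas the paper explicitly delegates this to Sawlani \& Wang and gives only a one-paragraph sketch. You correctly flag this as the main obstacle, which is good judgment; the paper's own sketch of it is also thin. However, two points in your sketch should be tightened before the plan becomes a proof. First, the direction matters: edges only ever enter $pending_i$ while $active>i$ (for $i\geq active$ every insertion is applied directly), so the delicate case is $active$ \emph{descending} to $i$, not ascending into $i$ after $\hat\rho$ grows -- your text conflates the two. Second, your intended per-vertex invariant (``no pending edge is incident to a vertex of load $<\rho^{max}_i$'') must account for the fact that inserting a pending edge incident to $w$ need not raise $s_i(w)$ -- the unit of load goes to the end of the tight chain, which may be a different vertex -- so the argument that drains complete cannot be a simple one-decrement/one-insertion accounting; it needs to track, per endpoint, that the loop ``insert pending edges until $w$ is again at its cap or the list is empty'' preserves the invariant. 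None of this is fatal to the plan, but as written it is not yet a proof of the invariant, and the paper sidesteps the issue by citing Sawlani \& Wang directly.

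One small correction: the slack between $\rho^{est}_{active}$ and $\alpha(G)$ in the statement comes from the factor-of-$2$ relation between $\rho$ and $\alpha$ (and the $\eps$ slack in Theorem~\ref{thm:SW-EtaValid}), not from a ``$\gamma$-blowup''; the $\gamma$ factor is already absorbed into how $\rho_{est}$ is defined on $G^\gamma$.
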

\begin{proof}
We use the algorithm from Theorem 1.1 in \cite{DBLP:conf/stoc/SawlaniW20} paraphrased above. 
This algorithm maintains $O(\log n)$ different (partial) copies of $G^{\gamma}$. 
For each copy, we maintain a (partial) copy of $G$.
Whenever this algorithm has removed all $\gamma$ copies of an edge from $pending_{i}$, we insert the edge into the corresponding copy of $G$.

Sawlani \& Wang show that the active copy is always fully updated and that $\hat{\rho} = \max_{v} s(v)$ satisfies that $\hat{\rho}/2 \leq \rho_{est}^{active} <  \hat{\rho}$.
Theorem \ref{thm:SW-EtaValid} (Corollary 3.2 in \cite{DBLP:conf/stoc/SawlaniW20}) then implies that $\rho(G)$ satisfies $(1-\eps)\hat{\rho} \leq \rho \leq \hat{\rho}$. 
Since $\rho(G) \leq \alpha(G) \leq 2\rho(G)$ when $n\geq 2$, we get the bounds for the active copy. 

For a non-active copy $G'$ with maximum density estimate $\rho'$, we need to show that $\alpha(G')\leq 4\rho'$. 
The algorithm only inserts an edge $uv$ in $G'^{\gamma}$ if one of $s(u)$ and $s(v)$ is strictly smaller than $2\rho'$.
Inserting this edge increases $s(w)$ by one for exactly one vertex $w$ at the end of a tight chain. 
Since Sawlani \& Wang reorient a tight chain incident to the vertex out of $u$ and $v$ with the lowest load, this never increases the load of any vertex above $2\rho'$. 
In particular, since $\rho(G') \leq \max_{w \in G'^{\gamma}} s(w) \leq 2\rho'$ by Theorem \ref{thm:SW-EtaValid} (Corollary 3.2 in \cite{DBLP:conf/stoc/SawlaniW20}), and since $\alpha(G') \leq 2\rho(G')$ for $n\geq 2$, we get the desired bound on the arboricity of the copy. 
\end{proof}

\section{Missing proofs from Section 3}
\label{apx:Refinements}
\begin{lemmaS}[Lemma \ref{lma:toptrees}, implicit in \cite{10.1145/1103963.1103966}] 
Let $F$ be a dynamic forest in which every edge $e = wz$ is assigned a pair of variables $X_{e}^{w}, X_{e}^{z} \in [0,1]$ s.t. $X_{e}^{w} + X_{e}^{z} = 1$. Then there exists a data structure supporting the following operations, all in $O(\log{|F|})$-time:
\begin{itemize}
    \item $link(u,v,X_{uv}^{u}, X_{uv}^{v})$: Add the edge $uv$ to $F$ and set $X_{uv}^{u}, X_{uv}^{v} = 1-X_{uv}^{u}$ as indicated.
    \item $cut(u,v)$: Remove the edge $uv$ from $F$.
    \item $connected(u,v)$: Return $\operatorname{true}$ if $u,v$ are in the same tree, and $\operatorname{false}$ otherwise.
    \item $add\_weight(u,v,x)$: For all edges $wz$ on the path $u\dots wz \dots v$ between $u$ and $v$ in $F$, set $X_{wz}^{w} = X_{wz}^{w}+x$ and $X_{wz}^{z} = X_{wz}^{z}-x$. 
    \item $min\_weight(u,v)$: Return the minimum $X_{wz}^{w}$ s.t.\ $wz$ is on the path $u\dots wz \dots v$ in $F$.
    \item $max\_weight(u,v))$: Return the maximum $X_{wz}^{w}$ s.t.\ $wz$ is on the path $u\dots wz \dots v$ in $F$.
\end{itemize}
\end{lemmaS}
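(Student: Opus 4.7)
The plan is to build the data structure on top of the top-tree framework of Alstrup, Holm, de Lichtenberg and Thorup, which already provides $link$, $cut$, and $connected$ in worst-case $O(\log |F|)$ time. The remaining work is to support the oriented path aggregates $add\_weight$, $min\_weight$ and $max\_weight$, while respecting that the label $X_e^w$ depends on which endpoint of $e$ plays the role of $w$.

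Each edge $e = wz$ is represented as a base cluster whose two boundary vertices are $w$ and $z$, holding the single value $X_e^w$ (the other is recovered as $1 - X_e^w$). For a composite path cluster $C$ with boundary vertices $a, b$, I augment $C$ with a triple $(\mathrm{mn}(C), \mathrm{mx}(C), \Delta(C))$ whose semantics is: for every edge $e = wz$ on the $a \to b$ path with $w$ visited first, the current value of $X_e^w$ lies in the interval $[\mathrm{mn}(C) + \Delta(C),\, \mathrm{mx}(C) + \Delta(C)]$, with both endpoints attained. Because $X_e^w + X_e^z = 1$, the analogous quantities when the orientation is reversed are $1 - \mathrm{mx} - \Delta$ and $1 - \mathrm{mn} - \Delta$, so no extra fields are needed to handle either traversal direction.

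I then define two primitive cluster manipulations. A $\mathit{flip}$ reverses the boundary orientation of a cluster via the rule $(\mathrm{mn}, \mathrm{mx}, \Delta) \mapsto (1 - \mathrm{mx},\, 1 - \mathrm{mn},\, -\Delta)$, which a direct calculation shows correctly inverts every forward weight. A $\mathit{combine}$ of two path clusters $C_1, C_2$ sharing a boundary $m$ and yielding a new cluster with boundary $a, b$ first flips whichever $C_i$ is oriented against $a \to b$, then pushes the pending $\Delta(C_i)$'s down one level into the base edges via the same rule (flip-aware where needed), and finally sets $\mathrm{mn} = \min(\mathrm{mn}(C_1), \mathrm{mn}(C_2))$, $\mathrm{mx} = \max(\mathrm{mx}(C_1), \mathrm{mx}(C_2))$, $\Delta = 0$. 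Each push/flip is $O(1)$, and since the top-tree maintenance touches only $O(\log |F|)$ clusters per structural update, the overhead stays within $O(\log |F|)$ per operation. The three aggregate operations are now immediate: $min\_weight$ and $max\_weight$ call $\mathit{expose}(u,v)$ — which in $O(\log|F|)$ time yields a root cluster with boundary $\{u,v\}$ — and return $\mathrm{mn} + \Delta$ or $\mathrm{mx} + \Delta$, inverted if the stored orientation points $v \to u$; $add\_weight(u,v,x)$ calls $\mathit{expose}(u,v)$ and simply adds $x$ to $\Delta$ of the resulting root cluster (with a sign flip if orientations disagree).

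The main obstacle I expect is verifying that flips, lazy offsets and combines interact consistently — concretely, that the invariant "the actual forward $X_e^w$ values along this cluster have minimum $\mathrm{mn}+\Delta$ and maximum $\mathrm{mx}+\Delta$" is preserved through every $split$ and $merge$ the top tree performs. This reduces to a small, finite case analysis over the four combinations of "child flipped vs. not" and "pending $\Delta$ zero vs. nonzero"; once those cases check out, correctness follows by structural induction on the top-tree hierarchy, and the $O(\log|F|)$ worst-case bound on every listed operation is inherited directly from the Alstrup--Holm--de Lichtenberg--Thorup framework.
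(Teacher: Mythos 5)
Your construction is essentially the paper's: both augment a top-tree path cluster with a min, a max, and a lazy additive offset, use the $X_e^w = 1 - X_e^z$ symmetry to flip a cluster's aggregates when its boundary orientation is reversed (the paper encodes this via a $head(C)$ field and two merge/split cases, you via an explicit $\mathit{flip}$ map $(\mathrm{mn},\mathrm{mx},\Delta) \mapsto (1-\mathrm{mx},1-\mathrm{mn},-\Delta)$), and implement $add\_weight$, $min\_weight$, $max\_weight$ by exposing $\{u,v\}$ and reading or adjusting the root cluster's fields. The differences (whether the offset is pre-applied to the stored min/max or kept separate, and the slightly loose ``push down into the base edges'' phrasing, which should just mean the standard one-level lazy push at split/merge) are cosmetic implementation choices, not a different argument.
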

\begin{proof}
\cite{10.1145/1103963.1103966} shows how to support the first 3 operations. 
The remaining operations also follow from \cite{10.1145/1103963.1103966}, but we give a proof for completeness:

Each path-cluster $C$ maintains the variables $min\_weight(C)$, $max\_weight(C)$, $extra(C)$ and $head(C) \in \partial C$. 
When a leaf-cluster $C$ with cluster path $\pi(C) = uv$ is created, we initialize $min\_weight(C) = max\_weight(C) = X_{uv}^{u}$, $extra(C) = 0$, $head(C) = v$. 
\begin{figure}[]%
    \centering
    \subfloat{{\includegraphics[width=8cm]{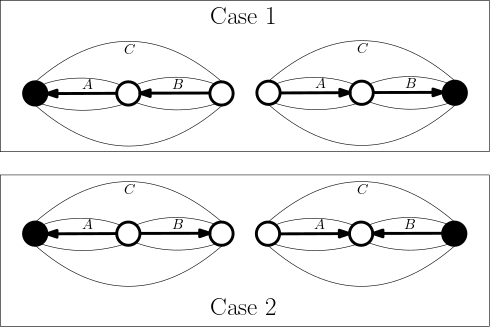} }}%
    \caption{Different choices of $head(A)$, $head(B)$ and $head(C)$. The head of $C$ is indicated by the filled vertex. The arrows point towards the heads of clusters $A$ and $B$.
    We update the information differently depending on how $head(B)$ relates to $head(C)$.}
    \label{fig:Toptree}%
\end{figure}
When a path-cluster $C = A \cup B$ with two cluster children is created, we set $head(C)$ to $head(A)$ if possible.
Otherwise, we set $head(C) = \partial C \cap \partial B$. 
If $head(C) = head(A)$ and $head(B) \in \partial A$ or if $head(C) = head(B)$ and $head(B) \notin \partial(A)$ (Case 1), we update:
\begin{align*}
    min\_weight(C) &= \min\{min\_weight(A),min\_weight(B) \} \\ 
    max\_weight(C) &= \min\{max\_weight(A),max\_weight(B) \}
\end{align*}
Otherwise (Case 2), we update:
\begin{align*}
    min\_weight(C) = \min\{min\_weight(A),1-max\_weight(B) \}\\ 
    max\_weight(C) = \min\{max\_weight(A),1-min\_weight(B) \}
\end{align*}
Finally, we set $extra(C) = 0$.

When a path-cluster $C = A \cup B$ with two cluster children is split, we set:
\begin{align*}
    extra(A) &= extra(A) + extra(C) \\
    min\_weight(A) &= min\_weight(A)+extra(C) \\ 
    max\_weight(A) &= max\_weight(A)+extra(C) 
\end{align*}
If $head(C) = head(A)$ and $head(B) \in \partial A$ or if $head(C) = head(B)$ and $head(B) \notin \partial A$ (Case 1), we set;
\begin{align*}
    extra(B) &= extra(B) + extra(C) \\
    min\_weight(B) &= min\_weight(B)+extra(C) \\ 
    max\_weight(B) &= max\_weight(B)+extra(C) 
\end{align*}
Otherwise (Case 2), we set:
\begin{align*}
    extra(B) &= extra(B) - extra(C) \\
    min\_weight(B) &= min\_weight(B)-extra(C) \\ 
    max\_weight(B) &= max\_weight(B)-extra(C) 
\end{align*}
Note that only Case 1 and Case 2 can occur due to the way we create the joined clusters.
To $add\_weight(u,v,x)$, we set $C = expose(u,v)$ and add $x$ to $min\_weight(C)$, $max\_weight(C)$ and $extra(C)$ if $head(C) = v$. Otherwise, we subtract $x$ from $min\_weight(C)$, $max\_weight(C)$ and $extra(C)$. 

To do $max\_weight(u,v)$, we again set $C = expose(u,v)$, and return $max\_weight(C)$ if $head(C) = v$ and $1-min\_weight(C)$ otherwise. $min\_weight(u,v)$ is similar.
\end{proof}

\section{Missing proofs of Section 4}
\label{apx:Forests}
\begin{lemmaS}[Lemma \ref{lma:split}, Implicit in \cite{henzinger2020explicit}]
Given black box access to an algorithm maintaining an $\alpha'$-bounded out-degree with update time $T(n)$, there exist an algorithm maintaining an $\alpha'$ pseudo-forest decomposition with update time $O(T(n))$.
\end{lemmaS}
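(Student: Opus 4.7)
The plan is to maintain, in parallel with the black-box $\alpha'$-bounded out-degree orientation, a labelling of the directed edges with labels from $\{1, 2, \ldots, \alpha'\}$ such that no vertex is the tail of two out-edges carrying the same label. The set $P_i$ of edges carrying label $i$ is then a pseudoforest: since every vertex has out-degree at most one in $P_i$, each connected component of $P_i$ contains at most as many edges as vertices and hence at most one cycle.

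Concretely, I would equip every vertex $v$ with an array $\mathrm{slot}_v[1..\alpha']$ pointing to the out-edge currently assigned to each label, together with an auxiliary stack (or doubly-linked free list) of $v$'s currently unused labels so that a free label can be popped and pushed in $O(1)$ time. Each directed edge also stores its current label. Whenever the black-box orientation algorithm performs an atomic change, the algorithm reacts in the natural way: an insertion of a new out-edge $u\to v$ pops a free label $i$ from $u$'s free list, writes the edge into $\mathrm{slot}_u[i]$, and links the edge into the data structure representing $P_i$; a deletion of an out-edge $u\to v$ with label $i$ clears $\mathrm{slot}_u[i]$, pushes $i$ back onto $u$'s free list, and unlinks the edge from $P_i$; a reorientation of $u\to v$ into $v\to u$ is handled as a deletion at $u$ followed by an insertion at $v$ reusing the same edge object.

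Since the black-box algorithm guarantees out-degree at most $\alpha'$ at the end of each atomic change, $v$'s free list is non-empty whenever a free label is requested. Each atomic orientation change therefore incurs $O(1)$ overhead for bookkeeping; the cost of maintaining the pseudoforest data structures themselves (e.g.\ top trees with one extra edge) is already charged inside $T(n)$ in the same way it is for the underlying orientation. Consequently the overall update time is $O(T(n))$, and every $P_i$ is a pseudoforest by construction, proving the lemma.

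The only subtlety I expect is handling the transient moment between the deletion half and the insertion half of a reorientation, or between two reorientations performed consecutively by the black box, during which a vertex might temporarily appear to have out-degree $\alpha'+1$ before a subsequent removal restores the invariant. This is resolved by processing reorientations in the natural order (remove first, then insert): after every atomic step the black box ends in a valid $\alpha'$-out-orientation, so the free list at the tail of each newly inserted out-edge contains the label just freed, and no cascading reassignments are ever required.
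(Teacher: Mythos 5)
Your proof is correct and rests on the same core observation as the paper's: partition the out-edges of each vertex among $\alpha'$ labels so that no vertex is the tail of two edges sharing a label, and observe that each label class is a pseudoforest because every vertex has out-degree at most one within it. The implementation differs in one respect: you use an unstructured free list of available labels per vertex, whereas the paper keeps the invariant that the out-edges of $v$ occupy the \emph{prefix} of labels $0,\dots,d^+(v)-1$ (tracked by indices $s_v$ and $l_v$). That prefix invariant forces an extra $O(1)$ edge move between pseudoforests whenever a non-maximal label is vacated; your free list dispenses with that move, a small simplification that the paper later has to account for (cf.\ Lemma~\ref{lma:pseudoMoves}). Your final paragraph correctly flags the transient-overflow issue, but the stated fix --- processing each reorientation as remove-then-insert --- is not quite enough on its own: if $v$ gains an out-edge from one reorientation before losing one in a \emph{later} reorientation emitted by the same black-box update, $v$ still has no free slot at the moment the gain is processed. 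You should instead buffer all removal halves emitted by the black box during one update and process them before any insertion half; since the black box ends with $d^+(v)\le\alpha'$ for every $v$, this guarantees a free slot at each insertion. The paper handles the same problem differently: it conceptually allocates $n$ label slots and lets indices at or above $\alpha'$ absorb the temporary overflow, which the prefix invariant compacts back below $\alpha'$ by the time the update finishes.
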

\begin{proof}
We describe the algorithm below for completeness and stress that it is almost identical to one presented in \cite{henzinger2020explicit}. The algorithm runs the black box algorithm maintaining an $\alpha'$-bounded out-degree orientation.
Then, with constant overhead, it maintains $n$ pseudoforests $P_0, P_1, \dots , P_{n-1}$, but only the first $\alpha'$ pseudoforests are non-empty.
We will conceptually maintain an array $A_v$ of bits of length $n$ for each $v$, with $A_{v}(i)$ indicating the out-degree of $v$ in $F_{i}$.
The algorithm will preserve the following invariants for every $v \in V(G)$:
\begin{invariant}
\label{inv:pseudo1}
For all $i \in \{0, \dots,d^+(v)-1 \}$, $P_i$, will contain an out-edge of $v$.
\end{invariant}
\begin{invariant}
\label{inv:pseudo2}
For all $i \in \{0, \dots,n-1 \}$, we have that $A_v(i) = 1$ iff. $N^{+}(v) \cap P_i \neq \emptyset$.
\end{invariant}
Finally, we maintain an index $s_v$ for each $v$ indicating the smallest $i$ s.t. $A_v(i) = 0$, and an index $l_v$ indicating the largest $i$ s.t. $A_v(i) = 1$. 
Note that when the above invariants are maintained, we necessarily have $s_v > l_v$.

The algorithm updates in the following way: 
Suppose an edge $e$ is inserted/deleted from the graph. 
Then $e$ is inserted into the algorithm maintaining the out-orientation. 
Then three things might happen: $A)$ the orientation algorithm might reorient the orientation of an existing edge, $B)$ a new oriented edge might be inserted into the graph and finally $C)$ an oriented edge might be deleted.

Consider the case $1)$ firstly. Suppose $u \rightarrow v$ is reoriented, so that it now has direction $v \rightarrow u$. Then we update as follows:
\begin{enumerate}
    \item delete $u\rightarrow v$ from the forest where $e$ is currently residing. Set $A_u(i) = 0$ and update $s_u = i$ accordingly. 
    \item If $s_u \geq l_u$, all invariants are still maintained for $u$, and all we have to do is update $l_u = l_u-1$.
    If not, then we will move the out-edge of $u$, $u\rightarrow w$, residing in $F_{l_{u}}$ to $F_{s_u}$ 
    Then we update $s_u = l_u$ and $l_{u} = l_{u}-1$, and now all of $u$'s invariants are restored again.
    \item Insert $v \rightarrow u$ into $F_{l_{v}}$.
Then update $A(l_{v}+1) = 1$, $l_{v} = l_{v}+1$ and $s_{v} = s_{v}+1$, again restoring all of the invariants.
\end{enumerate}
In case $B)$, where we are inserting an edge, we perform only step $3$, and in case $C)$, where an edge is deleted, we perform steps $1$ and $2$.

\subparagraph{Analysis of the algorithm}
The update procedure clearly maintains the claimed invariants.
The following Lemma establishes correctness.
\begin{claim}[Implicit in \cite{henzinger2020explicit}]
After each update $P_0, P_1, \dots , P_{d-1}$ form a  pseudoarboricity decomposition of the graph $G$.
\end{claim}
\begin{proof}
Indeed by Invariant \ref{inv:pseudo1}, $E(G) \subset \bigcup \limits_{j=0}^{\alpha'-1} E(P_j)$ so the pseudoforests cover all of the edges of $G$.
Furthermore, every $v \in P_{i}$ has out-degree at most 1, and so any cycle in $P_i$ must be directed. 
Since by Invariant \ref{inv:pseudo1}, $|E(P_C)| \leq |P_C|$ for any connected component $P_C \subset P_{i}$, we must have that if $P_{C}$ is cyclic, then equality holds.
Deleting an edge $e$ from a cycle in such a cyclic and connected component ensures that $|E(P_{C}-e)| =  |P_{C}|-1$ i.e.\ $P_{C}-e$ is a tree. Thus $P_{C}$ is a pseudotree. 
\end{proof}
\end{proof}

\end{document}